	\newcommand{\bbN}{\mathbb{N}}
\newcommand{\mcA}{\mathcal{A}}	\newcommand{\mcB}{\mathcal{B}}
\newcommand{\mcC}{\mathcal{C}}	
	\newcommand{\mcF}{\mathcal{F}}
	\newcommand{\mcL}{\mathcal{L}}
\newcommand{\mcO}{\mathcal{O}}
\newcommand{\bmW}{W}
\newcommand{\eps}{\epsilon}
\newcommand{\ceil}[1]{\lceil #1 \rceil}
\newcommand{\floor}[1]{\lfloor #1 \rfloor}
\newcommand{\cltonp}{\CL^{\NP}_{2\text{-round}}}
\renewcommand{\P}{\mathsf{P}}
\renewcommand{\epsilon}{\varepsilon}
\renewcommand{\epsilon}{\varepsilon}
\newcommand{\ignore}[1]{}
\newcommand{\Wt}{{W}}
\newcommand{\Wtg}[2]{\Wt|_{#1\gets #2}}
\newtheorem{definition}{Definition}[section]
\newtheorem{theorem}{Theorem}[section]
\newtheorem{claim}[theorem]{Claim}
\newtheorem{corollary}{Corollary}[theorem]
\newtheorem{lemma}[theorem]{Lemma}
\newtheorem{proposition}[theorem]{Proposition}
\newtheorem{remark}[theorem]{Remark}
\newtheorem{result}[theorem]{Result}
\title{Derandomizing Isolation In Catalytic Logspace}
\author{V. Arvind, Srijan Chakraborty, Samir Datta}
\begin{document}
\maketitle

\abstract 
A language is said to be in catalytic logspace if we can test membership
using a deterministic logspace machine that has an additional read/write tape
filled with arbitrary data whose contents have to be restored to their 
original value at the end of the computation. The model of \emph{catalytic
computation}  was introduced by Buhrman et al [STOC 2014].

As our first result, we obtain a catalytic logspace algorithm for computing a 
minimum weight witness to a search problem, with small weights, provided the 
algorithm is given oracle access for the corresponding weighted decision problem.
This reduction of Search to Decision is similar in spirit to the Anari-Vazirani result 
[ITCS 2020], adapted to catalytic logspace (instead of $\NC$), that applies to 
\emph{a large class} of search problems (instead of just perfect matching).
The search to decision reduction crucially uses the isolation lemma of 
Mulmuley et al [Combinatorica 1987], but sidesteps the Anari-Vazirani machinery 
[J. ACM 2020, ITCS 2020] for showing planar perfect matching in $\NC$ and 
the search to decision reduction [ITCS 2020] for perfect matching.
In particular, our reduction yields catalytic logspace algorithms for the search
versions of the following three problems: planar perfect matching, planar exact 
(or red-blue) perfect matching and weighted arborescences in general weighted 
digraphs. 


Our second set of results concern the oracle complexity class
$\cltonp$ (defined in Section~\ref{sec:oracle}). We show that $\cltonp$ 
contains $\fF$ and the complexity classes 
$\BPP,\MA$ and $\ZPP^{\NP[1]}$. While $\fF$ is shown to be in $\cltonp$ using 
the isolation lemma, the other three containments, while based on the compress-or-random 
technique, use the Nisan-Wigderson [JCSS 1994] (and Impagliazzo-Wigderson [STOC 1997])
hardness vs randomness based pseudo-random generator. These containments show that 
$\cltonp$ resembles $\ZPP^\NP$ more than $\P^\NP$, providing some weak evidence that 
$\CL$ is more like $\ZPP$ than $\P$.

For our third set of results we turn to isolation well inside catalytic classes. 
We consider the catalytic class $\C\TISP[\poly(n),\log n,\log^2 n]^{\UL}$ which uses only 
$\mcO(\log^2{n})$ catalytic space along with $\mcO(\log{n})$ work space, with $\UL$ given as an oracle, while staying
inside polynomial time. We show that this class contains reachability, and therefore $\NL$. This is a catalytic version of the result of van Melkebeek \&
Prakriya [SIAM J. Comput. 2019]. Building on their result, we also show a tradeoff
between the workspace of the unambiguous oracle machine and the catalytic space i.e. placing reachability in
$\C\TISP[\poly(n),\log{n},\log^{2-\alpha}{n}]^{\U\TISP[\poly(n),\log^{1+\alpha}n]}$ where 
$0 \leq \alpha \leq 0.5$. Finally, we extend these catalytic upper bounds to $\LogCFL$
by placing it in $\C\TISP[\poly(n),\log{n},\log^{2-\alpha}{n}]^{\U\AuxPDA\text{-}\TISP[\poly(n),\log^{1+\alpha}n]}$ 
for $\alpha \in [0,0.5]$.


\section{Introduction}
Space-bounded computation is conventionally studied using the Turing
machine model that consists of a read-only input tape and a
space-bounded work-tape, sometimes with additional resources such as
randomness or non-determinism. Catalytic computation is a relatively
new paradigm of space-bounded computation introduced about a decade
ago by Buhrman, Cleve, Koucky, Loff, Speelman \cite{BCKLS14} that
allows the Turing machine additional work-space: this is a \emph{full
tape} of storage (often exponentially larger than the work tape). The
machine can use this additional work-tape for both reads and
writes. But the important restriction is that the initial tape
contents must be restored when the machine halts. In a sense, this
additional tape catalyses the computation without any change to it
after the machine halts. Hence, this extra storage tape is called the
catalytic tape. We should mention here that this model of computation
is inspired by a seminal paper by Ben-Or and Cleve \cite{BenOr-Cleve}
in algebraic computation showing that $3$-register algebraic
straight-line programs suffice for evaluating arithmetic formulas.

The first paper \cite{BCKLS14} proved that catalytic logspace $\CL$ is
apparently more powerful than conventional logspace $\Log$. Indeed, it
even contains non-deterministic Logspace $\NL$. In fact, $\CL$ can
simulate $\TC^1$ i.e. languages recognized by threshold circuits of
logarithmic depth. On the other hand, we do not even know whether
$\CL$ is contained in polynomial time with the best known upper bound
for $\CL$ being $\ZPP$.

Catalytic logspace, $\CL$ can be augmented with nondeterminism
(essentially, an $\NL$ machine with a polynomially bounded catalytic
tape) to define the class $\CNL$.  Randomized catalytic logspace can
also be considered (essentially, languages accepted by $\BPL$ machines
with a polynomially bounded catalytic tape) to define $\CBPL$. These
classes were studied in \cite{DGJST,CLMP}. Unambiguous $\CL$
(languages accepted by unambiguous logspace machines with a
polynomially bounded catalytic tape) is considered in \cite{GJST,DGJST25}. It
turns out, as shown by Koucky, Mertz, Pyne and Sami in \cite{KMPS},
that these new classes coincide with $\CL$. These properties show
$\CL$ to be a robust complexity class.


A different thread of research in catalytic computation, that was
focused on improved catalytic space upper bounds for the $\TreeEval$,
led to a major breakthrough by Cook and Mertz \cite{CM21,CM23}. They
showed that $\TreeEval$ requires only catalytic space
$\mcO(\log{n}\log{\log{n}})$ (and $\mcO(\log n\log\log n)$ work space) in \cite{CM23}. This is an important
example of a catalytic bound that yields the best known space bound
(catalysed or otherwise) on a problem. Using this result, \cite{Will} proved that every multitape Turing machine running in time $t$ can
be simulated in space only $\mcO(\sqrt {t\log t})$.


Building on Nisan's seminal result that $\BPL \subseteq \SC^2$, Pyne
\cite{Pyne} has shown that $\BPL$ is contained in simultaneous $\log$
space, catalytic $\log^2$ space and polynomial time i.e.
$\CTISP[n^{\mcO(1)},\log{n},\log^2{n}]$. Further, he has shown a
smooth tradeoff between space and catalytic space i.e. it was shown
that $\BPL$ is contained in
$\CSPACE[\log^{1+\alpha}{n},\log^{2-\alpha}{n}]$ for every $\alpha \in
[0,0.5]$.

As a new addition to upper bound results, Agarwala and Mertz \cite{AM}
have shown that Bipartite Matching Search is in $\CL$. This is the
first problem not known to be in deterministic $\NC$ that is placed
inside $\CL$ making it a remarkable result.

The last two upper bound results, mentioned above, use the so-called
\emph{compress-or-random} framework which interprets the catalytic
tape contents as random bits -- if this assumption is correct and the
bits ``behave like'' random bits as required by the algorithm, then the
randomized algorithm succeeds and the problem is solved. On the other
hand, if the bits are not really random in a prescribed sense, then
the catalytic tape can be efficiently compressed in a manner that they
can be efficiently decompressed at the end of the algorithm. This
compression frees up space on the catalytic tape that can be
essentially used for any polynomial-space computation.


The field of Catalytic Logspace has seen numerous other contributions
in the last year or so. Without being comprehensive, here is a sampling
\cite{AAV,AFMSV,CGMPS,TP,Will}. We would like to point out two papers that are
related to our work. The first one, \cite{AAV} shows a generalisation of
\cite{AM} by proving that linear matroid intersection is in $\CL$. The other, \cite{CGMPS} initiates the study of oracle based catalytic classes by proving the
existence of an oracle $O$ such that $\CL^O=\EXP^O$ thus showing that a
relativizing proof of $\CL = \P$ cannot exist.


\subsection{Technical Overview and Consequences}\label{sub:tech}

\paragraph*{Reducing search to weighted decision} 

Mulmuley, Vazirani and Vazirani \cite{MVV} gave a randomized $\NC$ (also $\ZPP$) algorithm 
that reduces search to weighted decision with polynomially bounded weights in the context 
of Perfect Matching. The reduction works by using the isolation lemma to isolate a minimum 
weight matching that can then be extracted with multiple parallel queries to the decision 
oracle.  

This general recipe works for broader class of search problems. Indeed, the recipe can be 
applied to reduce search to weighted decision with polynomially bounded 
weights for every language with a witness verifiable in $\P$ (i.e.  for languages in $\NP$). 
In fact, in \Cref{sec:three}, we show in a more general setting that search reduces to 
weighted decision with polynomially bounded weights, where the reduction is computable in
\emph{catalytic logspace}. Thus, in a sense, we can derandomize the isolation lemma in 
these applications using catalytic logspace.




This presents some (weak) evidence that $\CL$ is closer to $\ZPP$ than $\P$.

This also enables us to place several problems in $\CL$. Important 
examples include  search for the following:
\begin{itemize}
\item Planar Perfect Matching: known to be in $\NC$ by recent breakthroughs \cite{AV18,Sank} which we do not need to invoke in our proof.  \item Planar Exact Matching: in $\P$ \cite{Yuster} but not known to be in $\NC$
 or even $\qNC$, 
\item Min-wt Arborescence in digraphs: known to be in $\NC$ \cite{Lovasz} 
\item Some other examples follow from \cite{Oki}, from various applications of Linear Pfaffian Matroid Parity.
\end{itemize}

Let $\mcL$ be a language in $\NP$. Thus, there exists a predicate $R_\mcL$ such that
a word $x \in \mcL$ iff there exists a witness $y$ satisfying that $R_\mcL(x,y)$ holds 
true for
$x \in \{0,1\}^*$  and $y \in \{0,1\}^{m(|x|)}$ where $m(\cdot)$ is a polynomial.
The decision version for $\mcL$, on input $x$ asks if there exists a witness $y$ 
such that $(x,y)$ satisfies the relation 
$R_\mcL$. The search version requires to find such a $y$. One strategy to find a
witness is to isolate one by using a weight function such that the min weight
witness becomes unique and then can be extracted using a weighted decision oracle. 
We amplify on this strategy below.

Given a $y\in \{0,1\}^{m}$, consider the set $F_y=\{e:y_e=1\}\subseteq [m]$. 
A weight function $\Wt:[m]\to \{0,1,\ldots,{\poly(n)}\}$,  naturally extends to
subsets $S \subseteq [m]$ as $\Wt(S) = \sum_{e\in S}{\Wt(e)}$.
The weighted decision oracle  takes as input $(x,\Wt,w)$ and outputs a $1$ iff
there is a witness $y$ of weight $\Wt(F_y) \leq w$.
We can find $\min_{y:R_\mcL(x,y)=1}W(F_y)$ by sequentially querying the decision 
oracle for all $w \in [0,\poly(n)]$ until the oracle returns $1$. 

Under the assumption that the minimum weight $w$ of $y$ is unique we can use the oracle
to even find this $y$ by modifying the weight assignment $\Wt$. In particular
we define the weight function $\Wtg{e}{a}$ as being identical to $\Wt$ except
that for bit $e$ the weight is set to $a$. Since we assume that $y$ is unique so
$e \in F_y$ iff $\min_{y:R_\mcL(x,y)=1}\Wtg{e}{\infty}(F_y) > w$. Notice that this
is similar to the strategy used by \cite{MVV} to isolate and extract a perfect matching,
The difference is that instead of deleting $e$ we set its weight to $\infty$.

Thus, we are left to deal with the case that the isolating assumption on the minimum
weight $y$ does not hold. In this case, we can detect a ``threshold'' element $e$
that occurs in one minimum weight witness but does not occur in another minimum weight 
witness. A threshold element indicates that the weight 
function $\Wt$ is compressible because we can recreate the weight $\Wt(e)$  knowing
only the weights $\Wt(e')$ for $e' \neq e$ and in fact from $\Wtg{e}{0}$
and $\Wtg{e}{\infty}$ simply as the the difference of the weights of the
minimum weight witnesses under the preceding two weight assignments.

In the catalytic logspace setting we consider weight assignments, one by one, 
from the catalytic storage. If we find an isolating assignment we can 
use the oracle to find the minimum weight witness. On the other hand if we find
a non-isolating weight assignment we can compress the weight function by $\mcO(\log{n})$ bits
with access to the weighted decision oracle. Thus, examining polynomially many
weight assignments, if  we are unable to find an isolating assignment among them, 
we would have saved polynomial space by the compression sketched above, and can use 
any polynomial space algorithm to find the solution. As $\mcL \in \NP$ such an algorithm 
clearly exists.

To summarise, the main result here is the following:
\begin{result} (\Cref{thm:srch})
For polynomially bounded weights, min-weight-search reduces to 
  weighted-decision for languages in $\NP$,
  under Turing reductions computable in $\CL$.
\end{result}

\paragraph*{Catalytic Logspace with an $\NP$ oracle}

It is shown in \cite{BCKLS14} that $\CL\subseteq\ZPP$. In fact, this containment relativizes.
That is, for any oracle $A$ we have $\CL^A\subseteq \ZPP^A$. In particular, we have 
$\CL^{\NP}\subseteq \ZPP^{\NP}$. In this section, we consider a variant of the class $\CL^{\NP}$ 
in which the queries are asked in a non-adaptive fashion in rounds, which we need to carefully define.

We first observe that the reduction from search to weighted decision directly implies that $\fF\in\CL^{\NP}$ where $\fF$ is the search version of $\SAT$. That is, given a CNF formula $\phi$, $\fF$ is the problem of computing a satisfying assignment for a given input formula $\phi$, if it is satisfiable. This easily follows as the weighted decision queries can be simulated using a $\NP$ oracle. It is shown in \cite{VV} that there is a polynomial-time randomized reduction from $\SAT$ to unique-$\SAT$, where unique-$\SAT$ asks whether a formula $\phi$ is in $\SAT$ or not, conditioned on $\phi$ having at most one satisfying assignment. The isolating lemma of \cite{MVV} can be used to provide a different proof that $\SAT$ is randomly reducible to unique-$\SAT$. Our result on $\fF$ is essentially a derandomization of this randomized reduction based on catalytic isolation. With a careful analysis of the above reduction, we show that the search problem $\fF$ can be solved using a $\CL$ machine with 
\emph{two parallel rounds} of polynomially many $\NP$ queries.\footnote{The definition of the $\CL$ oracle machine models are presented in Section~\ref{sec:oracle}} We name this $\CL$-oracle class $\cltonp$.

Next, we show that any randomized polynomial-time algorithm can be derandomized in $\cltonp$, i.e. $\BPP\subseteq \cltonp$. The proof of this containment gives a different way of using catalytic 
space. We now sketch the proof: This proof is based on the hardness versus randomness trade-off
introduced in Nisan and Wigderson (strengthened subsequently by Impagliazzo-Wigdereson) 
\cite{NW,IW}. They designed a pseudorandom generator (PRG) that
can stretch $\mcO(\log n)$ size random seeds to $n^{\mcO(1)}$ pseudorandom strings that can fool all
$\BPP$ algorithms, under a hardness assumption for a language in $\E$ (which denotes DTIME$[2^{\mcO(n)}]$). More precisely, the hardness assumption \cite{IW} is that there is
a language $L\in \DTIME[2^{\mcO(n)}]=\E$ such that for all but finitely many $n$, any boolean
circuit of size $2^{\eps n}$, for some $\eps<1$, will fail to accept $L^{=n}$, that is $L$ 
for length $n$ inputs. 

This idea was used by Goldreich-Zuckerman \cite{GZ} to show that $\BPP$ is contained in $\ZPP^\NP$, where the oracle $\ZPP$ machine first randomly guesses the truth-table of a hard boolean function $f$ and then the $\NP$ oracle is used to verify its hardness (with a single $\NP$ query). Having found a hard function $f$, the Nisan-Wigderson pseudorandom generator can now be constructed with $f$ and then used to derandomize any $\BPP$ computation. We adapt this approach to showing $\BPP$ is contained in $\cltonp$. We assume that the catalytic tape is given to us as a collection of truth tables from $\{0,1\}^{\log n}$ to $\{0,1\}$, each represented as an $n$ bit string. We query the $\NP$ oracle, to check if there is a size $n^{\eps}$ circuit evaluating one of these truth tables. If so, we replace the truth table, say, $T$ with the corresponding circuit, say, $C$. Observe that $C$ requires $n^{\eps}$ bits to store, so we have in the process saved some space. If by the end of our computations, at least one of the sampled truth tables is hard for circuits of size $n^{\eps }$, we have found a pseudorandom generator, and can derandomize $\BPP$ in $\P$. Thus, with one more $\NP$ query, we are done. If none of the truth-tables is hard for $n^{\eps}$ size circuits, then we have saved enough space to derandomize $\BPP$ with a brute-force search in $\PSPACE$. Finally, we can replace the circuits with their corresponding truth-tables. There is an issue with the above arguement: When we ask the $\NP$ oracle whether or not a truth table is computable using circuits of $n^{\eps}$ size, how do we get access to one such circuit explicitly? To solve this problem, we can use the reduction showing $\fF\in\cltonp$ to find the circuit corresponding to the truth-table on a catalytic tape with one parallel round of suitable $\NP$ queries. Essentially, it turns out that all required $\NP$ queries can be made into two parallel rounds, thus showing that $\BPP\subseteq \cltonp$. We show similar inclusions for the classes $\MA$ and $\ZPP^{\NP[1]}$ with minor modifications to the above argument.


To summarise:
\begin{result} (\Cref{thm:fsat,thm:BppMaZppToNp})
$\fF,\BPP,\MA,\ZPP^{\NP[1]}$ are in $\cltonp$.
\end{result}

\paragraph*{Catalytic isolation for Reachability and Semi-unbounded circuit evaluation}
\medspace

In the previous sections we saw that the isolation lemma of \cite{MVV} can be
derandomized in the context of $\CL$ and $\cltonp$. In this section we 
derandomize the isolation lemma as applied to reachability and evaluation of
semi-unbounded circuits. Notice that both these problems are well inside 
the entire $\CL$ (since $\NL \subseteq \LogCFL \subseteq \TC^1 \subseteq \CL$)
 so we have to explore subclasses of $\CL$ for this to make sense.
For this we consider variants of $\CL$
by further restricting the catalytic space and with access to unambigious low space oracles. The ensuing fine grained classes form upper bounds for several 
natural and 
interesting problems/classes. This includes the result from \cite{Pyne} 
simulating $\BPL$ in catalytic space $\mcO(\log^2{n})$ while persisting with 
$\mcO(\log{n})$ work space and polynomial time. 
Also, an upper bound for $\BPL$ that is a tradeoff result between 
catalytic and work space without preserving polynomial time. A third example is
$\TreeEval$  for which, an intriguing  $\mcO(\log{n}\log{\log{n}})$
catalytic and work space bound is shown in \cite{CM23}.

However, notice that it is not viable to show $\NL$ in small catalytic space
and work space while preserving polynomial time -- such a proof would imply
$\NL \subseteq \SC$ resolving a venerable open problem. 
In this backdrop, we aim to show similar results for $\NL$ and $\LogCFL$ by 
modifying the techniques originating from \cite{AR} which was in turn inspired
by \cite{GalW}.
It was shown in \cite{AR} that logspace non-determinism can be made
unambiguous at least non-uniformly by crucially using the isolation lemma. In the same spirit, we show that $\NL$ is contained in the catalytic class with $\mcO(\log^2{n})$ catalytic 
space and $\mcO(\log{n})$ work space with access to a $\UL$ oracle (notice that this is different from catalytic unambiguous classes mentioned in \cite{KMPS}). We are also 
able to  prove a tradeoff result between catalytic space and the workspace of the oracle, for $\NL$.
 Notably, our results are all in simultaneous polynomial time. 
They are, in fact, catalytic adaptations of the results of \cite{MP}. 
Historically, \cite{MP} is based on the results of \cite{KT,FGT} which yield
a partial derandomization of the isolation lemma in the context of reachability
and bipartite perfect matching respectively.
We are able to extend the results of \cite{MP} on $\LogCFL$ to the catalytic 
setting as well.

The results are based on derandomizing the isolation lemma for reachability 
and semi-unbounded circuit evaluation problem, respectively for $\NL,\LogCFL$. A brief description of our strategy is given below.

Given a layered directed acyclic graph $G$, \cite{MP} gives a procedure to give vertex weights so as to isolate min unique $s-t$ paths for all $s,t\in V(G)$. They use several logspace hash functions, where the seed length is logarithmic, to construct these weight assignments. When \cite{MP} require a new hash function, they iterate over all hash functions to find a `good' hash that works for their purposes. Once they have found a `good' hash, they store it, and either have found enough hash functions, or move on to find the next hash function. Once the minimum weight path has been isolated, reachability can be computed in unambiguous logspace (if the weights are larger, then in logarithmic space in the magnitude of weights), as shown by \cite{AR}. The same techniques have been adapted for the semi-unbounded circuit evaluation problem for logarithmic depth circuits in \cite{AR} and \cite{MP}.

We interpret the catalytic tape contents as hash functions. When we need a new hash function, we assume the corresponding block of catalytic tape to be the hash function. We check if this hash function is indeed `good' or not using an unambiguous oracle. If it is, we can proceed following the algorithm in \cite{MP}. Otherwise, we observe that the number of `bad' hash functions is substantially smaller than the number of `good' hash functions. Hence, we remove the `bad' hash function from the tape, and replace it with its index among the set of all bad hash functions. In this process, we end up saving some space. If by the end of this process, we have found a sufficient number of hash functions, we construct the required weight assignments efficiently. Instead, if we have not been able to find the correct number of hashes, we have ended up saving enough space to run \cite{MP} directly. Later, we restore the catalytic tape contents to its initial configuration, again by replacing the indices of the bad hash functions with the corresponding hash functions.

In \cite{MP}, it is required to sample $\mcO(\log^{0.5}n)$ many hash functions, each hash requiring $\mcO(\log n)$ bits of space. In fact, a finer analysis of \cite{MP} shows that, for all $\alpha\in[0,0.5]$, using $\mcO(\log ^{1-\alpha}n)$ many hash functions, we can find isolating weight assignments such that the weight of every vertex is of bitlength $\mcO(\log^{1+\alpha}n)$. Using this idea, we get a tradeoff result, where we use $\mcO(\log^{2-\alpha}n)$ catalytic space, and an oracle machine with $\mcO(\log^{1+\alpha}n)$ workspace, while retaining the properties of the oracle machines i.e. unambiguity for $\NL$, and unambiguity with an extra stack for $\LogCFL$.

Interestingly, our result also shows that $\NL$ is in $\CSPACE[\log n,\log^2 n]$ if and only if $\UL$ is in $\CSPACE[\log n,\log^2 n]$. That is, it is no harder to prove a catalytic adaptation of Savitch's theorem for $\NL$ than it is for $\UL$.

To summarise:
\begin{result} (\Cref{thm:nl,thm:cir})
For all $\alpha \in [0,0.5]$:
\begin{itemize}
\item $\NL\subseteq \CTISP\left[\poly(n), \log n,\log^{2-\alpha}n\right]^{\U\TISP\left[\poly(n),\log^{1+\alpha}n\right]}$
\item $\LogCFL\subseteq \C\TISP[\poly(n),\log{n},\log^{2-\alpha}{n}]^{\U\AuxPDA\text{-}\TISP[\poly(n),\log^{1+\alpha}n]}$
\end{itemize}
\end{result}


\section{Preliminaries}
We recall standard definitions that will be used throughout.

\medskip
\noindent\textbf{Matchings and exact matchings.}
Let \(G=(V,E)\) be an undirected graph.
A \emph{matching} in \(G\) is a subset \(M\subseteq E\) such that no two edges in \(M\) share a common endpoint.
A matching \(M\) is \emph{perfect} if every vertex of \(G\) is incident to exactly one edge of \(M\).
More generally, given a  coloring of $V$ in two colors, \emph{red} and \emph{blue}, and an integer $k$, an \emph{exact matching} is a matching \(M\subseteq E\)
such that $M$ contains exactly $k$ \emph{red} edges.

\medskip
\noindent\textbf{Arborescences.}
Let \(D=(V,E)\) be a directed graph and let \(r\in V\) be a designated root.
An \emph{arborescence rooted at \(r\)} is a spanning subgraph \(T=(V,T)\subseteq D\) such that
(i) \(r\) has indegree \(0\),
(ii) every vertex \(v\neq r\) has indegree exactly \(1\), and
(iii) the underlying undirected graph of \(T\) is a tree.
Equivalently, an arborescence is a directed spanning tree whose edges are directed away from the root \(r\).

\medskip
\noindent\textbf{Linear Pfaffian Matroid Parity problem.}
Let $A$ be a matrix whose columns are partitioned into pairs, called \emph{lines}, denoted $L$. The problem asks to find a set of columns of $A$ that are linearly independent and is equal to a union of lines of $L$. Let $\mcB$ be the set of all such linearly independent sets of columns, each set in $\mcB$ is said to be a \emph{base}. For $J$ a subset of columns of $A$, we denote $A[J]$ as the submatrix of $A$ formed by the columns indexed by $J$. We say the matrix forms a  \emph{pfaffian parity} \cite{Oki} if $\det A[B]=c$ $\forall B\in\mcB$ for some constant $c$.

\medskip
\paragraph*{Standard complexity classes.}
\(\TC^1\) is the class of Boolean functions computable by uniform circuits of logarithmic depth and polynomial size with unbounded fan-in AND, OR, NOT and MAJORITY gates.

\medskip
\(\ZPP\) (Zero-error Probabilistic Polynomial time) is the class of decision problems
for which there exists a probabilistic polynomial-time Turing machine \(M\) satisfying:
for every input \(x\),
\begin{enumerate}
	\item \(M(x)\) outputs either the correct answer or ``don't know'', and
	\item \(\Pr[M(x)\text{ outputs ``don't know''}] \le 1/3.\)
\end{enumerate}
Equivalently, $\ZPP=\RP\cap \coRP$.

$\NL$ is the set of languages decidable by a nodeterministic logspace machine, and a $\LogCFL$ machine is a $\NL$ machine with an additional auxilary stack. $\SAC^1$ is the class of boolean circuits with unbounded fan-in OR gates, fan-in two AND gates and NOT gates of logarithmic depth and polynomial size. It is well known that $\LogCFL=\SAC^1$. It suffices to assume that all the NOT gates are on the bottom level, and are absorbed in the literals, making the circuit NOT free. We say that a circuit is layered if the edges in the circuit are only between consequitive layers, and gates at odd and even layers consist of $\wedge$ and $\vee$ gates respectively.  For a $\SAC^1$ circuit $C$, define a \emph{proof tree} $F$ as follows: $(1)$ output gate $g_{out}$ is in $F$, $(2)$ for all $\wedge$ gates $g\in F$, both children of $g$ are in $F$, $(3)$ for all $\vee$ gates $g\in F$, exactly one child of $g$ is in $F$, and $(4)$ all the literals in $F$ are set to \emph{true}. 
\begin{proposition}\label{nlhard}
	Given a layered directed acyclic graph $G=(V,E)$ on $n$ vertices and two special vertices $s$ and $t$ in $V$, checking whether $t$ is reachable from $s$ in $G$ is $\NL$ hard under logspace reductions.
\end{proposition}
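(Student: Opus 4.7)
The plan is to reduce from general $s$-$t$ reachability in directed graphs, which is well-known to be $\NL$-complete under logspace reductions, to the layered-DAG variant. Given an arbitrary digraph $H=(V_H,E_H)$ on $N$ vertices with source $s_H$ and target $t_H$, I would construct a layered DAG $G=(V,E)$ along with distinguished vertices $s,t$, such that $t_H$ is reachable from $s_H$ in $H$ iff $t$ is reachable from $s$ in $G$, and then check that the map $H\mapsto G$ is logspace computable.

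The construction is the standard ``time-unfolding.'' Take $V := V_H\times\{0,1,\ldots,N\}$, and for each $u,v\in V_H$ and each $0\le i<N$, include the edge $((u,i),(v,i+1))$ exactly when $(u,v)\in E_H$ or $u=v$. Set $s:=(s_H,0)$ and $t:=(t_H,N)$. By construction, every edge strictly increases the layer coordinate, so $G$ is acyclic and layered with $N+1$ layers, and the total size of $G$ is polynomial in $N$.

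For correctness, I would observe that any $s_H$-to-$t_H$ walk in $H$ has length at most $N-1$, and such a walk can be padded to exactly $N$ steps using the ``stay-in-place'' edges induced by the clauses $u=v$; this lifts to an $s$-to-$t$ path in $G$. Conversely, projecting any $s$-to-$t$ path in $G$ onto the first coordinate yields an $s_H$-to-$t_H$ walk in $H$, since every edge of $G$ either mirrors an edge of $H$ or is a self-loop. For logspace computability, I would note that each vertex of $G$ is named by a pair $(v,i)$ using $\mcO(\log N)$ bits, and deciding adjacency requires only comparing two $\mcO(\log N)$-bit layer indices and a single lookup into $E_H$, all well within logarithmic workspace.

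I expect no genuine obstacle here; this is a textbook reduction. The only nuance to be careful about is including the self-loop edges $u=v$ so that short walks in $H$ extend to full-length $N$-step paths in $G$, and confirming that the resulting $G$ satisfies the intended notion of ``layered'' (edges go only between consecutive layers), both of which are immediate from the construction.
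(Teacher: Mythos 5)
Your proposal is correct and is the standard time-unfolding reduction; the paper itself states Proposition~\ref{nlhard} without proof, treating it as folklore, so there is nothing in the text to compare against. One small wording slip: a \emph{walk} from $s_H$ to $t_H$ can have arbitrary length, so you should say that \emph{if $t_H$ is reachable from $s_H$ then some simple path of length at most $N-1$ exists}, and it is this path that you pad with the self-loop edges to exactly $N$ steps; the rest of the argument, including the layered/acyclic check and the logspace computability of the adjacency predicate, is fine.
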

\begin{proposition}\label{sachard}
	Given a layered $\SAC^1$ circuit $C$, checking for the existence of a proof tree of $C$ is $\LogCFL$ hard under logspace reductions.
\end{proposition}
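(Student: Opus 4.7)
The plan is to exploit the identity $\LogCFL = \SAC^1$ together with the logspace uniformity of the $\SAC^1$ circuit family. Given any $L \in \LogCFL$, fix a logspace-uniform family $\{C_n\}_{n \geq 0}$ of $\SAC^1$ circuits computing $L$. On input $x$ of length $n$, a logspace transducer first constructs $C_n$, substitutes the bits of $x$ into the corresponding input literals, and then normalizes the resulting circuit into a layered $\SAC^1$ circuit $C'_n$ in which odd and even layers alternate between $\wedge$ and $\vee$ gates. The reduction outputs $C'_n$, and the claim to be verified is that $x \in L$ if and only if $C'_n$ has a proof tree.

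Converting $C_n$ to layered form is a standard padding step. Whenever an edge skips layers or violates the $\wedge/\vee$ alternation, we interpolate dummy gates that act as the identity: a one-input $\vee$-gate simply forwards its input, and an $\wedge$-gate with one wire hardwired to the constant $\mathsf{true}$ (realized once, at the bottom level, by a tautological literal such as $x_1 \vee \neg x_1$, and then routed upward through the same layering scheme) also acts as identity. Since $C_n$ has depth $\mcO(\log n)$ and polynomial size, only $\mcO(\log n)$ dummy gates are inserted per edge, so $C'_n$ remains of logarithmic depth, polynomial size, and of the required strictly alternating layer pattern. Logspace uniformity of the entire construction is inherited from the logspace uniformity of $\{C_n\}$.

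Correctness of the equivalence is a routine induction on gate height. A leaf admits a one-node proof tree iff it is set to $\mathsf{true}$; an $\wedge$-gate admits a proof tree iff both children admit one, which by induction is iff both children evaluate to $1$, equivalently iff the gate evaluates to $1$; an $\vee$-gate admits a proof tree iff some child does. The inserted identity gadgets are forced in the proof tree and preserve these equivalences. Applied at the output gate $g_{\mathrm{out}}$, this yields: $C'_n$ has a proof tree iff $C'_n(x) = 1$ iff $C_n(x) = 1$ iff $x \in L$.

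The main potential obstacle is bookkeeping: one must verify that the identity gadgets respect the alternation requirement and the fan-in-$2$ bound on $\wedge$-gates, and that the constant $\mathsf{true}$ needed by the dummy $\wedge$-gates can be produced and propagated cheaply to every layer. Both points are handled by reserving a single tautological leaf and a dedicated column of layered identity wires carrying $\mathsf{true}$ all the way to the top level; this contributes only polynomial overhead and preserves the $\SAC^1$ bounds. Nothing in the argument uses more than the standard $\LogCFL = \SAC^1$ equivalence and elementary logspace routing, so the reduction is clearly computable in deterministic logspace.
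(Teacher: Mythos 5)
The paper states Proposition~\ref{sachard} as a standard fact in its preliminaries and gives no proof, implicitly appealing to the well-known $\SAC^1 = \LogCFL$ characterization (Venkateswaran). Your argument is the canonical folklore proof of exactly this fact, and it is correct: you take the logspace-uniform $\SAC^1$ family for $L$, plug in the input, pad to strict $\wedge/\vee$ alternation with identity gadgets (a singleton-$\vee$ in even layers and a fan-in-2 $\wedge$ with a hardwired $\mathsf{true}$ wire in odd layers, the $\mathsf{true}$ wire routed up a dedicated column), and observe by induction on height that a gate admits a proof tree exactly when it evaluates to $1$. The bookkeeping you flag (alternation, the fan-in-two bound on $\wedge$, and producing the constant $\mathsf{true}$) is handled correctly, and the whole reduction is plainly computable in deterministic logspace since the uniformity and the padding are both local. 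So there is no mismatch to report; the paper simply omitted the proof, and you reconstructed the intended one.
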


$\UTISP[t(n),s(n)]$ is the class of languages accepted by a non-deterministic turing machine running in time $t(n)$ and space $s(n)$, with the further restriction that for any input $x$, the non-deterministic turing machine has at most one accepting path.
A $\Unm\AuxPDA$-$\TISP\left[t(n),s(n)\right]$ machine is a non-deterministic machine that runs in time $t(n)$ and space $s(n)$ that is augmented with an auxilary pushdown stack separate from its work space, such that the non-deterministic machine has at most one accepting path, i.e. it is an $\UTISP[t(n),s(n)]$ machine with an auxilary stack.

{$\MA$:} A language $\mcL$ is in $\MA$ if Merlin sends a polynomial-size witness and Arthur verifies it probabilistically in polynomial time. \\
Formally, there exists a PPT verifier $V$ such that  
$x\in \mcL \Rightarrow \exists w:\Pr[V(x,w)=1]\ge 2/3$ and  
$x\notin \mcL \Rightarrow \forall w:\Pr[V(x,w)=1]\le 1/3$.

{$\S_2\P$:} A language $\mcL$ is in $\S_2\P$ if there exists a poly-time predicate $V(\cdot,\cdot,\cdot)$ such that  
$x\in \mcL \Rightarrow \exists y\,\forall z\, V(x,y,z)=1$ and  
$x\notin \mcL \Rightarrow \exists z\,\forall y\, V(x,y,z)=0$.

\medskip
\paragraph*{Catalytic Computation}
\begin{definition}
	A catalytic Turing machine $M$ with workspace bound $s(n)$ and catalytic space bound $c(n)$ is a Turing machine that has a read-only input tape of length $n$, write-only output tape (with tape-head that moves only left to right), an $s(n)$ space-bounded read-write work tape, and a catalytic tape of size $c(n)$. 
	We say that $M$ computes a function $f$ if for every $x \in\{0,1\}^n$ and $\tau \in\{0,1\}^{c(n)}$, the result of executing $M$ on input $x$ with initial catalytic tape $\tau$ i.e. $M(x,\tau)$ satisfies:
	\begin{enumerate}
		\item $M$ halts with $f(x)$ on the output tape.
		\item $M$ halts with the catalytic tape consisting of $\tau$.
	\end{enumerate} 
	$\CSPACE[s(n),c(n)]$ is the family of functions computable by such a Turing machine, and $\CTISP[t(n),s(n),c(n)]$ is the family of functions computable by such a Turing machine that simultaneously runs in time $t(n)$.
\end{definition}
We define \emph{catalytic logspace} as the class 
\[\CL=\cup_{k\in\bbN}\CSPACE[k\log n,n^k]\]
and $\CL\P$ is the set of languages decidable by a $\CL$ machine that simultaneously runs in polynomial time.
\begin{lemma}[\cite{BCKLS14}]
	$\TC^1\subseteq \CL$
\end{lemma}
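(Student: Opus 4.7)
The plan is to simulate a $\TC^1$ circuit by a straight-line register program in the spirit of Ben-Or and Cleve, whose work on algebraic straight-line programs is explicitly cited in the introduction as the inspiration for catalytic computation. First I would encode each threshold gate (in particular MAJORITY) as an arithmetic identity over a small ring $R$, for instance $\mathbb{Z}_p$ for a few primes $p$ of size $O(\log n)$ combined by Chinese Remaindering. The goal of this step is to convert the given log-depth polynomial-size $\TC^1$ circuit into an arithmetic circuit over $R$ of depth $O(\log n)$ and polynomial size whose Boolean interpretation still computes the original function.

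Next I would invoke the Ben-Or--Cleve register machine theorem: any arithmetic formula of depth $d$ over a commutative ring can be evaluated by a straight-line program using only a constant number of registers and $4^{O(d)}$ transvection-style instructions, where each instruction has the form $R_i \leftarrow R_i \pm f(R_j, R_k)$ with the updated register absent from the right-hand side. Because such instructions are reversible (by flipping $\pm$ to $\mp$) and $d = O(\log n)$, the program has polynomial length and each register fits in $O(\log n)$ bits of work tape. The catalytic tape enters because a $\TC^1$ circuit is not in general a formula and shared subcircuits cannot be unfolded without super-polynomial blow-up; I would allocate blocks of the catalytic tape to hold additively-perturbed ring values of intermediate gates, let the register program act on these blocks via reversible updates, extract the output bit from the appropriate register at the root, and then execute the entire program in reverse, undoing every transvection in opposite order, which restores every bit of the catalytic tape.

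The main obstacle is the first step: converting threshold gates of log depth into an arithmetic formulation of log depth and polynomial size \emph{without} blowing up the depth. A naive encoding of iterated addition inside each threshold layer would push the overall arithmetic depth to $\Theta(\log^2 n)$, making the Ben-Or--Cleve program super-polynomial. Overcoming this requires a careful polynomial representation of thresholds modulo small primes (for example via symmetric-function identities or Lagrange interpolation over $\mathbb{F}_p$) so that each threshold layer contributes only constant depth to the arithmetic circuit. Making this conversion precise, while keeping the catalytic tape wide enough to hold all intermediate ring values and the work tape narrow enough to run the register program, is the technical heart of the proof.
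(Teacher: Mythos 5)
The paper does not prove this lemma — it cites it directly from \cite{BCKLS14} — so the comparison is really between your reconstruction and the proof in that reference.

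Your high-level frame matches the BCKLS14 proof: reversible register programs in the Ben-Or--Cleve style, intermediate gate values held ``additively perturbed'' on the catalytic tape, arithmetic carried out over small rings with Chinese Remaindering, and a final reverse pass to restore the tape. That is exactly the ``transparent computation'' machinery of \cite{BCKLS14}, and your reason for needing the catalytic tape (a $\TC^1$ circuit is a DAG, not a formula, and unfolding a polynomial-fan-in log-depth circuit into a formula costs $n^{\Theta(\log n)}$) is the right one.

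The gap is in the step you yourself flag as ``the technical heart.'' Your proposed resolution — choose a clever threshold polynomial over $\mathbb{F}_p$ so that each MAJORITY layer adds only \emph{constant} arithmetic depth — cannot work as stated. Over a commutative ring with bounded-fan-in gates, summing $n$ inputs already forces depth $\Omega(\log n)$; and while one can treat unbounded-fan-in addition as a single register instruction (addition is free in the Ben-Or--Cleve framework), extracting a threshold bit from the residues $s \bmod p_1,\dots,s \bmod p_m$ requires a multiplicative polynomial of degree on the order of $\prod p_i \approx n$, so the multiplicative depth per layer is $\Theta(\log n)$. Your bound of $4^{O(d)}$ instructions for a depth-$d$ formula then lands at $4^{O(\log^2 n)}$, which is quasi-polynomial rather than polynomial. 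What BCKLS14 actually prove is a \emph{degree}-parameterized refinement of Ben-Or--Cleve: a transparent register program of length polynomial in the \emph{degree} of the arithmetized circuit (and in the circuit size), not exponential in its multiplicative depth. Since a $\TC^1$ circuit, arithmetized over small primes and with unbounded-fan-in addition treated as a single instruction, yields a circuit of polynomial degree, the program has polynomial length. Without this degree-based replacement for the naive $4^{O(d)}$ bound, the argument as you wrote it does not close. So: right architecture, right obstacle identified, but the key lemma that makes the size polynomial is missing and the depth-based bookkeeping you propose in its place does not suffice.
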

	

Oracle $\CL$ machines are defined analogous to logspace oracle machines.  That is to say, an oracle $\CL$ machine $M$ is equipped with an oracle $A\subseteq \Sigma^*$.  In addition to the input tape, catalytic tape and worktape, it also has a query tape on which it can write an oracle query $q$ and enter into a special query state $Q$. Depending on whether or not $q\in A$ the next state is $Q_Y$ or $Q_N$ respectively. The important restriction is that the query tape head is allowed to move only from left to right as it writes the query $q$, and the query tape must be reset after a query has been made.

We will also consider oracle $\CL$ machines that can make $k$ rounds of \emph{parallel queries} to the oracle $\NP$, namely, $\CL^{\NP}_{k\text{-}round}$ for $k\in \bbN$. We shall define this class in \Cref{sec:oracle}, see \Cref{def:or}.

	
		

Lastly, we also look at the oracled classes $\C\TISP[\poly(n),\log n,\log^{2-\alpha} n]^A$ where the oracle $A$ is $\U\TISP[\poly (n),\log^{1+\alpha}n]$ and $\U\AuxPDA\text{-}\TISP[\poly(n),\log^{1+\alpha}n]$ for $\alpha\in[0,0.5]$. We show that $\NL$ and $\LogCFL$ are contained in these classes respectively.

\section{Search to Decision}\label{sec:three}
\paragraph*{Decision vs Search Problems.}

In this section we formulate the problem of reducing search to
decision in a fairly general setting and show that the isolation lemma
of \cite{MVV} can be used to prove that search is reducible to a
weighted version of the corresponding decision problem in catalytic logspace. 
More precisely, we
show that there is a catalytic logspace oracle machine that solves the
search problem with parallel queries to the weighted decision problem
as oracle. It turns out that this general formulation and its solution
yields a number of interesting concrete applications.

Let $\mcL \subseteq \{0,1\}^*$ be a language (equivalently, \emph{decision
problem}) defined by a binary predicate $R_\mcL$ in the following
sense:
\[
x \in \mcL \quad \Longleftrightarrow \quad \exists\, y \in \{0,1\}^{m(|x|)} 
\text{ such that } R_\mcL(x,y)=1,
\]
where $m(|x|)$ is polynomial in $|x|$. If $x\in\mcL$ we think of a
$y\in\{0,1\}^{m(|x|)}$ as a solution for $x$ that witnesses
$x\in\mcL$. Computing such a $y$ for $x\in\mcL$ is the corresponding
\emph{search problem}. 
\begin{remark}
Notice that this is similar to the definition of $\NP$.
However, we do not insist that $R_\mcL$ is polynomial-time
computable. In this sense, it is more general.
\end{remark}


\paragraph*{Weighted Decision.}
Suppose $\mcL$ is a decision problem defined by predicate $R_\mcL$
with witness length $m(\cdot)$ as defined above. For an input
$x\in\Sigma^*$, let $W : [m] \to \mathbb{N}$ be a weight assignment to
the $m=m(|x|)$ positions of a candidate witness string
$y\in\{0,1\}^m$, where we define the weight of $y$ as: 
\[
W(y) = \sum_{i : y_i = 1} W(i).
\]
We define the \emph{weighted decision problem} as the language $\mcL_w$:

\[
  (x,W,w_0)\in\mcL_w
\quad \Longleftrightarrow \quad
\exists\, y \text{ such that } R_\mcL(x,y)=1 \text{ and } W(y)\le w_0.
\] 


Furthermore, \emph{min-weight search problem}
is defined as follows:
given $x\in\Sigma^*$, compute a witness $y \in \{0,1\}^{m(|x|)}$ of minimum weight
$W(y)$ such that $R_\mcL(x,y)=1$.
That is, the problem is to output
\[
y^\ast \in \arg\min_{y : R_\mcL(x,y)=1} W(y).
\]
We define min-weight-$\search_\mcL(x,W)$ as  the set of
all min-weight witnesses i.e $\arg\min_{y : R_\mcL(x,y)=1} W(y)$.

\begin{remark}
  It turns out that the only constraint we need on $R_\mcL$ to show
  the $\CL$-computable reduction in this section is that it is in
  $\PSPACE$. Throughout this section, we consider languages $L$ such
  that $R_{\mcL}$ is in $\PSPACE$, i.e given $x,y$, there is a $\PSPACE$
  algorithm to check whether $R_\mcL(x,y)=1$ or not. Henceforth, we
  drop the subscript $\mcL$ when it is clear from the context.
\end{remark}

\begin{definition}
  Let $E=\{e_1,e_2,\ldots,e_n\}$ be a universe of $n$ elements,
  $\mcF\subseteq 2^E$, and $W$ a weight assignment $W:E\to \bbN$. We
  say that $W$ \emph{isolates a min-weight set} in $\mcF$ (briefly, $W$
  \emph{min-isolates} $\mcF$), if there exists a unique min-weight set $F$ in
  $\mcF$ i.e.  $\arg\min_{F\in\mcF}W(F)$ is a single set. Moreover,
  if $W$ does not min-isolate $\mcF$, then there exists
  $F_1\ne F_2\in\mcF$ where $W(F_1)=W(F_2)$ is the minimum weight. In
  such a case, any element $e\in F_1\setminus F_2$ is a threshold element. 
\end{definition}

Let $\mcL$ be a language, and $x$ an input such that $|x|=n$ and
$m(n)=m$. Let $E=[m]=\{1,\ldots,m\}$. For $y\in\{0,1\}^m$, let
$F_y=\{i:y_i=1\}$, which defines a natural bijection between witness
candidates $y$ and subsets of $E$. We will sometimes use $F_y$ to
denote a witness candidate $y$.  In the sequel, we only consider
weight functions $W:E\to [0,m^c]$, where $[0,m^c]$ denotes the
integers in this range, for some constant $c>0$ that is independent of
$n$. We will refer to such $W$ as \emph{polynomially bounded} weight
functions.

Given an isolating weight assignment $\bmW$ such that there is a
unique minimum weight solution $F_y\subseteq E$, we can search for
$F_y$ in logspace, with queries to the oracle $\mcL_w$ as follows:
\begin{enumerate}
\item Let $W^{min}$ be the least weight in $[0,m^c]$ such that
  $(x,\bmW,W^{min})\in \mcL_w$. That is,
  $W^{min}=\min_{F : R_\mcL(x,F)=1} W(F)$. Then $W^{min}$ can be found by
  binary searching for the least $w_0$ such that
  $(x,\bmW,w_0)\in \mcL_w$.
\item For each $e\in E$, define weight assignment $\bmW_e$ which is
  the same as $W$ everywhere except at $e$, where $W_e(e)=\infty$ (we
  can replace $\infty$ with $W^{min}+1$). Then $e\in F$ \emph{iff}
  $(x, W_e,W^{min})\notin \mcL_w$.
\end{enumerate}

The goal is, therefore, to find an isolating weight assignment. We
assume that the catalytic tape contains polynomially many weight
assignments. If we find an isolating weight assignment, we are
done. Otherwise, we find a threshold element and efficiently compress
the weight function to save some space.  Ultimately, we save enough
space to brute force over all subsets to find a correct set $F$.

More precisely, we next show that there is a catalytic logspace
procedure that takes input $x$, makes queries to the weighted decision
problem $\mcL_w$, and computes a $y\in\{0,1\}^{m(|x|)}$ such that
$R_\mcL(x,y)$ (i.e.\ search is $\le^\CL_T$-reducible to the weighted
decision).

\begin{theorem}\label{thm:srch}
  For polynomially bounded weights, min-weight-$\search\le^{\CL}_T$
  weighted-$\decision$ for any language $\mcL$ defined by a
witness relation $R_\mcL(x,y)$. Equivalently, the search problem 
min-weight-$\search$ is
  computable in $\CL^{\mcL_w}$. In particular, it holds for $\NP$-languages.
\end{theorem}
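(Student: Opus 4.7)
The plan is to design an $\mcL_w$-oracle, $O(\log n)$-workspace, polynomial-size-catalytic machine whose behavior on input $x$ (with $m = m(|x|)$) is as follows. I first normalize weights into the range $[1, m^c]$ for a constant $c \ge 2$, and view the catalytic tape (of length $N = n^d$, with $d$ chosen below) as a sequence of $k = \Theta(N/(m\log n))$ fixed-width blocks, each holding a candidate weight function $W_i : [m] \to [1, m^c]$. I process blocks left-to-right, maintaining the invariant that processed non-isolating blocks have been rewritten in a \emph{compressed} form, with the bits freed from compression appended to a scratchpad at the tail of the tape.

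For each block $W_i$, I would first perform a binary search over $w_0 \in [0, m^{c+1}]$ (one oracle call per step) to compute $W_i^{\min} := \min_{R_\mcL(x,y)=1} W_i(y)$. Then I would extract the \emph{intersection candidate} $F \subseteq [m]$ defined by $e \in F \iff (x, W_i|_{e\to\infty}, W_i^{\min}) \notin \mcL_w$, so $F$ is the intersection of all min-weight witnesses under $W_i$. Summing gives $W_i(F)$. Because all weights are strictly positive, $W_i(F) = W_i^{\min}$ iff $F$ equals every min-weight witness, iff $W_i$ isolates; in that case I would write $F$ to the output tape and jump to restoration. Otherwise, I would scan for the lex-smallest threshold element $e^*$ (an $e \notin F$ satisfying both $(x, W_i|_{e\to\infty}, W_i^{\min}) \in \mcL_w$ and $(x, W_i|_{e\to 0}, W_i^{\min}{-}1) \in \mcL_w$, the second using $W_i(e) \geq 1$), then overwrite the block with its compressed form [flag bit, index $e^*$ in $\lceil\log m\rceil$ bits, the remaining $m{-}1$ weights], saving $(c{-}1)\log m - O(1) = \Omega(\log n)$ bits into the scratchpad.

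If all $k$ blocks get compressed without any isolator being found, the scratchpad has gathered $\Omega(k\log n) = \Omega(n^{d-1})$ free bits, which for $d$ large enough exceeds the polynomial-space bound for a direct $\PSPACE$ brute-force search of a min-weight witness (enumerating $y \in \{0,1\}^m$, evaluating $R_\mcL(x,y)$, tracking a running minimum). I would run this brute-force entirely inside the scratchpad and emit its answer. On every exit path I would then execute a restoration pass: for each compressed block, read $e^*$ and recover
\[
W_i(e^*) \;=\; \min_{y} W_i|_{e^*\to\infty}(y) \;-\; \min_{y} W_i|_{e^*\to 0}(y)
\]
via two more binary searches to $\mcL_w$. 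This identity holds precisely because $e^*$ was a threshold element of the \emph{original} $W_i$, and crucially neither min depends on $W_i(e^*)$, so the formula is well-defined from the compressed data alone; reinserting $W_i(e^*)$ restores the block bit-for-bit, and collapsing the scratchpad yields the original catalytic contents.

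The hard part is not conceptual but bookkeeping: fitting the block index $i$, the element pointer $e$, the binary-search pointer $w_0$, the running sum $W_i(F)$, and the scratchpad offset into $O(\log n)$ simultaneous work-tape bits, while arranging the compressed blocks so that their freed tails can be contiguously concatenated into the scratchpad during the forward pass and popped off in reverse order during restoration. The correctness of the ``intersection'' and ``threshold'' oracle tests is immediate from the definitions once weights are at least $1$, and the restoration identity is the paper's observation that $W(e^*)$ is recoverable from $W|_{e^*\to 0}$ and $W|_{e^*\to\infty}$ made quantitative.
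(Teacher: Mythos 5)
Your overall architecture mirrors the paper's: process catalytic blocks as weight assignments, test for isolation via oracle queries, compress a non-isolating block by dropping one weight and storing the missing index, brute-force in $\PSPACE$ if every block compresses, and restore each dropped weight as a difference of two oracle-computed minima. Your isolation test---compute the ``intersection'' set $F$ with $m$ oracle calls and compare $W_i(F)$ against $W_i^{min}$---is a valid (and slightly cleaner) variant of the paper's, which instead declares $W_i$ isolating iff its scan for a threshold element comes up empty; both are correct given strictly positive weights.

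The flaw is in the threshold test you use to select which weight to drop, and it propagates to restoration. You pick $e^*\notin F$ with $(x, W_i|_{e^*\to 0}, W_i^{min}-1)\in\mcL_w$. This condition is \emph{necessary} for $e^*$ to be a genuine threshold element (if $e^*$ lies in a min-weight witness $y$, then $W_i|_{e^*\to 0}(y)=W_i^{min}-W_i(e^*)\le W_i^{min}-1$ since $W_i(e^*)\ge 1$), but it is \emph{not sufficient} when $W_i(e^*)>1$: the witness $y$ achieving $W_i|_{e^*\to 0}(y)\le W_i^{min}-1$ need not be min-weight, so $e^*$ may lie in no min-weight witness at all. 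In that case the restoration identity
\[
W_i(e^*)=\min_y W_i|_{e^*\to\infty}(y)-\min_y W_i|_{e^*\to 0}(y)
\]
fails and the tape is not restored. Concrete example: $m=4$, witnesses $\{3\},\{4\},\{1,2\}$, weights $W(1)=6$, $W(2)=3$, $W(3)=W(4)=8$. Then $W^{min}=8$ and $F=\emptyset$; $e^*=1$ passes your test (since $W|_{1\to 0}(\{1,2\})=3\le 7$) and is lex-smallest, yet $\min_y W|_{1\to\infty}(y)-\min_y W|_{1\to 0}(y)=8-3=5\ne 6=W(1)$. The paper's test is the stronger $(x, W_i|_{e\to 0}, W_i^{min}-W_i(e))\in\mcL_w$, which forces a min-weight witness through $e$; together with the first condition this pins down $\min_y W_i|_{e\to\infty}(y)=W_i^{min}$ and $\min_y W_i|_{e\to 0}(y)=W_i^{min}-W_i(e)$, so the difference recovers $W_i(e)$ exactly and the compression is invertible. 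Since you have $W_i(e)$ in hand at compression time, replacing the threshold $W_i^{min}-1$ by $W_i^{min}-W_i(e)$ is a local fix; but as written, a compressed block does not uniquely determine the original one, which breaks the catalytic invariant.
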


      
\begin{proof}
  We describe the claimed oracle $\CL$ procedure, \Cref{alg:search},
  and prove its correctness. We are given $E=[m]$ and the catalytic
  tape $\mcC$ containing $N$ weight functions
  $\mcC=(\tau_1=\bmW_1,\tau_2=\bmW_2,\ldots,\tau_N=\bmW_N)$, where
  each $\bmW_i:E\to [0,m^2]$ is a weight assignment, and $N=\poly(n)$
  shall be specified later. The weight function $\bmW_i$ is encoded as
  a $2m\log m$ bit string $(W_i(1),\ldots,W_i(m))$. Clearly, if
  $(x,\bmW_1,m^3)\notin \mcL_w$ then we can reject the input because
  the weight of all subsets of $[m]$ are bounded by $m^3$. We can
  iterate over all the weight assignments $\bmW_1,\ldots,\bmW_N$ (or
  query them in parallel). Now, suppose we are processing $\bmW_i$, we
  calculate $W_i^{min}$ as the least $w_0$ such that
  $(x,\bmW_i,w_0)\in \mcL_w$.

	
\begin{claim}\label{claim:1}
  If $\bmW_i$ is not isolating then there is a threshold element
  $ e\in E$ such that $(x,\bmW_i',W_i^{min})\in \mcL_w$ and
  $(x,\bmW_i'',W_i^{min}-W_i(e))\in \mcL_w$, where
  weight functions $\bmW_i'$ and $\bmW_i''$ are the same as $\bmW_i$
  everywhere except $e$ and $W_i'(e)=W_i^{min}+1$, $W_i''(e)=0$, as
  defined in \Cref{alg:search}.
\end{claim}

\begin{proof}
  Suppose $\bmW_i$ is not min-isolating. Then there are
  $F_y\neq F_z\subseteq E$ that are both minimum weight witnesses for
  $\bmW_i$. For $e\in F_y\setminus F_z$ consider $\bmW_i',\bmW_i''$ as
  defined. Clearly, $W_i''(F_y)=W_i(F_y)-W_i(e)=W_i^{min}-W_i(e)$ and
  $W_i'(F_z)=W^{min}$ and $R_\mcL(x,y)=R_\mcL(x,z)=1$. Thus, we have
  $(x,\bmW_i',W_i^{min})\in \mcL_w$ and
  $(x,\bmW_i'',W_i^{min}-W_i(e))\in \mcL_w$.
\end{proof}
Now, we consider the following cases.
\begin{enumerate}
\item $\bmW_i$ is min-isolating. In this case we can find in logspace
  the unique min-weight set $F_y$ with queries to the decision oracle
  as described in the sketch above. Recomputing 
  $\tau_1=\bmW_1,\ldots,\tau_{i-1}=\bmW_{i-1}$ is as described in the
  third step below (\cref{alg:recomp}).

\item $\bmW_i$ is not min-isolating. Then find an index $i_0$
  satisfying the above claim (i.e. $e=i_0$) with queries to the
  decision oracle. Replace $\tau_i$ with $(i_0,\bmW_i^{i_0})$ where
  \[
    \bmW_i^{i_0}=(W_i(1),\ldots,W_i({i_0-1}),W_i({i_0+1}),\ldots,W_i(m))
   \] 
  is the $m-1$ weight vector with the $i_0^{th}$ entry dropped. This
  saves $\log m$ bits of space since $i_0$ requires $\log m$ space and
  $W_i({i_0})$ requires $2\log m$ bits to store. Next, we move on to
  process $W_{i+1}$.
		
  If none of $W_1,\ldots,W_N$ are min-isolating, we have freed
  $N\log m$ bits of space in the catalytic tape. We can search for a
  min-weight witnessing set $F$ as follows\footnote{First shift all
    the free space to the right making it a contiguous block of
    space. Then, after completing computation, shift it back.}:
		
  Iterate over all subsets of $F_y\subseteq E$.
		
  Compute each $R_{\mcL}(x,y)$ in $\PSPACE $ and return the least weighted
  $F_y$ for which $R_{\mcL}(x,y)=1$. Suppose computing $R_{\mcL}(x,y)$ requires
  $n^c$ space. Then this procedure requires $m+n^c$ space ($m$ since
  we iterate over all subsets of $[m]$). So choosing $N=m+n^c$
  suffices.
				
  Next we have to recompute all the $\tau_1,\ldots,\tau_N$.
		
\item Recompute (\cref{alg:recomp}): Suppose we are recomputing
  $\tau_i=W_i$ such that $W_i$ is not min-isolating, given
  $\tau_i'=(i_0,W_i^{i_0})$. Let $\bmW_i'$ and $\bmW_i''$ be same as
  $\bmW_i$ everywhere except $i_0$ and $W_i'(i_0)=m^3$,
  $W_i''(i_0)=0$. Since, $W_i$ is not min-isolating and $i_0$ is a
  threshold element, we know that $W_i^{min}=W_i'^{min}$ which we can
  compute using the $\decision$ oracle. Moreover $W_i''^{min}$ can
  also be computed. By the above claim, we can set
  $W_i({i_0})=W_i'^{min}-W_i''^{min}$, and reset $\tau_i$ to
  $(W_i(1),\ldots,W_i(m))$.
\end{enumerate}
	
Clearly, the procedure described above is in $\CL$, and the catalytic
tape contents are always restored to its initial configuration. 	
\end{proof}

\begin{remark}
  We can show a similar $\CL$-computable Turing reduction for decision
  problems that are already weighted, the instance $x$ comes with an
  input weight function $W_{input}$ that is polynomially bounded in
  $|x|$. In order to adapt the above proof it suffices to modify the
  weight functions (which will isolate a minimum weight solution) as
  follows: At step $i$, we consider weight $W_{input}\cdot m^{10}+W_i$
  where $W_i$ is the Catalytic weight. Further, in the proof, we define a weight assignment $W'$ that agrees everywhere except a threshold element $e$ where $W'(e)=m^3$ (during the Recompute procedure, \cref{alg:recomp}). In the case where we are already given the input weights $W_{input}$ bounded by $m^c$, we replace $W'(e)=m^{c+20}$ (i.e. some large enough polynomially bounded value). The rest of the proof remains the same.
  
\end{remark}

We show as
corollary that if $\mcL_w$ for polynomially bounded weights is $\CL$
computable then the search problem min-weight-$\search$ is also $\CL$
computable. By Theorem~\ref{thm:srch} we have that
min-weight-$\search$ is in $\CL^{\mcL_w}$. Now suppose that $\mcL_w$
is in $\CL$. We cannot directly conclude from Theorem~\ref{thm:srch}
that min-weight-$\search$ is in $\CL$, because we do not know if
$\CL^\CL=\CL$. The difficulty here, unlike showing say
$\Log^\Log=\Log$, is in simulating the $\CL$ machine for the oracle
because \emph{its input} is written on the query tape of the base
$\CL$ machine. More precisely, suppose $M$ is a $\CL^A$ machine
computing some function $f$, where the oracle language $A$ is computed
by $\CL$ machine $M_A$. For a single $\CL$ machine $M'$ to compute
$f$, $M'$ needs to simulate $M$ and, when $M$ queries $A$ for a string
$q$ it needs to simulate $M_A$ on $q$. The problem here is that $q$
cannot be held in the worktape and $M'$ needs to simulate $M$ to
access bits of $q$ multiple times. It is not clear if this is possible
because $M$ can change its catalytic tape while computing the query
$q$.  However, if we consider a more restrictive model in which the
base $\CL$ $M$ \emph{does not} change its catalytic tape when it
writes a query $q$, then it is indeed true that $f$ can be computed in
$\CL$. The $\CL^{\mcL_w}$ computation that is described in the proof
of Theorem~\ref{thm:srch} is precisely of this form. The queries made
to the $\mcL_w$ oracle are of the form $(x,W,w_0)$, and we can observe
that this is available in the memory of the base $\CL$ machine. More
precisely, $x$ is on the input tape, $W$ is on the catalytic tape
(with a minor $\mcO(\log n)$ bit modification kept on the work tape) 
and $w_0$ is available on
the worktape. Thus, the $\CL$ machine that solves $\mcL_w$ can be
simulated (with a separate catalytic tape) for each query made by the
base $\CL$ machine, preserving the catalytic property. 



\paragraph*{Some $\CL$ Search Algorithms based on Theorem~\ref{thm:srch}}

Here we state some direct consequences of \Cref{thm:srch}, we assume
that the weights are polynomially bounded.

\begin{corollary}\label{cor:srchalgos}
  The following search problems are $\CL$ computable:
\begin{enumerate}
\item The perfect matching search problem for planar graphs. 
\item The red-blue matching search problem for planar graphs.
\item The minimum-weight $r$-arborescence problem in polynomially
  weighted digraphs.
\end{enumerate}  
\end{corollary}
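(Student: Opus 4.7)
The plan is to derive each item as an instance of \Cref{cor:srch}: it suffices to exhibit, for each problem, a $\CL$ algorithm for the weighted decision version $\mcL_w$ with polynomially bounded weights, since the witness relation $R_\mcL$ is $\P$-checkable in all three cases (checking that a given edge set is a matching, a matching with exactly $k$ red edges, or an $r$-arborescence) and hence certainly in $\PSPACE$. The strategy is then to show that each $\mcL_w$ lies in $\NC$, and appeal to the chain $\NC\subseteq\TC^1\subseteq\CL$ from the preliminaries.

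For planar perfect matching I would invoke the Kasteleyn--Pfaffian method: compute a Pfaffian orientation in $\NC$, build the skew-symmetric Tutte matrix with entries $\pm y^{W(e)}$, and compute its Pfaffian as a univariate polynomial in $y$ of polynomially bounded degree; existence of a perfect matching of weight at most $w_0$ is then read off the coefficients of $y^0,\ldots,y^{w_0}$. The same construction handles planar red-blue matching by placing an additional indeterminate $x$ on the red edges: the Pfaffian becomes a bivariate polynomial of polynomial degree in each variable, and a matching with exactly $k$ red edges of weight at most $w_0$ exists iff some coefficient of $x^k y^j$ for $j\le w_0$ is nonzero. For minimum-weight arborescence, Lov\'asz's $\NC$ algorithm based on the matrix-tree theorem solves the weighted decision directly by reducing to a (weighted) determinant. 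In every case the central step is a determinant or Pfaffian computation, which lies in $\mathrm{GapL}\subseteq\TC^1\subseteq\CL$.

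One small technicality is that the arborescence instance arrives already equipped with (polynomially bounded) weights, so the catalytic isolation weight supplied by \Cref{thm:srch} has to be combined with the input weight. For this I would appeal to the remark immediately following the proof of \Cref{thm:srch}, which considers the combined weight $W_{\text{input}}\cdot m^{10}+W_i$: this is still polynomially bounded, and an $\arg\min$ under the combined weight is automatically an $\arg\min$ under $W_{\text{input}}$ because of the dominant scaling, while the added catalytic term still isolates among the set of minimum-weight arborescences.

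I do not anticipate a serious obstacle. The main detail to check is that coefficient extraction from a Pfaffian of polynomial-degree in one or two variables can be implemented in $\TC^1$: this follows by evaluating the Pfaffian (equivalently, a determinant) at a polynomial number of points and interpolating, each step being standard $\TC^1$ integer arithmetic. Combined with \Cref{cor:srch}, this immediately places all three search problems in $\CL$.
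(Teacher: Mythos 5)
Your proposal takes essentially the same route as the paper: for each problem you show the polynomially weighted decision version is in $\TC^1$ (hence $\CL$) by reducing to a determinant/Pfaffian with edge weights $w_e$ encoded as monomials $y^{w_e}$ and reading off coefficients, then apply \Cref{cor:srch}, using the remark after \Cref{thm:srch} to fold in the input weights for the arborescence case. The only difference is that for red-blue planar matching you spell out the bivariate-Pfaffian construction (extra indeterminate $x$ on red edges) explicitly, whereas the paper defers to the cited literature for placing the counting problem in $\TC^1$; both are correct and amount to the same thing.

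One small inaccuracy to flag: you appeal to ``$\NC\subseteq\TC^1$,'' which is not known (and not claimed in the preliminaries, which only give $\TC^1\subseteq\CL$). This does not affect your argument, because in the same paragraph you correctly observe that the operative step is a determinant/Pfaffian computation and coefficient extraction, which lies in $\GapL$, and $\GapL$ computations sit inside $\TC^1$ by standard Chinese-remaindering/iterated-arithmetic arguments. Just phrase the containment chain as $\GapL\subseteq\TC^1\subseteq\CL$ rather than going through all of $\NC$.
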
  

\begin{proof}
  \begin{enumerate}
  \item Kastelyn \cite{Kast} showed that counting matchings of planar
    graphs is in deterministic polynomial time by efficiently computing
    a Pfaffian orientation for the given planar graph. It is shown in
    \cite{Vazirani} that the same can be done for $K_{3,3}$-free graphs as
    well. Mahajan et al \cite{MSV} showed counting perfect matchings
    in planar graphs is even $\GapL$ computable and hence is in
    $\TC^1$. For polynomially weighted graphs, the idea of encoding a
    nonnegative integer weight $w_e$ of an edge $e$ of a planar graph
    as a univariate monomial $y^{w_e}$ allows us to compute the
    Pfaffian as a small degree univariate polynomial in $y$, where the
    coefficient of $y^{w_0}$ is the number of perfect matchings of
    weight $w_0$, which can be computed in $\TC^1$. It follows that
    the weighted decision problem can be solved in $\TC^1$ and hence
    $\CL$. Hence, \Cref{thm:srch} yields the claimed $\CL$
    search algorithm.
  \item Next, we consider the red-blue perfect matching problem for
    planar graphs. This is also known as the exact perfect matching
    problem. The input graph has edges colored red and blue and the
    problem is to find a perfect matching with exactly $k$ red edges
    (for a $k$ given with the input). This problem has a randomized
    $\NC$ algorithm even for general graphs but is not known to be in
    $\P$ (for general graphs). However, using \Cref{thm:srch} we
    can again show that in planar graphs searching for a red-blue
    perfect matching with exactly $k$ is also in $\CL$. This follows
    because the corresponding weighted decision problem (is there a
    perfect matching with exactly $k$ red edges of weight at most
    $w_0$ for a given weight assignment $W$) can be shown to be in
    $\TC^1$ as the corresponding counting problem is in $\TC^1$
    \cite{Yuster,MSV,AJMV}. This upper bound of $\CL$ is intriguing as the
    search version of exact red-blue perfect matchings in planar
    graphs is not even known to be in deterministic $\qNC$.


  \item We now consider the problem of searching for a minimum weight
    rooted $r$-arborescence in an input weighted directed graph (with
    polynomially bounded weights). We can use the \emph{directed
      matrix-tree theorem} (see, for example, \cite{Zeilberger})
    to count the number of weighted
    arborescences rooted at $r$ in such directed graphs (by the same
    trick of encoding the weight $w_e$ of each edge $e$ as the
    monomial $y^{w_e}$). From the resulting determinant we can read
    off the number of weighted $r$-arobrescences for each weight $w$
    as the coefficient of $y^w$. Now, as explained in the remark after
    Theorem~\ref{thm:srch}, we can consider additional weights to make
    the reduction of Theorem~\ref{thm:srch} and then
    \Cref{thm:srch} applicable. First, as the weighted
    decision problem requires only determinant computation, it is in
    $\TC^1$ and hence in $\CL$. Consequently, the search problem is
    also in $\CL$.




\end{enumerate}
\end{proof}  


\begin{remark}
  The upper bounds shown in Corollary~\ref{cor:srchalgos} are in fact
  $\CL\P$, since in our reduction, if we have saved enough space, we
  can run a polynomial-time algorithm to search for the corresponding
  witness in the above cases. The $\CL\P$ upper bound for these also follows directly since $\CL\P=\CL\cap\P$ \cite{CLMP}.

  In \cite{Oki}, a number of interesting algorithmic problems based on
  linear matroid parity are studied. We can see that the problem of
  counting weighted bases is in $\TC^1$ if the matrix representing the
  given matroid is a \emph{pfaffian parity}. It is shown in \cite{Oki}
  that some natural problems can be modeled as base search for linear
  matroid parity problem (with the pfaffian parity
  property). \Cref{thm:srch} will imply a $\CL$ upper bound
  for these problems as well.
\end{remark}


\begin{algorithm}
	\caption{SearchtoWeightedDecision($x$)}\label{alg:search}
	\begin{algorithmic}[1]
		\State \textbf{Input:} $E=[m],\mcC=(\tau_1=\bmW_1,\tau_2=\bmW_2,\ldots,\tau_N=\bmW_N)$ where $\bmW_i=(W_i(1),\ldots,W_i(m))$ is a weight assignment and each $W_i(j)$ is $\le n^2$ i.e. requires $2\log n$ bits to store.
		\State \textbf{Output:} If $x\in\mcL $, outputs a $F\in\search(E)$
		\If{$(x,\bmW_1,m^3)\notin \mcL_w$}
		\State Reject
		\EndIf
		\State $k\gets0$
		\For {$i\in [N]$}
		\State $k\gets i$
		\State Compute $W^{min}_i$ by searching for the least $w_0$ such that $(x,\bmW_i,w_0)=\mcL_w$.
		
		\State $flag\gets 1$, $index\gets 1$
		\While {$flag=1$}
		\State  $\bmW_i'$, $\bmW_i''$ are the same as $\bmW_i$ except at $index$ and $W_i'(index)\gets W_i^{min}+1$, $W_i''(index)\gets 0$
		\If {$(x,\bmW_i',W_i^{min})\in\mcL_w$ and $(x,\bmW_i'',W_i^{min}-W_i(index))\in\mcL_w$}
		\State $flag=0$
		\State break
		\EndIf
		\State $index\gets index+1$
		\EndWhile
		\If {$flag=1$}\Comment{$W_i$ is a min-isolating weight}
		\For {$e\in E$}
		\State Define $\bmW_i',\bmW_i''$ w.r.t. $e$ as before.
		\If {$(x,\bmW_i',W_i^{min})\in\mcL_w$}
		\State \Return $(e\in F)$ \Comment{$F$ be the min-unique set in $\search(E,W_i)$}
		\EndIf
		\EndFor
		\State $k\gets k-1$
		\State break
		\Else \Comment{$\bmW_i$ is not isolating}
		\State $\bmW_i^{i_0}$ is $(W_i(1),\ldots,W_i({i_0-1}),W_i({i_0+1}),\ldots,W_i(m))$ \Comment{where $i_0=index$}
		\State Replace $\tau_i$ with $\tau_i'\gets (index,\bmW_i^{index})$. 
		\EndIf
		\If{$i=N$}
		\State Use free space to brute force.
		\EndIf
		\EndFor
		\For {$i$ from $1$ to $k$}
		\State Recompute($\tau_i'$)\Comment{Call \Cref{alg:recomp}}
		\EndFor
		
	\end{algorithmic}
\end{algorithm}
\begin{algorithm}
	\caption{Recompute($\tau_i'$)}\label{alg:recomp}
	\begin{algorithmic}[1]
		\State \textbf{Input:} $E=[m],\tau_i'=(index,\bmW_i^{index})$ where $\bmW_i^{index}=(W_i(1),\ldots,W_i(m))$ except the $index$th position is absent.
		\State \textbf{Output:} $\bmW_i= (W_i(e_1),\ldots,W_i(e_n))$ i.e. the initial configuration of $\tau_i$.
		\State Let $i_0\gets index$, $\bmW_i'$ and $\bmW_i''$ are the same as $\bmW_i$ everywhere except $i_0$ and $W_i'({i_0})\gets m^3$, $W_i''({i_0})\gets 0$
		\State Compute $W'^{min}_i$ by searching for the least $w_0$ such that $(x,\bmW_i',w_0)\in\mcL_w$.
		\State Compute $W''^{min}_i$ by searching for the least $w_0$ such that $(x,\bmW_i'',w_0)\in\mcL_w$.
		\State Define $W_i({i_0})\gets W_i'^{min}-W_i''^{min}$
		\State $\tau_i=(W_i(w_1),\ldots,W_i(e_n))$
	\end{algorithmic}
\end{algorithm}

\section{$\CL$ with $\NP$ oracle}\label{sec:oracle}
In this section, we explore catalytic logspace computation augmented with an $\NP$ oracle. We first observe that the containment $\CL\subseteq\ZPP$ relativizes (we note that this is mentioned without
proof in \cite{CGMPS}. The proof is, in essence, very similar to that of the containment $\CL\subseteq\ZPP$ \cite{BCKLS14}.

\begin{lemma}\label{lemma:clzp}
  For any oracle $A\subseteq \Sigma^*$, we have that $\CL^A\subseteq \ZPP^A$.  \end{lemma}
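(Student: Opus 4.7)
The plan is to adapt the unrelativized argument of \cite{BCKLS14} showing $\CL \subseteq \ZPP$, threading the oracle $A$ through the proof without changing its structure. Given a $\CL^A$ machine $M$ with workspace $\mcO(\log n)$ and catalytic tape of size $c(n) = n^{\mcO(1)}$, the $\ZPP^A$ simulator on input $x$ samples a uniformly random string $\tau \in \{0,1\}^{c(n)}$, places it on the catalytic tape, and simulates $M^A(x,\tau)$ for at most $T(n)$ steps (to be fixed below), forwarding $M$'s oracle queries to its own $A$-oracle. If $M^A(x,\tau)$ halts within $T(n)$ steps the simulator outputs its answer; otherwise it outputs ``don't know''.

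The heart of the argument is the observation that whenever $\tau_1 \ne \tau_2$ the sets of configurations visited by $M^A(x,\tau_1)$ and $M^A(x,\tau_2)$ are disjoint. Indeed, $M$ is deterministic and each query to the fixed oracle $A$ has a unique answer, so any configuration shared between the two runs would force identical futures and hence identical final catalytic tape contents; by the catalytic restoration property this final content must equal both $\tau_1$ and $\tau_2$, a contradiction. Letting $r(n) = n^{\mcO(1)}$ bound the number of configurations obtainable by fixing the catalytic tape content (this counts state, worktape, input/catalytic/query tape heads, and the polynomially bounded query tape contents), the disjointness claim gives $\sum_\tau \mathrm{steps}(M^A(x,\tau)) \le 2^{c(n)} r(n)$, so the average running time over $\tau$ is at most $r(n)$. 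Markov's inequality then bounds the probability that a random $\tau$ causes $M^A(x,\tau)$ to run for more than $3 r(n)$ steps by $1/3$, and setting $T(n) = 3r(n)$ completes the argument.

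The only place where oracle access enters is in verifying that the configuration graph of $M^A$ is still functional (one successor per configuration), as required by the disjointness claim; this holds because $A$ is a fixed language, so each query has a deterministic response. Aside from this one-line observation, the counting step and the Markov bound are identical to the unrelativized proof, which is precisely why the paper remarks that the containment ``relativizes''. The only genuine subtlety to watch for when writing this up cleanly is to ensure the query tape is explicitly included in the configuration vector so that the bound $r(n) = n^{\mcO(1)}$ really is polynomial; because the query tape head moves only left to right and the tape is reset after each query, its contribution to the configuration count remains polynomial, and the argument goes through.
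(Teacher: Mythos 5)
Your proof is structurally identical to the paper's: the same disjointness of computation paths across distinct initial catalytic strings, the same configuration count, the same averaging, and the same Markov bound (you use $3r(n)$ and $1/3$ where the paper uses $n^2$ and $1/n$, a cosmetic difference).

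However, the one place where you deviate from the paper—your handling of the query tape—introduces an error rather than fixing a gap. You write that $r(n)$ should include ``the polynomially bounded query tape contents'' and that ``its contribution to the configuration count remains polynomial.'' This is not correct: the query tape can hold any of $2^{\poly(n)}$ distinct strings, so including its contents in the configuration vector makes $r(n)$ exponential, at which point the bound $\sum_\tau \mathrm{steps} \le 2^{c(n)} r(n)$ yields an exponential expected running time and the Markov step fails. The paper avoids this by counting only pairs $(u,\tau)$ (worktape configuration, catalytic tape contents), with a footnote asserting that the query and the continuation of the computation are determined by the current $(u,\tau)$-configuration; that assertion is itself a little delicate, since the machine may interleave query writes with catalytic-tape modifications so that the query depends on more than the instantaneous $(u,\tau)$, but the paper's count at least stays polynomial. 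If you want to make the argument fully airtight without blowing up the count, the standard device is to work with the ``clean'' configurations (those at which the query tape is empty), observe that the evolution from one clean configuration to the next is deterministic given $A$, apply the disjointness and averaging argument to clean configurations only, and separately bound the length of each query-writing phase; your current writeup does not do this and, as written, does not yield a polynomial $r(n)$.
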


\begin{proof}
  Let \(M\) be a deterministic $\CL$ oracle machine with oracle access to $A$.  Consider an input \(x\) of length \(n\), and let \(p(n)\) be the polynomial bound on the size of the catalytic tape of \(M\). For each initial catalytic string \(\tau\in\{0,1\}^{p(n)}\) write \(M(x,\tau)^A\) for the computation of \(M^A\) on input \(x\) with auxiliary start-string \(\tau\). By definition, \(M(x,\tau)\) begins in the configuration \((\mathrm{start},\tau)\) and halts in \((\mathrm{acc},\tau)\) or \((\mathrm{rej},\tau)\); in particular the auxiliary tape is restored to \(\tau\) on termination. Let \(\ell(\tau)\) denote the number of steps of the run \(M(x,\tau)^A\).
	
		\medskip
	
	\noindent\textbf{Claim.} Let $\tau_1,\tau_2,\tau_3,\tau_4$ be any four distinct catalytic tape
contents, and\footnote{Here, by a configuration $(u,\tau)$ we mean
		$u$ to be the logspace worktape contents, where $u$ also includes the head positions on the input tape, worktape and catalytic tape, and $\tau$ denotes the catalytic tape contents.} 
$(u,\tau)$ be any given configuration. Then, $(u,\tau)$ cannot lie on the computation paths (that is, the sequence of configurations visited) of all four computations $M(x,\tau_1)^A,M(x,\tau_2)^A,M(x,\tau_3)^A$ and $M(x,\tau_4)^A$.
	
\noindent\emph{Proof of claim.} Suppose, for the sake of contradiction, that for four distinct
	\(\tau_1,\tau_2,\tau_3,\tau_4\), the four computations $M(x,\tau_i)^A$ for $i\in\{1,2,3,4\}$ reach the common configuration $(u,\tau)$. Let $M_i$ denote $M(x,\tau_i)^A$. For each $M_i$, we refer to the set of configurations reached between writing the first bit of a query and receiving the answer bit to be the corresponding query phase. Thus, for each $M_i$ there are as many query phases as
the number of queries made during that computation. Now, consider the following two exhaustive cases:
	\begin{enumerate}
		\item At configuration $(u,\tau)$ at least two of the $M_i$'s are not in a query phase. Suppose $M_1$ and $M_2$ are these runs of the machines. But then their subsequent computations after $(u,\tau)$ are entirely determined by $(u,\tau)$. Hence both $M(x,\tau_1)^A$ and $M(x,\tau_2)^A$ cannot end in the distinct states $(\mathrm{final},\tau_1)$ and $(\mathrm{final},\tau_2)$, contradicting the definition of catalytic computation. 
		\item At least three of the $M_i$'s are in a query phase at $(u,\tau)$, say $M_1, M_2,$ and $M_3$. Then these three computations therefore are in the same configuration $(u',\tau')$ right before they read the corresponding answer bits, say $b_1,b_2,b_3\in\{0,1\}$, to the respective query they each made to oracle $A$. But two of these bits have to be the same, say, $b=b_1=b_2$. But then the next configuration (and therefore the entire subsequent configurations) for the computation paths of $M_1$ and $M_2$ is determined by $(u',\tau')$ and $b$. This again contradicts the definition of catalytic computation, since both $M_1$ and $M_2$ cannot end at their corresponding initial catalytic states.
\end{enumerate}
 \(\square\)

	
	\medskip
	
	Observe that the total number of logspace worktape configurations (including all
three tape head positions) possible is $2^{\mcO(\log n)}$. 
Thus the total number of configurations of the form $(u_0,\tau_0)$ of $M$ is $2^{p(n)+\mcO(\log n)}$. Denote this quantity by $C$.
	
Let the computation \(M(x,\tau)^A\) have \(\ell(\tau)\) distinct configurations. From the above claim, we have that each configuration $(u,\tau)$ can be shared by at most three distinct computation paths (for distinct initial catalytic strings), and therefore:
	\[
	\sum_{\tau\in\{0,1\}^{p(n)}} \ell(\tau) \le 3C =3\cdot  2^{p(n)+\mcO(\log n)}.
	\]
	Dividing by \(2^{p(n)}\) yields the expected run length for a uniformly chosen
	\(\tau\):
	\[
	\mathbb{E}_{\tau}[\ell(\tau)] \le \frac{3\cdot2^{p(n)+\mcO(\log n)}}{2^{p(n)}} = 3\cdot 2^{\mcO(\log n)} = 3n^{\mcO(1)}.
	\]
	Suppose $\mathbb{E}_{\tau}[\ell(\tau)]\le 3n^c$.
	By Markov's inequality,
	\[
	\Pr_{\tau}\big[\ell(\tau) > 3n^{c+1}\big] \le \frac{\mathbb{E}[\ell(\tau)]}{3n^2}
	\le \frac{3n^c}{3n^{c+1}} = \frac{1}{n}.
	\]
	Consequently a uniformly random \(\tau\) yields a run that halts within \(3n^{c+1}\)
	steps with probability at least \(1-\tfrac{1}{n}\).
	
	Consider the following $\ZPP^A$ algorithm on input \(x\):
	pick \(\tau\in\{0,1\}^{p(n)}\) uniformly at random and simulate \(M(x,\tau)^A\)
	for at most \(3n^{c+1}\) steps. If the simulation halts within \(3n^{c+1}\) steps,
	output the run's accept/reject result. Otherwise output ``do not know/$\bot$''.
	Clearly, the above runs in polynomial time. Thus we have shown that $\CL^A\subseteq\ZPP^A$.
\end{proof}

\subsection*{$\CL$ with nonadaptive oracle access}

Now, we define and discuss the oracle class $\CL^A_{1\text{-}round}$
(and then the class $\CL^A_{k\text{-}round}$) for an oracle
$A\subseteq \Sigma^*$. That is, the class of languages accepted by an
oracle $\CL$ machine that can make \emph{one round} of polynomially
many queries \emph{in parallel} to the oracle $A$. The natural model
would work as follows: On input $x$, the $\CL$ oracle machine $M$
during its computation enters the query state; then it writes down, on
the one-way query tape, the queries $q_1,\ldots,q_m,
m=\poly(|x|)$. After writing down all the queries, it enters an
``answers'' state to obtain on the answer tape all the answer bits,
namely, $A(q_1),\ldots,A(q_m)$. Then $M$ can read these answer bits
once from left to right after which the answer tape is reset, and then
the machine resumes its subsequent computations. However, this
definition makes this oracle $\CL$ machine model as strong as
$\PSPACE$, even for a trivial oracle like $A=\{1\}$, and hence not
useful for our purpose.\footnote{To see this, suppose the catalytic
  tape contents are $\tau=\tau_1\ldots\tau_m$.  Then, $M$ can write
  down the $m$ queries $q_i=\tau_i$ on the query tape and the answer
  tape would contain $\tau$. Then $M$ can use the catalytic tape for
  any $\PSPACE$ computation and finally $M$ can restore the catalytic
  tape from the answer tape.}


We will define a natural restriction of the $\CL^A_{1\text{-}round}$
machine model, so that the class $\CL^A_{1\text{-}round}$ is contained
in $\ZPP^A$, and then we describe some interesting upper bound results
using this model with an $\NP$ oracle.


Suppose the $\CL^A_{1\text{-}round}$ machine $M$ enters the query
state $Q$ and makes the queries $q_1,\ldots,q_m$ on the query tape, 
and then exits the query state. Let us call this as the \emph{query 
phase} of the machine $M$. We place the restriction that during the
query phase the machine $M$ performs an $\FL$ computation, treating 
the input and catalytic tape as a read-only input tape and the query tape as 
the one-way write-only output tape using the $O(\log n)$ size worktape
for computation. Next, suppose the $m$ answer bits to the queries
written on the answer tape by the oracle are $A(q_1),A(q_2),\ldots,A(q_m)$. 
The machine has some configuration\footnote{Here, by a configuration $(u,\tau)$ we mean
  $u$ to be the logspace contents and $\tau$ the
  catalytic contents.} $(u_0,\tau_0)$ immediately after exiting query
state. Then, before $M$ reads the answer bit $A(q_1)$, $M$ is
restricted to can only perform the computation of a logspace
transducer $L_0$ on $(u_0,\tau_0)$ to obtain the next configuration
$(u_1,\tau_1)$. Moreover, the logspace transducer is further
restricted as follows: $L_0$ has a single input/output tape consisting
of $z=z_1\ldots z_p$ and a work tape of $O(\log p)$ bits where both
tapes are read-write (notice that there is no separate output
tape). To begin with, both the read and write heads are positioned on
the left end of the input/output tape. Their movements are constrained
as follows throughout the computation of $L_0$: The write-head can
move only to the right as it writes; The read-head can move in both
directions but never to the left of the write-head. That is, the
read-head can never read a tape symbol written by the write-head
(except for the last written symbol).


For instance, suppose $f\in \FL$ such that it maps
$\Sigma^p\to \Sigma^p$ for all $p$. Furthermore, if
$f(x)=f_1f_2\ldots f_p$ and $f_i$ depends only on the $(n-i)$-length
suffix $x_{i+1}x_{i+2}\ldots x_p$ of $x$, then $f$ can be computed by
this restricted $\FL$ model. We will call such an $\FL$ machine a
$\onew$-$\inplacefl$ machine. Now, returning to the computation of the
$\CL^A_{1\text{-}round}$ machine $M$: After $M$ performes $L_0$,
suppose it is in configuration $(u_1,\tau_1)$.  Next, it reads the
first answer bit $A(q_1)$ and it can again perform a
$\onew$-$\inplacefl$ computation $L_1$ to reach configuration
$(u_2,\tau_2)$ before it reads the next answer bit. Similarly, for
each subsequent answer bit $A(q_i)$ that it reads from the answer
tape, one at a time, it can perform a $\onew$-$\inplacefl$ computation
$L_i$ to obtain the next configuration
$(u_{i+1},\tau_{i+1})$. Finally, after all the answer bits are read,
$M$ resumes its $\CL$ catalytic computation. Moreover, when $M$ makes the query $q_i$ for $i\in[m-1]$, is is again allowed only a $\onew$-$\inplacefl$ computation $L_{i}'$ on its current configuration, before it makes the next query $q_{i+1}$.

\begin{remark}
  Our definition of $\onew$-$\inplacefl$ is motivated by
  the notion of $\inplacefl$ \cite{CGMPS}. In the definition of
  $\inplacefl$, the extra restriction we have imposed on the movements
  of the read/write heads of the input tape is dropped. Indeed, in
  \cite{CGMPS} it is shown that there are functions in $\inplacefl$
  that are not in $\FL$. However, it is easy to see that
  $\onew$-$\inplacefl\subseteq\FL$, as follows: Suppose $L$ is a
  $\onew$-$\inplacefl$ machine. Consider the
  $\FL$ machine $\hat L$ with same input tape as machine $L$, and a
  separate output tape. The read head of the input tape of $\hat L$ is
  the read head of the input tape of $L$, and the write head of the
  output tape of $\hat L$ is the write head of the input tape of $L$.
  Since the read head of the input tape of $L$ never reads an index
  that has already been written on, $\hat L$ can clearly simulate $L$.
\end{remark}

Now, we can formally define the class $\CL^A_{k\text{-}round}$ for
$k\in\bbN$, as follows.


\begin{definition}[nonadaptive oracle access]\label{def:or}
  The class $\CL^A_{1\text{-round}}$ is the set of languages $L$ accepted
  by $\CL^A_{1\text{-round}}$ machines that makes one round of
  non-adaptive queries to the class $A$. That is,
  when the $\CL^A_{1\text{-round}}$ machine for $L$ on input $x$
  enters a query state, it writes down the queries
  $q_1,\ldots,q_m, m=\poly(|x|)$ (by an $\FL$ computation, treating the input and catalytic 
tape as a read-only input tape and the query tape as the output tape), and 
gets access to the answers $A(q_1),\ldots,A(q_m)$ from $A$ which it can read 
once from left to right. Between reading $A(q_i)$ and $A(q_{i+1})$, for each
  $1\le i\le m-1$, and before reading $A(q_1)$, $M$ performs a $\onew$-$\inplacefl$ 
computation on its configuration (work tape, catalytic tape).
  We similarly define the class $\CL^A_{k\text{-round}}$ for each
  $k\in\mathbb{N}$, in which the $\CL^A_{k\text{-round}}$ machine accepting
  $L$ can make $k$ rounds of non-adaptive queries with the same
  restrictions.
\end{definition}

\begin{lemma}\label{lem:cltoz}
$\CL^A_{k\text{-}round}\subseteq \ZPP^A$ for any oracle $A$. 
\end{lemma}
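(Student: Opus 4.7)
The plan is to mirror the proof of \Cref{lemma:clzp}, leveraging the restrictions imposed in \Cref{def:or} to keep the total configuration count under control. The overall $\CL^A_{k\text{-round}}$ machine $M$ remains deterministic: outside query/answer phases it performs standard $\CL$ transitions, and inside them it performs deterministic $\FL$ and $\onew$-$\inplacefl$ computations, with oracle answers depending only on the queries written. Thus the disjointness claim from \Cref{lemma:clzp} carries over verbatim: if the runs $M(x,\tau)^A$ and $M(x,\tau')^A$ for $\tau\neq\tau'$ ever share a full configuration, they continue identically and halt with equal catalytic tapes, contradicting $\tau\neq\tau'$.

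Next, I would split the run length $\ell(\tau)$ into ``normal'' steps $T_{\text{normal}}(\tau)$ (those outside any query/answer phase) and ``query/answer'' steps $T_{\text{qa}}(\tau)$, and bound each sum over $\tau$ separately. At a normal step, the query and answer tapes are empty (with their heads at the origin), so the full configuration is captured by $(\text{state},\text{worktape},\text{catalytic tape},\text{head positions})$, of which there are at most $2^{p(n)+O(\log n)}$. Disjointness across $\tau$ therefore gives $\sum_{\tau} T_{\text{normal}}(\tau)\le 2^{p(n)+O(\log n)}$. For the query/answer steps, each of the $k$ rounds writes $m=\poly(n)$ queries alternating with $\onew$-$\inplacefl$ transducers that run in time $\poly(n)$ (since $\FL$ is polynomial time in its input size $O(p(n))$), and then reads back $m$ answer bits with similar transducers between reads. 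Hence $T_{\text{qa}}(\tau)=\poly(n)$ for every $\tau$, giving $\sum_{\tau} T_{\text{qa}}(\tau)\le 2^{p(n)}\cdot\poly(n)$.

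Adding the two bounds gives $\sum_{\tau}\ell(\tau)\le 2^{p(n)}\cdot\poly(n)$, so $\mathbb{E}_\tau[\ell(\tau)]\le \poly(n)$; by Markov's inequality, for a suitable constant $c$, $\Pr_\tau[\ell(\tau)>n^c]\le 1/n$. The $\ZPP^A$ algorithm then samples $\tau\in\{0,1\}^{p(n)}$ uniformly at random, simulates $M(x,\tau)^A$ for $n^c$ steps using oracle access to $A$, and outputs the run's verdict if it halts within this budget, and $\bot$ otherwise.

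The hard part is to argue the disjointness cleanly even though the catalytic tape may change during $\onew$-$\inplacefl$ transitions inside query/answer phases. I expect the resolution is to restrict the disjointness argument to moments when $M$ is genuinely outside every query/answer phase, so that the query and answer tapes are in their canonical empty configuration, and then to use the polynomial-time bounds on $\FL$ and $\onew$-$\inplacefl$ computations to absorb all query/answer-phase steps into a $\poly(n)$ additive term per run. This is precisely where the restrictions of \Cref{def:or} --- one-way $\FL$ writing of queries and $\onew$-$\inplacefl$ between answer reads --- earn their keep.
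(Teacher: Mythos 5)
Your proposal is correct and takes essentially the same route as the paper: both proofs restrict the disjointness argument to configurations outside the query/answer phases (where the query and answer tapes are in a canonical state), bound the number of such configurations by $2^{p(n)+O(\log n)}$, observe that each query/answer round contributes only $\poly(n)$ steps since the phase computations are in $\FL$, and conclude via Markov's inequality and the usual $\ZPP^A$ sampling. Your explicit split of the run length into $T_{\text{normal}}$ and $T_{\text{qa}}$ is a slightly cleaner presentation of the same accounting the paper does informally.
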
  

\begin{proof}
  Recall from \Cref{lemma:clzp} that for any oracle $A$,
  $\CL^A\subseteq \ZPP^A$. It is also not hard to see from its proof
  that $\CL^A_{k\text{-}round}\subseteq \ZPP^A$. We sketch the proof
  here, while pointing out the key differences from \Cref{lemma:clzp}.
  Let $M$ be a $\CL^A_{k\text{-}round}$ machine with input $x$ and
  catalytic configuration $\tau$. Suppose the computaion path of
  $M(x,\tau)^A_{k\text{-}round}$ reaches $(u_0,\tau_0)$ when it starts to make
  the first round of queries (i.e. at this configuration, it enters the query state), and $(u_1',\tau_1')$ is the
  configuration after $M$ has processed all the answer bits. Similarly
  $(u_i,\tau_i)$ is the confuguration when $M$ starts to make the $i+1^{th}$
  round of queries, and $(u_{i+1}',\tau_{i+1}')$ is the configuration
  after $M$ has processed all the answer bits for this round of queries. Then consider the set
  of configurations of $M(x,\tau)^A_{k\text{-}round}$ as all the
  configurations in the computation path except the configurations
  between $(u_i,\tau_i)$ and $(u_{i+1}',\tau_{i+1}')$ for all
  $i\le k$. Then w.r.t this definition, the set of configurations of
  $M$'s computation path for distinct initial catalytic configurations
  $\tau$ and $\tau'$ will be disjoint (follows from case $1$ of the Claim in the proof of \Cref{lemma:clzp}). Moreover, the computations done
  during making the queries and processing the answer bits of a query round are all in $\FL$
  and thus in polynomial time. Thus, we will again have that the
  expected length of a computation path of the machine
  $M^A_{k\text{-}round}$ is polynomially bounded. Therefore, we can
  again simulate $M$ using a $\ZPP^A$ machine by randomly chosing an
  initial catalytic tape configuration and simulating $M$. Moreover,
  this $\ZPP^A$ machine can ask each round of queries of $M$ one by one
  and store the answers, since we do not have any space restrictions
  on $\ZPP^A$ machines. This shows that
  $\CL^A_{k\text{-}round}\subseteq \ZPP^A$.
\end{proof}

\begin{remark}
	\begin{enumerate}
		\hfil{}
		\item We can also consider the stronger model of $\CL^A_{k\text{-}round}$ as follows: When the machine $M$ is processing an answer phase after making a round of queries, it is allowed to make an $\inplacefl$ computation on its current configuration $(u,\tau)$ to get the next configuration $(u',\tau')$. It is also easy to see that \Cref{lem:cltoz} holds for this stronger model as well.

 \item  We notice that for $\NP$ oracles we even have the inclusion
  $\CL^\NP_{k\text{-}round}\subseteq \ZPP^{\NP[\mcO(k\cdot \log
    n)]}$. This is because one round of $\NP$ queries made by the
  $\CL$ machine can be simulated by a $\ZPP^{\NP[\mcO(\log n)]}$
  computation that can use the $\NP$ oracle to do a binary search for
  the number of $\SAT$ queries that are satisfiable.
  	\end{enumerate}
\end{remark}



Observe that \Cref{thm:srch} already implies that $\fF\in \CL^{\NP}$ since the weighted $\decision$ oracle for $\SAT$ can be simulated using an $\NP$ oracle.  Next, we show that the Valiant-Vazirani \cite{VV} reduction, showing that $\SAT$ is randomized polynomial-time reducible to unique-$\SAT$, can be derandomized in $\CL$, with nonadaptive queries to an $\NP$ oracle.

\begin{lemma}
	There is a $\cltonp$ algorithm $\mcA$ that takes as input  a boolean formula $\phi$, and returns another boolean formula of the form $\mcA(\phi)=\phi\wedge\phi'$ such that
	\begin{itemize}
		\item if $\phi$ is satisfiable, then $\mcA(\phi)$ has exactly one satisfying assignment.
		\item if $\phi$ is not satisfiable, then so is $\mcA(\phi)$.
	\end{itemize}
\end{lemma}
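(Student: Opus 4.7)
The plan is to derandomize the Valiant--Vazirani reduction via catalytic isolation, following the template of \Cref{thm:srch}. Writing $\phi$ over variables $x_1,\ldots,x_n$ and letting $\mcF_\phi=\{F_y:\phi(y)=1\}\subseteq 2^{[n]}$, the target output is $\mcA(\phi)=\phi\wedge\phi'$, where $\phi'$ is obtained by Tseitin-encoding the $O(\log n)$-bit addition circuit computing $\sum_i W^*(i)x_i$ and conjoining the equality $\sum_i W^*(i)x_i=w^*$, for some polynomially bounded $W^*:[n]\to[0,n^c]$ and $w^*\in[0,n^{c+1}]$. Since the Tseitin auxiliary variables are functionally determined by $x_1,\ldots,x_n$, satisfying assignments of $\phi\wedge\phi'$ are in bijection with the minimum-weight satisfiers of $\phi$ under $W^*$; hence if $W^*$ min-isolates $\mcF_\phi$ and $w^*$ is its minimum weight, $\mcA(\phi)$ has exactly one satisfying assignment, whereas if $\phi$ is unsatisfiable so is $\mcA(\phi)$. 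It therefore suffices to produce such a pair $(W^*,w^*)$ in $\cltonp$.

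To do so, I would interpret the catalytic tape as a list $\tau_1=W_1,\ldots,\tau_N=W_N$ of candidate polynomially bounded weight functions, exactly as in \Cref{alg:search}. In the first parallel round of $\NP$ queries, for every tuple $(i,w_0,e,b)\in[N]\times[0,n^{c+1}]\times[n]\times\{0,1\}$ I issue the query ``does there exist $y$ with $\phi(y)=1$, $y_e=b$ and $W_i(y)\leq w_0$?''. The restricted $\FL$ answer-processing permitted by \Cref{def:or} is enough to read off every $W_i^{\min}$ and, via \Cref{claim:1}, to decide whether $W_i$ min-isolates $\mcF_\phi$ and, if not, to extract a threshold index $e_i$. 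If any $W_i$ isolates, set $W^*:=W_i$ and $w^*:=W_i^{\min}$, write $\phi\wedge\phi'$ on the output tape in one left-to-right pass, and halt; in this branch no second query round is needed and the catalytic tape is never modified.

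Otherwise, I perform the compression step of \Cref{alg:search}, rewriting each $\tau_i$ as $(e_i,W_i^{e_i})$ and thereby freeing $\Omega(N\log n)$ bits of contiguous catalytic space. With $N$ chosen large enough, this region holds both a brute-force enumeration of polynomially bounded weight assignments (by the usual isolation lemma applied to $\mcF_\phi$, since $|\mcF_\phi|\leq 2^n$, at least one such assignment min-isolates $\mcF_\phi$) and the workspace for the Recompute step of \Cref{alg:recomp}. The second parallel round of $\NP$ queries then simultaneously verifies isolation and minimum weight for the enumerated candidates to produce $(W^*,w^*)$ and issues the two queries per $i$ needed by Recompute to restore each $\tau_i$ from its compressed form $(e_i,W_i^{e_i})$.

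The main obstacle is precisely the two-round restriction: the underlying procedure of \Cref{thm:srch} is fully adaptive, and the fallback and restoration phases each seem to introduce additional adaptivity. The observation that unlocks the two-round structure is that after the first round the computation branches only on the single global Boolean ``some $W_i$ isolates'', and in either branch every subsequent $\NP$ query is determined by $\phi$ together with a fixed catalytic state (the original tape or its compressed version) rather than by a previous answer. Consequently all remaining queries can be issued in parallel, and the restricted $\FL$ postprocessing of \Cref{def:or} is strong enough both to stitch the answer bits into $\mcA(\phi)$ on the one-way output tape and to drive the Recompute step that restores the catalytic tape.
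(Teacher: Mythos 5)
Your proposal follows the paper's strategy closely — interpret the catalytic tape as a list of weight functions, detect isolation or a threshold index with one round of $\NP$ queries, compress if no $W_i$ isolates, and use a second round only to restore the tape. The isolating branch and the overall two-round skeleton are correct.

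The gap is in your fallback branch. You claim that, after compression, ``the second parallel round of $\NP$ queries then simultaneously verifies isolation and minimum weight for the enumerated candidates to produce $(W^*,w^*)$.'' There are $(n^c+1)^n$ polynomially bounded weight functions $W^*:[n]\to[0,n^c]$, so that ``enumeration'' is exponential, and a single parallel round of polynomially many queries cannot check them all; moreover, choosing the first candidate that works is inherently adaptive, so it cannot be folded into one non-adaptive round even if the count were small. Your stated ``observation that unlocks the two-round structure'' (that each query depends only on $\phi$ and the fixed compressed tape state) does not resolve this, since it says nothing about the number of candidates that must be examined.

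The point you are missing is that \emph{no} $\NP$ queries are needed in the fallback at all. Once the tape has been compressed, the base machine has $\Omega(N\log n)$ free catalytic bits and can run a pure $\PSPACE$ computation: $\phi(y)$ is polynomial-time checkable, so the machine can simply enumerate $y\in\{0,1\}^n$ and directly find a satisfying assignment $S$, then output $\phi\wedge(x_j=1\iff j\in S)$ in one left-to-right pass. (This is what the paper does; it makes the auxiliary-variable/Tseitin machinery unnecessary in the fallback.) Alternatively, if you insist on outputting a weight-constraint $\phi'$ as in your isolating branch, the $\PSPACE$ phase can likewise deterministically search for an isolating $W^*$ and compute $w^*$ without any oracle help — but either way, the second round of $\NP$ queries is reserved exclusively for the Recompute step, i.e.\ $2N$ queries to recover each $W_i(e_i)$ from the compressed $(e_i,W_i^{e_i})$. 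With that correction your argument matches the paper's proof.
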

\begin{proof}
	Let $\phi$ be the input formula on the literals $X=x_1,x_2,\ldots,x_n$ and $\neg X=\neg x_1,\ldots,\neg x_n$. Let $\mcC=(W_1,W_2,\ldots,W_N)$ be the catalytic tape configuration where $N=\poly(n)$ will be suitably chosen. Here $W_i=(W_i(1),\ldots,W_i(n))$ is a weight assignment $W_i:[n]\to [n^2]$ on the literals. For a formula $\psi$, and a weight assignment $W:[n]\to [n^2]$, define the query$(\psi,W,w^*)$:
	\[\exists S: (x_j=1\iff j\in S)\wedge \psi(\bar x)=1\wedge W(S)=w^*.\] Again, the idea is to either find an isolating weight assignment or a threshold literal. The description of $\mcA$ is as follows:

        \begin{itemize}
        \item First round of $\NP$ queries: For all $i\in N$ do the following in: For all $w^*\in [n^3],j\in [n]$, ask the $\NP$ oracle queries of the form $q_1=$query$(\phi,W_i,w^*)$, $q_2=$query$(\phi\wedge(x_j=1),W_i,w^*)$, and $q_3=$query$(\phi\wedge (x_j=0),W_i,w^*-W_i(j))$. Let $b_1,b_2,b_3$ be the bits answered by the $\NP$ oracle. If $b_1$ is false for all $w^*$, we know that $\phi$ is not satisfiable, so return the empty formula as $\mcA(\phi)$ and halt. Otherwise, from the various answer bits $b_1$, corresponding to distinct values of weight $w^*$, we can infer $W_i^{min}$ to be the least value of $w^*$ for which $b_1=1$. Now, suppose for some $i$, $w^*=W_i^{min}$: $\forall j$ exactly one of $b_2$ and $b_3$ is true. In this case return $\mcA(\phi)=\phi\wedge \{W_i\}\wedge S=\{j:x_j=1\}\wedge W(S)=W_i^{min}$ and halt ($W_i$ is the isolating weight assignment). Otherwise, for each $i$ consider the least $j$ such that both $b_2$ and $b_3$ are true (i.e. $j$ is a threshold literal). Replace $W_i$ with $(j,W_i^j)$ where $W_i^j=(W_i(1),\ldots,W_i(j-1),W_i(j+1),\ldots,W_i(n))$ (note that this transformation can be performed in $\onew$-$\inplacefl$ since $j$ is written on the logspace worktape, and we need one left to right scan of the catalytic tape to shift the weights as necessary and write $j$ in the initial part of the block). Observe that in the process, we have saved $N\log n$ bits of space. Now we can run a $\PSPACE$ algorithm to go over all assignments of $\phi$. Let $S$ be such an assignment i.e. $x_j=1\iff j\in S$. Then return $\mcA(\phi)=\phi\wedge (x_j=1\iff j\in S)$ and move on to the next phase.
	 	\item Second round of  $\NP$ queries: At this stage for all $i\in [N]$ we have $(j,W_i^j)$, and we want to reconstruct $W_i$ from this using the $\NP$ oracle. Let $W_i^j(j)=0$. For all $i,w^*$, ask queries of the form $q_4=$query$(\phi\wedge(x_j=1),W_i^j,w^*)$, and $q_5=$query$(\phi\wedge (x_j=0),W_i^j,w^*)$ to which the answers are $b_4$ and $b_5$. Let $W_i^{min}$ be the least value of $w^*$ such that $b_5=1$, and $w_0$ be the least value of $w^*$ such that $b_4=1$. Then we have that $W_i(j)=W_i^{min}-w_0$. Hence, we have reconstructed $W_i$ completely.
	 \end{itemize}
	
\end{proof}

\begin{theorem}\label{thm:fsat}
	$\fF\in \cltonp$
\end{theorem}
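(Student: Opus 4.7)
The plan is to show that the preceding Valiant-Vazirani-style derandomization lemma already gives us the tools to recover a satisfying assignment for $\phi$, and that the two rounds of $\NP$ queries used in that lemma suffice to produce the assignment itself (and not just an isolating formula $\mcA(\phi)$).

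First I would re-examine the first round of queries made in the proof of the preceding lemma. For each catalytic weight assignment $W_i$, the oracle returns, in parallel, the bits $b_1,b_2,b_3$ for the queries query$(\phi,W_i,w^*)$, query$(\phi\wedge(x_j=1),W_i,w^*)$, and query$(\phi\wedge(x_j=0),W_i,w^*-W_i(j))$, for all $w^*\in[n^3]$ and $j\in[n]$. From the $b_1$ bits one reads off $W_i^{\min}$ as the least $w^*$ for which $b_1=1$. Now the key observation is: once $W_i$ is identified as isolating, the unique minimum-weight satisfying assignment $S^*$ is determined coordinatewise from the answer bits already in hand -- namely, $x_j=1$ in $S^*$ if and only if the $b_2$-bit corresponding to $(i,j,w^*=W_i^{\min})$ is true. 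Thus, in the isolating case, no further queries are needed to extract the assignment; one simply scans the answer tape of the first round and writes the bits of $S^*$ to the output tape. This scan can be carried out in a $\onew\text{-}\inplacefl$ manner between consecutive answer bits, since the bits of the output are produced in the natural order as $j$ increases.

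Next I would handle the non-isolating case. If no $W_i$ is isolating, the preceding lemma shows that we have compressed the catalytic tape by $N\log n$ bits, and can now brute force over $\{0,1\}^n$ to find a satisfying assignment of $\phi$ in the freed $\PSPACE$ workspace -- this assignment is then written to the output tape. Finally, a second round of parallel $\NP$ queries (the $q_4,q_5$ queries of the preceding lemma) is used solely to reconstruct the original weights $W_i$ from the compressed encodings $(j,W_i^j)$, thereby restoring the catalytic tape. The output has already been produced, so no further processing of these answer bits is required beyond the restoration.

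The conceptually delicate part -- and really the only thing to check -- is that the output-producing logic between answer bits is realized as a $\onew\text{-}\inplacefl$ computation on the work tape plus catalytic tape, as demanded by \Cref{def:or}. For each pair $(i,j)$ we must look at $b_2$ for $w^*=W_i^{\min}$ (a value derived from earlier $b_1$ answers) and emit $x_j$'s value. Since $W_i^{\min}$ is at most $O(\log n)$ bits and fits on the worktape, and the output bits are emitted strictly left-to-right in the order $j=1,2,\ldots,n$ (for the chosen isolating index $i$), the bookkeeping fits in $O(\log n)$ space and respects the one-way, in-place constraints. Putting the two rounds together gives $\fF\in\cltonp$.
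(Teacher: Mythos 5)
Your proposal follows the paper's proof essentially verbatim: both take the preceding Valiant--Vazirani derandomization lemma and observe that, when an isolating $W_i$ is found, the assignment can be read off coordinatewise as $x_j=1\iff b_2=1$ at $w^*=W_i^{\min}$, while in the non-isolating case the freed catalytic space is used to brute-force a satisfying assignment in $\PSPACE$, with the second query round serving purely for tape restoration. Your added discussion of the $\onew$-$\inplacefl$ constraint is a useful clarification of a detail the paper leaves implicit, though note that one must still explain how the assignment bits for the isolating $W_i$ are recovered after the isolation check (which consumes the same $b_2$ answer bits) has already confirmed isolation --- a subtlety the paper also glosses over.
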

\begin{proof}
	Let $\phi$ be the input formula over the literals $X=x_1,x_2,\ldots,x_n$ and $\neg X=\neg x_1,\ldots,\neg x_n$. Consider the following modified version of $\mcA$:
	\begin{itemize}
		\item First round of $\NP$ queries: Suppose for some $i$, $w^*=W_i^{min}$: $\forall j$ exactly one of $b_2$ and $b_3$ is true. In this case, return a satisfying assignment as $x_j=1\iff b_2=1$, and halt. Otherwise proceed as in $\mcA$, notice that in $\PSPACE$, we can return any satisfying assignment.
		\item Second round of $\NP$ queries: Identical as in $\mcA$.
	\end{itemize}
\end{proof}

We next show that $\BPP$, $\MA$ and $\ZPP^{\NP[1]}$  are contained in $\cltonp$,
building on the construction of pseudorandom generators based on hardness versus randomness \cite{NW,IW}. We recall the following well-known theorem of
Impagliazzo and Wigderson: If there is a language $L\in\E$ such that for almost all input lengths $n$, $L^{=n}$ requires\footnote{Nisan and Wigderson \cite{NW}
  showed this result under the seemingly stronger average-case hardness assumption for $L$.} boolean circuits of size $2^{\eps n}$ for some $\eps\in (0,1)$ then $\BPP=\P$. It was observed by Goldreich and Zuckerman \cite{GZ} that, as most $n$ bit strings interpreted as truth-tables $T:\{0,1\}^{\log n}\to \{0,1\}$ require circuits of $n^{\eps}$ for any constant $\eps\in (0,1)$ by Shannon's counting argument, such a truth-table $T$ can be randomly guessed
by a machine and its hardness verified by an $\NP$ oracle, and then the
hard truth-table $T$ can be used to derandomized $\BPP$ by \cite{NW,IW}.
That would yield the containments of $\BPP$ and $\MA$ in $\ZPP^\NP$.

\begin{theorem}[\cite{GZ}]\label{lemma:wig}
  $\MA$ is contained in $\ZPP^\NP$.
\end{theorem}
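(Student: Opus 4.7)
The plan is to follow the Goldreich--Zuckerman paradigm of ``guess a hard function, verify it with $\NP$, then use it to derandomize $\MA$ into $\NP$''. Fix $L \in \MA$ with verifier $V(x,w,r)$ and the usual $2/3$ vs.\ $1/3$ gap. The key enabling fact is that if I am handed a truth table $T : \{0,1\}^{\log n} \to \{0,1\}$ (viewed as an $n$-bit string) that is not computable by any boolean circuit of size $n^{\eps}$ for some fixed $\eps \in (0,1)$, then the Nisan--Wigderson / Impagliazzo--Wigderson construction yields a pseudorandom generator $G_T : \{0,1\}^{O(\log n)} \to \{0,1\}^{\mathrm{poly}(n)}$, evaluable in polynomial time from $T$, that fools every polynomial-size circuit. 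In particular, the acceptance probability of $V(x,w,\cdot)$ is preserved up to an additive $1/10$ when $r$ is replaced by $G_T(s)$ for uniform $s$, so ``$x \in L$'' is equivalent to the $\NP$ predicate ``$\exists w : \mathrm{maj}_s\, V(x,w,G_T(s)) = 1$''.

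My $\ZPP^{\NP}$ algorithm on input $x$ is then: (i) sample $T \in \{0,1\}^n$ uniformly at random; (ii) ask the $\NP$ oracle whether $T$ has a boolean circuit of size $\le n^\eps$, and if YES output $\bot$ (``don't know''); (iii) otherwise deterministically build $G_T$ from $T$ and ask the second $\NP$ query ``$\exists w : |\{ s : V(x,w,G_T(s))=1\}| \ge 2^{O(\log n)-1}$?'', and return its answer. Shannon's counting argument says that at most $2^{O(n^\eps \log n)} = o(2^n)$ strings $T$ of length $n$ are computable by circuits of size $\le n^\eps$, so step (ii) outputs $\bot$ with probability $o(1)$, comfortably within the $\ZPP$ budget of $1/3$; conditioned on $T$ being hard the PRG preserves the $\MA$ gap, so the query in step (iii) returns the correct membership bit of $x$ in $L$.

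The algorithm never errs, uses only two (non-adaptive) $\NP$ queries and runs in polynomial time, giving $\MA \subseteq \ZPP^\NP$. The only technical point I would need to verify carefully is the standard complexity claim that the NW--IW generator can be evaluated in $\mathrm{poly}(n)$ time given $T$ on the tape (indeed, it reads $T$ at only $O(\log n)$ positions per seed, so each $G_T(s)$ is a table-lookup computation), after which the argument is a clean assembly of the counting bound with the hardness-to-randomness theorem. Extending to $\BPP$ and $\ZPP^{\NP[1]}$ will follow the same skeleton, with only the final $\NP$ query re-engineered to match the desired class.
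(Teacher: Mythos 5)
Your proof is correct and follows essentially the same approach the paper attributes to Goldreich--Zuckerman: randomly sample a truth table, verify its circuit-hardness via one $\NP$ query, then invoke the NW--IW pseudorandom generator built from it and resolve the derandomized $\MA$ predicate with a second $\NP$ query. The paper cites \cite{GZ} for this theorem and sketches exactly this argument in the preceding paragraph, so your write-up is simply a careful fleshing-out of the paper's intended proof.
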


We next show that a $\CL$ base machine making queries to an $\NP$ oracle
suffices instead of a $\ZPP$ machine.

\begin{theorem}\label{thm:BppMaZppToNp}
	The following hold:
	\begin{enumerate}
		\item $\BPP\subseteq \cltonp$
		\item $\MA, \Co\MA\subseteq \cltonp$
		\item $\ZPP^{\NP[1]}\subseteq \cltonp$
	\end{enumerate}
\end{theorem}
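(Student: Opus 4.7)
The plan is to adapt the Goldreich--Zuckerman derandomization of \Cref{lemma:wig} to the catalytic setting by using the catalytic tape as a pool of candidate hard truth tables and the $\NP$ oracle for hardness certification, while organizing every $\NP$ query into at most two non-adaptive parallel rounds.

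For item (1), $\BPP\subseteq\cltonp$, I would view the catalytic tape as a concatenation of blocks $\tau_1,\ldots,\tau_N$ with $N=\poly(n)$, each of length exactly $n$, interpreting $\tau_i$ as the truth table of a function $T_i:\{0,1\}^{\log n}\to\{0,1\}$. Fix $\varepsilon\in(0,1)$ small enough that the Impagliazzo--Wigderson PRG instantiated with a function having no circuit of size $n^{\varepsilon}$ fools every polynomial-size verifier for the target $\BPP$ language. In the first parallel round I will, simultaneously for every $i$, (a) ask the $\NP$ oracle whether $T_i$ is \emph{easy}, i.e.\ has a circuit of size at most $n^{\varepsilon}$, and (b) using the circuit-search variant of the reduction proving \Cref{thm:fsat} applied to the $\NP$ language ``does there exist a circuit of size $n^\varepsilon$ computing $T_i$,'' extract an explicit circuit $C_i$ for every easy $T_i$. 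The reduction behind \Cref{thm:fsat} already fits into two parallel rounds, and its first round is a collection of existential $\NP$ queries about bits of a minimum-weight witness circuit; these naturally absorb the hardness test of (a), so the whole batch fits in a single round. After round one the tapes record either (i) an explicit hard $T_{i^{*}}$, or (ii) an explicit small circuit $C_i$ for every $i$. In case (i), I apply the NW/IW construction to $T_{i^{*}}$ to obtain a PRG of seed length $O(\log n)$ that fools the given $\BPP$ machine; enumerating all seeds and taking a majority vote is a deterministic polynomial-time computation requiring no further oracle access, and the catalytic tape remains untouched. In case (ii), each $\tau_i$ is overwritten in place by the shorter encoding $C_i$, saving at least $n-n^{\varepsilon}$ bits per block; for $N$ large enough the freed space exceeds any fixed polynomial, so I run a brute-force $\PSPACE$ derandomization of the $\BPP$ machine in the freed region. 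The second parallel round is then devoted to restoring the catalytic tape: for each $i$ and bit position $j$, a single $\NP$ query asks whether the $j$-th bit of the truth table of $C_i$ is $1$ (a $\P$-predicate on $C_i,j$), and all such queries are non-adaptive.

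For item (2), $\MA\subseteq\cltonp$, I apply the same scheme, but the final deterministic step is the $\NP$ predicate ``$\exists w : V^{\mathrm{derand}}(x,w)=1$,'' which is folded into round one (or handled by a single additional query in round two). Since the underlying $\CL$ machine is deterministic, $\cltonp$ is closed under complement, yielding $\Co\MA\subseteq\cltonp$. For item (3), $\ZPP^{\NP[1]}\subseteq\cltonp$, I derandomize the randomness of the base $\ZPP$ machine using the same PRG; for each of the $\poly(n)$ seeds the machine makes at most one $\NP$ query, so all such queries can be issued non-adaptively alongside the hardness tests of round one, and the final tally (output the common definite answer, which exists by the $\ZPP$ guarantee once the PRG fools the algorithm) is deterministic after round two.

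The main obstacle, and the part that requires the most care, is the round-counting: circuit extraction via \Cref{thm:fsat} already consumes two rounds in isolation, and the derandomized computation, especially for $\ZPP^{\NP[1]}$, wants queries of its own. The argument therefore rests on (a) observing that the hardness test in (a) and the first-round queries of the $\fF$ reduction are both existential questions about circuit encodings and share a single parallel round, and (b) verifying that the catalytic-restoration queries and the verifier-evaluation queries of the derandomized algorithm are jointly non-adaptive, so that everything truly fits into two rounds in total. A secondary technical point is to confirm that the intermediate transformations (overwriting $\tau_i$ with $C_i$ in place, and later reversing this) can be performed by a $\onew$-$\inplacefl$ computation as required by \Cref{def:or}; this holds because each transformation is a single left-to-right scan with only an $O(\log n)$ index on the worktape.
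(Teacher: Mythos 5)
Your treatment of items (1) and (2) follows essentially the same route as the paper: interpret catalytic blocks as candidate hard truth tables, fold the hardness test and the circuit-extraction ($\fF$-style) queries into round one, compress-or-random, and restore the tape in round two. One small but real error in item (1): in case (i), when a hard truth table $T_{i^{*}}$ is found, you claim the majority vote over seeds is ``a deterministic polynomial-time computation requiring no further oracle access.'' The base machine is a $\CL$ machine with $O(\log n)$ workspace, and in this branch nothing has been compressed, so there is no free catalytic space; a $\CL$ machine cannot by itself simulate an arbitrary polynomial-time, polynomial-space verifier over polynomially many seeds. The paper instead spends one additional $\NP$ query (``does the $T_{i^{*}}$-derandomized $\BPP$ machine accept $x$?''), which is a $\P$-predicate; this single query occupies the otherwise unused second round.

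Your approach to item (3), however, is genuinely different from the paper's, and it has a gap you should confront. You propose to feed the NW/IW PRG (built from the hard truth table certified by the $\NP$ oracle) directly into the base $\ZPP^{\NP[1]}$ machine $M$ and take the ``common definite answer.'' But $M$'s output on $(x,r)$ depends on the bit $a(x,r)=\mathbf{1}[q(x,r)\in\SAT]$, and so the event ``$M(x,r)$ gives a definite (hence correct) answer'' is not computed by a polynomial-size \emph{deterministic} circuit in $r$; it is a $\P^{\NP[1]}$ predicate of $r$. The IW hardness assumption you can certify with an $\NP$ oracle --- ``$T_{i^{*}}$ has no circuit of size $n^{\varepsilon}$'' --- yields a PRG that fools deterministic circuits, not circuits with $\NP$-oracle gates, so you have no guarantee that the set of PRG outputs contains even a single $r$ on which $M$ answers definitively. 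To fool $\NP$-oracle distinguishers you would need hardness against $\NP$-oracle circuits, but testing that is a $\Sigma_2^{\P}$ question, not an $\NP$ one. The paper avoids this entirely: it issues two $\NP$ queries asking whether there exist $(r,q)$ with $M(x,r)$ querying a satisfiable $q$ and returning Yes (resp.\ No). If either succeeds, the correctness of $\ZPP$ fixes the answer outright; if neither does, then on every $r$ producing a definite answer the oracle reply is No, so $M$ with the oracle answer hard-wired to False is an ordinary $\ZPP$ machine, which is then handled by the $\BPP\subseteq\cltonp$ argument already established. You will need to replace your item-(3) argument with something like this case split, or else justify a PRG against $\NP$-oracle distinguishers, which the available oracle does not support.
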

\begin{proof}
  Throughout the proof, let $\eps$ be a fixed constant in $(0,1)$.
\begin{enumerate}
\item First, we show that $\BPP\subseteq \CL^{\fF}_{2\text{-round}}$. Let the catalytic tape configuration be $\mcC=(T_1,T_2,\ldots,T_N)$, $N=\poly(n)$ where each $T_i=\{T_i(x):x\in \{0,1\}^{\log n}\}$ is a truth table $T_i:\{0,1\}^{\log n}\to \{0,1\}$ that is a $n$ bit long string. We simulate $\BPP$ as follows:
		\begin{itemize}
			\item First Round of queries: For all $i\in [N]$, ask the query `Is there a circuit $C_i$ of size at most $n^{\eps}$ that computes $T_i$'. If for some $i$, the answer is `No', then we have found a hard truth table $T_i$. In that case, we can simulate the $\BPP$ algorithm using one $\NP$ query (by the argument sketched preceeding \Cref{lemma:wig}) and halt. Otherwise, we replace the catalytic tape with $(C_1,C_2,\ldots,C_N)$. Observe that we have freed up $(n-n^{\eps})\cdot N$ bits of space, so we can simulate the $\BPP$ algorithm in $\PSPACE$. Observe that the queries of the form $\exists C_i$ are $\fF$ queries, since we require access to such a $C_i$.
			\item Second Round of queries: At this stage, we have already solved the $\BPP$ algorithm, we want to revert the catalytic tape to its initial configuration. The current configuration is $(C_1,\ldots,C_N)$. For all $i\in[N],x\in\{0,1\}^{\log n}$, evaluate $C_i(x)$, and replace the catalytic configuration with $T_i=\{C_i(x):x\in\{0,1\}^{\log n}\}$. Each evaluation of $C_i(x)$ can be simulated by a $\NP$ query, since it is a $\P$ computation.
		\end{itemize}
		Now, we show how to replace the $\fF$ oracle with $\NP$ oracle using \Cref{thm:fsat}. Assume the catalytic tape is given as $\mcC=((T_1,W_1),(T_2,W_2),\ldots,(T_N,W_N))$ where $T_i$'s are truth tables as before, and $W_i$'s are designated to be weight assignments. When we ask `$\exists C_i$ circuit of size at most $n^{\eps}$' as an $\NP$ query, we do so by invoking Cook's Theorem, and $W_i$ is a weight assignment over the literals of that boolean formula representing this query. If there does exist a small enough circuit, we check whether or not $W_i$ isolates such a min-weight circuit: If Not, then we find a threshold literal and save $\log n$ bits of space in $W_i$ in the same way as in \Cref{thm:fsat}. If Yes, we store the min-weight circuit $C_i$ in place of $T_i$ and proceed as before. In the process, we will either find a hard truth table and derandomize $\BPP$, or save enough space to run the $\BPP$ algorithm in $\PSPACE$.
		\item Notice that if we can show $\MA\subseteq\cltonp$, the inclusion for $\Co\MA$ directly follows since $\cltonp$ is closed under complement. Thus, we proceed to show the inclusion for $\MA$. From the proof of $(1)$, we can assume that after the first round of $\NP$ queries made by the $\CL$ oracle, we have access to a truth table $T$ that is hard for all circuits of size at most $n^{\eps}$ (since otherwise, we can simulate $\PSPACE$). A language $L$ is in $\MA$ means $x\in L\iff \exists y : V(x,y)=1$ where $V$ is a $\BPP$ machine. From $T$ and \Cref{lemma:wig}, we can replace $V$ with a $\P$ machine $V_0$. Thus one more $\NP$ query will suffice to check if $x\in L$ (this $\NP$ query is $\exists y:V_0(x,y)=1$).
		\item Let $M$ be a $\ZPP^{\NP[1]}$ machine which behaves as follows: Given $x$, $M$ runs over its random bits $r$, after some polytime computation $M$ makes a $\SAT$ query $q$, and does some more polytime computation. Finally, with high probability, $M$ answers $x\in L$ (Yes) or $x\notin L$ (No) correctly, and with negligible probability outputs `Do Not Know'. The $\CL$ machine makes the following two $\NP$ queries in the first round: $q_1=\exists r,q: (M(x,r)\text{ queries $q$ to the }\NP \text{ oracle})\wedge q\in \SAT\wedge (M(x,r)\text{ returns Yes})$, $q_2=\exists r,q: (M(x,r)\text{ queries $q$ to the }\NP \text{ oracle})\wedge q\in \SAT\wedge (M(x,r)\text{ returns No})$. We call $x$ to be nice if either of the queries $q_1$ or $q_2$ is answered positively. Let $b_1,b_2$ be the answer bits (observe that both $b_1,b_2$ cannot be $1$ since a $\ZPP$ machine always answers correctly). If $b_1=1$, the $\CL$ machine returns $x\in L$, if $b_2=1$, it returns $x\notin L$. Otherwise $x$ is not nice, moreover we know that the $\NP$ oracle always answers negatively to the query $q$ made by $M$ on the random strings that return a Yes/No answer. Hence, we can simulate the $\ZPP$ algorithm as it is and set the answer to the $\NP$ query as False. Clearly, from $(1)$, $\ZPP\subseteq \cltonp$. Hence, we have $\ZPP^{\NP[1]}\subseteq \cltonp$ (here we make the queries $q_1,q_2$ along with the first round of queries required for the proof of $(1)$).
	\end{enumerate}
\end{proof}
\begin{figure}[h]
	\centering
	\includegraphics[width=10cm]{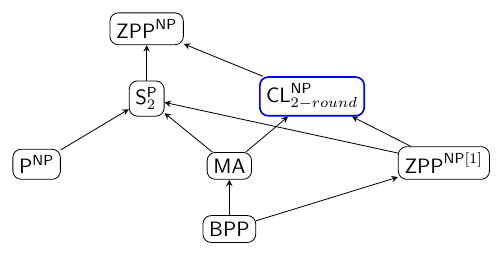}
	\caption{Inclusions of various oracle classes.}
	\label{fig:hasse}
\end{figure}
\Cref{fig:hasse} shows that $\S_2^\P$ and $\cltonp$ bear similar relationships to the classes $\MA,\ZPP^{\NP[1]}$ and $\ZPP^\NP$. The inclusions relating $\S_2^\P$ are shown in \cite{RS,Cai,CaiC}.

\section{Reachability in Catalytic Unambiguous Logspace}
\subsection{Preliminaries related to \cite{MP} for Reachability}\label{sec:blocks}

We first describe the technique developed in \cite{MP}.

\begin{definition}\label{def:isor}
	Let $G=(V,E)$ be a simple directed graph. A weight assignment $w:V\rightarrow \bbN$ is min-isolating for $G$ if for all $s,t\in V$, there is a unique min-weight path under the assignment $w$ from $s$ to $t$. Denote by $w(G,s,t)$ the weight of the min-weight $s-t$ path under $w$.
\end{definition}
We restrict attention to layered digraphs. Let $G=(V,E)$ be a layered directed graph, where $V=\sqcup_{0\le i\le d} V_i$ be the vertex partition into $d+1$ layers, where $|V|=n$. 
Edges in $G$ are only between adjacent layers, $V_i$ and $V_{i+1}$, $0\le i\le d$. Let $\ell=\ceil{\log d}$. For simplicity we assume $d=2^{\ell}$.

\begin{figure}[ht]
  \centering
  \includegraphics[width=14cm]{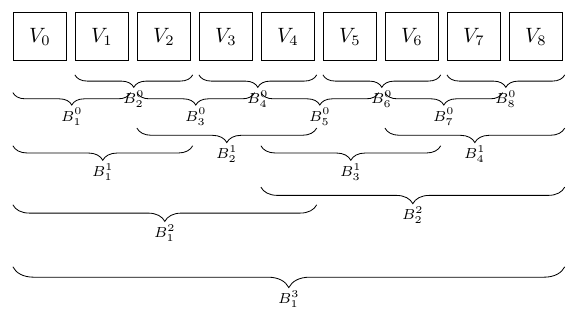}
  \caption{Block System of a layered DAG with $9$ layers.}
  \label{fig:blocks13}
\end{figure}

Consider the following \emph{Block System} of $G$: $B^0=(B^0_1,B^0_2,\ldots,B^0_{2^\ell})$ where $B^0_j=V_{j-1}\cup V_j$. Further, inductively define $B^i=(B^i_1,B^i_2,\ldots,B^i_{2^{\ell-i}})$ for $i\le \ell$ such that $B^i_j=B^{i-1}_{2j-1}\cup B^{i-1}_{2j}$ for all $j\le 2^{\ell-i}$, see \Cref{fig:blocks13}. In other words, $B^i$ can be seen as $2^{\ell-i}$ blocks of length $2^i+1$ each in $G$. For a block $B^i_j=\cup_{(j-1)\cdot 2^i\le k\le j\cdot 2^i}V_k$, the vertices in $V_{(j-1)\cdot 2^i}\cup V_{j\cdot 2^i}$ are the boundary vertices of the block, where $V_{(j-1)\cdot 2^i}$ is the left layer and $V_{j\cdot 2^i}$ is the right layer of the block, and all other vertices of $B^i_j$ are internal to it. \cite{MP} build weight assignemnts $w_i$ for each \emph{Block System} $B^i$ such that within each block $B^i_j$, $w_i$ is min-isolating for all $s,t$ paths where $s,t$ are in $B^i_j$. The middle layer of $B^i_j$ is $V_{(2j-1)\cdot 2^{i-1}}$. Clearly, setting $w_0\equiv 0$ is an isolating weight for $B^0$ because there is at most one edge between a pair of vertices between $V_{j-1}$ and $V_j$. In the weight assignment $w_i$, we always only assign nonzero weights to  the vertices internal to the blocks in $B^i$, i.e. to the vertices $V\setminus \cup_{j\le 2^{\ell-i}}V_{j\cdot 2^i}$, and boundary vertices $\cup_{j\le 2^{\ell-i}}V_{j\cdot 2^i}$ always have weight $0$. Now, we describe how \cite{MP} construct $w_{i+1}$ from $w_i$ inductively:

Let $B^{i+1}_j=B^i_{2j-1}\cup B^i_{2j}$ be a block in $B^{i+1}$. Assume that we are given a weight assignment $w_i$ that is zero in the left layer of $B^{i+1}_j$, the middle layer of $B^{i+1}_j$ i.e. $B^i_{2j-1}\cap B^i_{2j}$, and the right layer of $B^{i+1}_j$. Let the middle layer be $M=B^i_{2j-1}\cap B^i_{2j}$. Construct $w_{i+1}$ by extending $w_i$ by giving nonzero weights to these middle layers. We denote this subset of vertices as 
$$L_{i+1} = \cup_{\text {odd } j \in\left[2^{\ell-i}\right]} V_{j \cdot 2^i}.$$
Let $s,t\in B^{i+1}_j$ such that $s$ occurs in a layer earlier than $t$. The following  
cases arise:
\begin{enumerate}
	\item No $s$ to $t$ path crosses the middle layer of $B^{i+1}_j$. If $t$ is internal to $B^i_{2j-1}$ then $w_{i+1}$ is min-isolating for $s-t$ paths in $B^i_j$ no matter how $w_{i+1}$ assigns weights to $M=B^i_{2j-1}\cap B^i_{2j}$. This follows from the induction hypothesis that $w_{i+1}$ and $w_i$ agree on the vertices in $B^{i+1}_j\setminus M$. The case where $s$ is 
internal to $B^i_{2j}$ is similar.
	\item Otherwise, $s\in B^i_{2j-1}$ and $t\in B^i_{2j}$. Then, every path from $s$ to $t$ crosses the middle layer $M$. For vertices $x,y$ in a block of $B^i$, let $\mu_i(x,y)$ denote 
weight of the unique min-weight $x-y$ path under $w_i$. Let $\mu_i(x,y)=\infty$ if no such path exists. To guarantee that $w_{i+1}$ is min-isolating for $s-t$ paths in $B^{i+1}_j$, it suffices to have the following condition: for all vertices $u, v \in M$ that are on a path from $s$ to $t$,
	
	\begin{equation} \label{eq:3}
		\mu_i(s, u)+\mu_i(u, t)+w_{i+1}(u) \neq \mu_i(s, v)+\mu_i(v, t)+w_{i+1}(v)
	\end{equation}

 Condition \cref{eq:3} is the \emph{disambiguation requirement} \cite{MP}. Notice that this is a stronger condition than min-uniqueness.
\end{enumerate}

Now we explain the construction of weight assignment $w_{i+1}$ satisfying \cref{eq:3} using 
the following lemma.

\begin{lemma}[Universal hashing \cite{CW79}]\label{lemma:hash}
	There is a logspace computable weight assignment generator $\left(\Gamma_{m, r}\right)_{m, r \in \mathbb{N}}$ with seed length $s(m, r)=O(\log (m r))$ such that $\Gamma_{m, r}$ produces functions $h:[m] \mapsto[r]$ with the following property: For every $u, v \in[m]$ with $u \neq v$, and every $\delta \in \mathbb{N}$
	
	$$
	\operatorname{Pr}_h[h(u)=\delta+h(v)] \leq 1 / r
	$$
	
	where $h$ is chosen uniformly at random from $\Gamma_{m, r}$ i.e. $h=\Gamma_{m,r}(s)$ where $s$ is a uniformly at random chosen seed.
\end{lemma}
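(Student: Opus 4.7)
The plan is to instantiate $\Gamma_{m,r}$ with the classical Carter--Wegman family of linear hash functions over a prime field. By Bertrand's postulate there is a prime $p$ with $\max(m,r) \le p \le 2\max(m,r)$, which can be located by brute-force search in $O(\log(mr))$ space. The seed is a pair $(a,b) \in \mathbb{Z}_p \times \mathbb{Z}_p$, encoded in $2\lceil \log p\rceil = O(\log(mr))$ bits, and the generator outputs (a description of) the hash function
\[
h_{a,b}(x) \;=\; \bigl((a\cdot x + b)\bmod p\bigr)\bmod r.
\]
Each evaluation of $h_{a,b}(x)$ uses only a constant number of arithmetic operations on $O(\log(mr))$-bit integers, which is clearly in logspace; this handles the seed-length and logspace-computability requirements.

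For the shifted universality bound, the key step is pairwise independence of the intermediate values $Y_u := (au + b) \bmod p$ and $Y_v := (av + b) \bmod p$ for any fixed distinct $u, v \in [m] \subseteq \mathbb{Z}_p$. This holds because the $\mathbb{F}_p$-linear map $(a,b) \mapsto (Y_u, Y_v)$ has determinant $u - v \ne 0$ in $\mathbb{F}_p$ and is therefore a bijection on $\mathbb{F}_p^2$, so $(Y_u, Y_v)$ is uniform on $\mathbb{Z}_p^2$ when $(a,b)$ is. Given this, for any $\delta \in \mathbb{N}$ I would compute
\[
\Pr_{a,b}\bigl[h(u) = h(v) + \delta\bigr] \;=\; \sum_{\substack{y_v \in [r] \\ y_v + \delta \in [r]}} \Pr\bigl[Y_u \bmod r = y_v + \delta\bigr]\,\Pr\bigl[Y_v \bmod r = y_v\bigr],
\]
and bound each factor by $\lceil p/r\rceil / p$, using that every residue class modulo $r$ contains at most $\lceil p/r\rceil$ elements of $\mathbb{Z}_p$.

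The main technical nuisance is that this naive count yields $(1 + O(r/p))/r$ rather than a clean $1/r$. I would close the gap by a standard parameter adjustment that preserves the $O(\log(mr))$ seed length and logspace computability: either enlarge $p$ to a prime of size at least $r^2$ so the slack becomes negligible relative to $1/r$, or equivalently work with only those seeds $(a,b)$ for which $au+b \bmod p$ falls in the initial segment of size $r\lfloor p/r\rfloor$, making the induced distribution of $Y_u \bmod r$ exactly uniform. Either route delivers the inequality $\Pr_h[h(u) = \delta + h(v)] \le 1/r$ required by the statement.
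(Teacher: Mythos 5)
The paper does not prove this lemma itself; it is stated as a citation to Carter--Wegman \cite{CW79} (with a $\delta$-shifted form taken from van Melkebeek--Prakriya). So there is no in-paper proof to compare against. Your choice of the Carter--Wegman affine family over a prime field is the right starting point, the pairwise-independence argument via the determinant $u-v\neq 0$ is correct, and the bound $\lceil p/r\rceil/p$ on each factor is also correct. The problem is the final ``parameter adjustment,'' which does not actually deliver $\le 1/r$.

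Concretely, neither of your two proposed fixes closes the gap. (1) If $r\nmid p$, then $\lceil p/r\rceil/p>1/r$ for \emph{every} choice of $p$; enlarging $p$ to $\ge r^2$ only shrinks the excess to roughly $1/p$, giving $\Pr\le 1/r+1/p$, which is still strictly greater than $1/r$. Since $p$ is prime and $r\ge 2$, you can never have $r\mid p$, so no choice of prime $p$ eliminates the slack. (2) Restricting to seeds $(a,b)$ for which $au+b\bmod p$ lands in the initial segment of length $r\lfloor p/r\rfloor$ is a set of seeds that \emph{depends on} $u$; you cannot condition on a particular $u$ and still have a single hash family whose guarantee holds simultaneously for all pairs $u\ne v$. (Moreover, conditioning on $Y_u$ leaves the marginal of $Y_v\bmod r$ non-uniform, so the product bound would have to be redone anyway.) A standard way to obtain the exact constant with $O(\log(mr))$ seed length is to work over a field of a convenient size: round $r$ up to a prime power $q$, embed $[m]$ into an extension $\F_{q^k}$, use $h_{a,b}(x)=\pi(ax+b)$ with a balanced projection $\pi:\F_{q^k}\to[q]$ so that $(h(u),h(v))$ is exactly uniform on $[q]^2$; then $\Pr[h(u)=\delta+h(v)]\le 1/q\le 1/r$ (at the modest cost of the range being $[q]\supseteq[r]$). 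For what it is worth, your $\bigl(1+O(r/p)\bigr)/r$ bound is harmless for the paper's application --- it only affects the union-bound threshold on $r$ by a constant --- but as a proof of the lemma \emph{as stated} there is a genuine gap at the last step.
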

We identify $V$ with $[m]=[(d+1) \cdot n]$ in a natural way. If we pick $h: V \mapsto[r]$ uniformly at random from $\Gamma_{m, r}$ and set $w_{i+1}=h$ on $L_{i+1}$, then \Cref{lemma:hash} guarantees that each individual disambiguation requirement \cref{eq:3} holds with probability at least $1-1 / r$. As there are at most $n^4$ choices for $s, t, u, v$ (and $\delta=	-\mu_i(s, u)-\mu_i(u, t)+ \mu_i(s, v)+\mu_i(v, t)$ is a fixed constant only depending on $s,t,u,v$), all 
the disambiguation conditions are satisfied with probability at least $1-n^4 / r$ by a union bound. Therefore, choosing $r=\poly(n)$ suffices to get high success probability. Choosing $r=n^6$ we get 
a success probability $1-1/n^2$. 

By this technique, we can assign $w_i\equiv h_i$ on $L_i$ where $h_i$ are hash functions picked at random from $\Gamma_{n(d+1),r}$. Then with probability at least $1-\ell/n^2>1/n$, $w_{\ell}$ is a min-isolating weight assignment for $G$. Let $R_i$ denote the number of random bits required for the construction of $w_i$, and $W_i$ be the maximum weight of any vertex under the assignment $w_i$. Observe that the above construction yields $R_i=\mcO(\log n)+R_{i-1}, W_i=\mcO(\log n)$, i.e. $R_{\ell}=\mcO(\log ^2 n)$ and $W_{\ell}=\mcO(\log n)$. We require $\log n $ many hash functions for this construction of $w_{\ell}$. Now we explain how \cite{MP}  the number of hash functions used is reduced (i.e the number $R_{\ell}$) with an increase in the size of weights (i.e. $W_{\ell}$) using `shifting'.

\paragraph*{Hash and Shift:} Recall that the base case is $w_0\equiv 0$. Fix $\Delta=\ell^{\alpha}$ for some $\alpha\in[0,0.5]$ and $\gamma=2^{\ceil{\log (2n(d+1)r)}}$ (in fact \cite{MP} show $\gamma=\mcO(r)$ suffices). Given $w_i$, we pick a hash function $h$ from $\Gamma_{m,r}$, and use the same function for $w_{i+k}$ for all $k\le \Delta$ with a `shifting' procedure:  Set $w_{i+k}(v)=h(v)\cdot \gamma^{k-1}$ for all $v\in L_{i+k}$. These weights ensure that \cref{eq:3} is guaranteed for all $k\le \Delta$ with probability $\ge1-\Delta/n^2$ by \Cref{lemma:hash} (for details see \cite{MP}). Then, again repeat the same from $w_{i+\Delta+1}$ with a different hash function and continue. 

Hence, we have $R_{i+\Delta}=R_i+\mcO(\log(ndr))$ and $W_i\le r\cdot \gamma^{\Delta-1}\forall i$. Finally, we have $R_{\ell}=\mcO(\ell/\Delta\cdot \log n)$ and $W_{\ell}=\mcO(r\cdot n^{\mcO(\Delta)})$, i.e. $\log W_{\ell}=\mcO(\Delta\log n)$. Thus, $R_{\ell}=\mcO(\ell^{1-\alpha}\cdot \log n)$ and $\log W_{\ell}=\mcO(\ell^{\alpha}\log n)$.

We formalise the above construction in the following definition.

\begin{definition}\label{def:weight}
	For a fixed $\alpha\in[0,0.5]$, $\Delta=\ell^{\alpha}$ and $(h_1,\ldots,h_{k})\in \Gamma_{n(d+1),r}$ where $k\le \ell/\Delta$, define the weight assignment $w_{k\Delta}:V\rightarrow [r\cdot \gamma^{\Delta}]$ as follows:
	
    Set $w_0\equiv 0$. For an $i$, let $i-1=\Delta i_1+i_0$ be the unique representation where $i_1=\floor{(i-1)/\Delta},0\le i_0<\Delta$. Then $w_i(v)=h_{i_1+1}(v)\cdot \gamma^{i_0}$ for all $v\in L_i$, and $w_i$ agrees with $w_{i-1}$ everywhere else.
\end{definition}
The above discussion is summarized in the following Lemma.
\begin{lemma}[\cite{MP}]\label{lemma:goodhash}
	Given $(h_1,\ldots,h_k)\in \Gamma_{n(d+1),r}$ for some $k<\ell/\Delta$ such that $w_{k\Delta}$ is min-isolating for the blocks in the block system $B^{k\Delta}$ of $G$, we have that  
	
	$$\operatorname{Pr}_{h_{k+1	}}[w_{(k+1)\Delta}\text{ is not min-isolating for the block system }B^{(k+1)\Delta}]\le \Delta/n^2\le 1/n
	$$
	where $h_{k+1}$ is chosen uniformly at random from $\Gamma_{n(d+1),r}$.
\end{lemma}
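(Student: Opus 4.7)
My plan is to argue level-by-level that the fresh hash $h_{k+1}$ preserves the disambiguation requirement \eqref{eq:3} with high probability, and to conclude by a single union bound across the $\Delta$ intermediate levels $k\Delta+1,\ldots,(k+1)\Delta$ and the at most $n^4$ relevant vertex quadruples at each level. First I would observe that, by Definition~\ref{def:weight}, the weights $w_{k\Delta+j}$ for $j=1,\ldots,\Delta$ are entirely determined by the already-fixed $w_{k\Delta}$ together with the single new hash $h_{k+1}$: they equal $h_{k+1}(\cdot)\,\gamma^{j-1}$ on $L_{k\Delta+j}$ and coincide with $w_{k\Delta}$ elsewhere. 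Since $w_{k\Delta}$ min-isolates $B^{k\Delta}$ by hypothesis, it suffices to show that $w_{k\Delta+j}$ min-isolates $B^{k\Delta+j}$ for every $j\in[1,\Delta]$; the case $j=\Delta$ then yields the lemma.

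Next I would fix $j$ and a block $B\in B^{k\Delta+j}$ with middle layer $M\subseteq L_{k\Delta+j}$, and follow the case split from the discussion preceding Lemma~\ref{lemma:hash}. If $s,t\in B$ lie in a common sub-block of $B^{k\Delta+j-1}$ then min-isolation for $s\to t$ paths is inherited from $w_{k\Delta+j-1}$, since the two weight functions agree off $L_{k\Delta+j}$; otherwise every $s\to t$ path crosses $M$ and I must verify \eqref{eq:3} for every $u,v\in M$ lying on some $s\to t$ path. Substituting $w_{k\Delta+j}(x)=h_{k+1}(x)\gamma^{j-1}$ on $M$, the failure of \eqref{eq:3} becomes the event
\[
  \bigl(h_{k+1}(u)-h_{k+1}(v)\bigr)\,\gamma^{j-1} \;=\; \delta_{s,t,u,v},
\]
where $\delta_{s,t,u,v}=\mu_{k\Delta+j-1}(s,v)+\mu_{k\Delta+j-1}(v,t)-\mu_{k\Delta+j-1}(s,u)-\mu_{k\Delta+j-1}(u,t)$ depends only on $w_{k\Delta}$ and on $h_1,\ldots,h_k$, not on $h_{k+1}$.

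I would then apply Lemma~\ref{lemma:hash}: the event is vacuous when $\gamma^{j-1}\nmid\delta_{s,t,u,v}$, and otherwise reduces to $h_{k+1}(u)=h_{k+1}(v)+\delta_{s,t,u,v}/\gamma^{j-1}$, which has probability at most $1/r$ over a uniform $h_{k+1}\sim\Gamma_{n(d+1),r}$. Bounding the number of quadruples $(s,t,u,v)\in V^4$ by $n^4$ per level and summing over the $\Delta$ levels gives failure probability at most $\Delta n^4/r$; with the choice $r=n^6$ already fixed in the surrounding text, this is $\Delta/n^2\le 1/n$, as claimed.

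The main subtlety, and where I would focus the most care, is that the same hash $h_{k+1}$ is reused across all $\Delta$ levels, so the bad events at different $j$ are not independent; however, each bad event remains a single affine equation in the values of $h_{k+1}$, so Lemma~\ref{lemma:hash} applies verbatim to each and the union bound is unaffected. The role of $\gamma$ being a large power of two, namely $\gamma=2^{\lceil\log(2n(d+1)r)\rceil}$, is precisely to ensure that the base weights inherited from earlier levels cannot silently cancel the shifted contribution $h_{k+1}(\cdot)\gamma^{j-1}$ in a way that produces a collision outside the universal-hashing event; verifying this carry-free separation formally is the one routine calculation I would import from \cite{MP}.
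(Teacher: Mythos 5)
Your overall decomposition---inheriting min-isolation off the newly weighted layer, reducing the crossing case to the disambiguation requirement \cref{eq:3}, applying \Cref{lemma:hash} per quadruple, and then union-bounding over $n^4$ quadruples and $\Delta$ levels with $r=n^6$---is exactly the route the paper sketches in the discussion preceding the lemma (and then defers to \cite{MP} for). However, there is a genuine gap in the middle: the claim that $\delta_{s,t,u,v}$ ``depends only on $w_{k\Delta}$ and on $h_1,\ldots,h_k$, not on $h_{k+1}$'' is false for $j > 1$. The quantities $\mu_{k\Delta+j-1}(\cdot,\cdot)$ that define $\delta$ are min-path weights under $w_{k\Delta+j-1}$, and by \Cref{def:weight} this weight function already assigns the values $h_{k+1}(\cdot)\,\gamma^{j'-1}$ on the intermediate layers $L_{k\Delta+j'}$ for $1 \leq j' < j$. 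Hence for $j>1$ the offset $\delta$ is itself a random variable in $h_{k+1}$, and \Cref{lemma:hash}, which requires $\delta$ to be a fixed constant independent of the random hash, does not apply ``verbatim'' as you assert. (For $j=1$ your statement is correct, since $\mu_{k\Delta}$ is determined by $w_{k\Delta}$ alone.)

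You do flag this subtlety in your closing paragraph via the carry-free separation, which is the right instinct, but it directly contradicts the independence you asserted two paragraphs earlier, and you never actually close the gap. Repairing the argument requires making explicit how the base-$\gamma$ digit structure (using that $\gamma$ dominates every lower-order contribution so no carries propagate across powers of $\gamma$) reduces the level-$j$ collision event to a condition on a single digit of $h_{k+1}$ with a genuinely fixed offset determined by $w_{k\Delta}$; that is precisely the calculation one must import from \cite{MP}. Since the paper likewise states this lemma as cited from \cite{MP} and only sketches the intuition, your deferral is in the same spirit, but the exposition as written asserts an independence that is not true as stated and is load-bearing for the invocation of \Cref{lemma:hash}.
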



\subsubsection*{Construction of weight assignments given the hash functions}

Here we give an algorithm to compute the intermediate weight functions using a given set of hash functions.
This is just rewriting \Cref{def:weight} as an algorithm, as follows:

Define $w_i\equiv 0$. Iterate $i'$ from $1$ to $i$, and on each iteration define $w_i(v)=h_{i_1'+1}(v)\cdot \gamma^{i_0'}$ for $v\in L_{i'}$ where $i'_1=\floor {(i'-1)/\Delta},i'_0=i'-1-\Delta i'_1$ as defined in \cref{def:weight}. Here $h_{i_1'+1}$ is the $i_1+1$th hash function. 
\begin{algorithm}
	\caption{ConstructWeights($i$)}\label{alg:weights0}
	\begin{algorithmic}[1]
		\State \textbf{Input:} $(G,i,\vec h=(h_1,\ldots,h_{i/\Delta}))$
		\State \textbf{Output:} Weight function $w_i$
		\State $w_i\equiv 0$
		\For{$1\le i'\le i$}
		\For {$v\in L_{i'}$}
		\State $i'_1\gets \floor {(i'-1)/\Delta},i'_0\gets i'-1-\Delta i'_1$
		\State $w_i(v)\gets h_{i_1'+1}(v)\cdot \gamma^{i_0'}$
		\EndFor
		\EndFor
		\State \Return $w_i$
	\end{algorithmic}
\end{algorithm}

The following is a lemma that we shall crucially use to $(1)$ detect whether a weight assignment is isolating or not, and $(2)$ given an isolating weight, compute distances.
\begin{lemma}[\cite{AR},\cite{MP}]\label{lemma:wcheck}
	There exist unambiguous nondeterministic machines $\mbox{WeightCheck}$ and $\mbox{WeightEval }$ such that for every digraph $G=(V, E)$ on $n$ vertices, weight assignment $w: V \mapsto \mathbb{N}$, and $s, t \in V$ :
	
	(i) $\mbox{WeightCheck}(G, w)$ decides whether or not $w$ is min-isolating for $G$, and
	
	(ii) $\mbox{WeightEval}(G, w, s, t)$ computes $w(G, s, t)$ (see \Cref{def:isor}) provided $w$ is min-isolating for $G$.
	
	Both machines run in time $\poly (\log (W), n)$ and space $\mcO(\log (W)+\log (n))$, where $W$ is the maximum weight of a vertex under $w$.
\end{lemma}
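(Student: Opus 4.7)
The plan is to adapt the Reinhardt--Allender unambiguous inductive-census technique for reachability to the weighted min-isolating setting, where the role formerly played by path length is now played by accumulated weight. Fix a source $s\in V$; both algorithms will iterate over all sources. For $k\in[0,nW]$ let $d(v)$ denote the weight of the minimum-weight $s$-to-$v$ path under $w$, $R_k=\{v:d(v)\le k\}$, and $c_k=|R_k|$. The algorithm maintains $c_k$ across levels $k=0,1,\ldots,nW$ together with an auxiliary count $p_k$ of minimum-weight $s$-to-$v$ paths over $v\in R_k$ (maintained only up to the threshold $c_k+1$, so there is no blow-up), each stored in $\mcO(\log n+\log W)$ bits.

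The core is an unambiguous census subroutine that, given $c_k$, certifies for any target $v$ and value $j\le k$ whether $d(v)=j$. To certify $d(v)=j$ we guess, vertex by vertex, the putative minimum-weight $s$-to-$v$ path while accumulating its weight on the work tape; the isolating hypothesis makes this branch deterministic because the min-weight path is unique. To certify $v\notin R_k$ we use the standard census trick: enumerate all $v'\in V$ in a fixed order, for each try to guess a min-weight $s$-to-$v'$ path of weight $\le k$, and require that exactly $c_k$ of these attempts succeed with $v$ not among them. From this subroutine, $c_{k+1}$ is obtained by counting the vertices $v$ which are either in $R_k$ or possess an in-neighbour $u$ with $d(u)+w(v)=k+1$ (adjusting between vertex and edge weight semantics as in \Cref{sec:blocks}), and $p_{k+1}$ is updated simultaneously by summing the number of witnessing in-neighbours per $v$.

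$\mbox{WeightEval}(G,w,s,t)$ then runs this induction up to $k=nW$ and outputs the first $k$ at which $t$ enters $R_k$; this value is $w(G,s,t)$. $\mbox{WeightCheck}(G,w)$ runs the same induction for every source $s\in V$ and aborts the moment $p_k>c_k$ at any level, accepting only if $p_k=c_k$ persists throughout; any strict inequality directly witnesses two distinct minimum-weight paths to some vertex $v$, so the test is both sound and complete for min-isolation. Each existential guess is for a minimum-weight path, which is unique whenever $w$ is in fact min-isolating, so the UL guarantee holds on those inputs; on non-isolating inputs the $p_k>c_k$ detector rejects before a second accepting branch can be produced.

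The resource bounds are immediate: counters and indices fit in $\mcO(\log n+\log W)$ bits, each of the $\mcO(nW)$ levels performs $\poly(n,\log W)$ deterministic work on top of the path-guessing, and iterating over $n$ sources contributes only a polynomial factor, yielding the claimed time $\poly(n,\log W)$ and space $\mcO(\log n+\log W)$. The main subtlety---already the technical heart of the Reinhardt--Allender argument and the only nontrivial point to reproduce---is ensuring that on non-isolating inputs an inconsistency in the census forces rejection rather than producing a spurious additional accepting branch; this is handled by synchronising the census and the path-guessing into a single computation so that ambiguity in either component is detected before any accept state is reached.
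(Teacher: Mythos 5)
The paper does not actually prove this lemma; it cites it from \cite{AR} and \cite{MP}, so I am comparing your argument against what those references do.

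Your high-level plan (the Reinhardt--Allender inductive double-census, adapted from path length to accumulated weight, with a counter $p_k$ of min-weight paths compared against $c_k=|R_k|$ to detect non-isolation) is the right framework and is essentially what \cite{AR} and \cite{MP} use. The space analysis is also fine.

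However, there is a genuine gap in the time bound. You advance the induction over $k=0,1,\ldots,nW$, one unit of weight at a time, so your computation paths have length $\Omega(nW)$. But $W$ is only guaranteed to be $2^{O(\log W)}$, and the lemma claims time $\poly(n,\log W)$; in the applications of the paper (\Cref{lemma:check}, \Cref{thm:nl}) one has $\log W = \Theta(\log^{1+\alpha} n)$, so $nW$ is quasi-polynomial while the oracle machine is required to run in polynomial time. Iterating over every integer in $[0,nW]$ is therefore off by an exponential factor (in $\log W$). The fix, which is the key point handled in \cite{MP} for super-polynomial weights, is to notice that there are at most $n$ \emph{distinct} finite distance values from a given source, and to advance the census Dijkstra-style from one realised distance value $d_i$ directly to the next one $d_{i+1}$. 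Given the census of $R_{d_i}$, the value $d_{i+1}$ is computable in $\poly(n,\log W)$ time as the minimum of $d(u)+w(v)$ over edges $(u,v)$ with $u\in R_{d_i}$, $v\notin R_{d_i}$ (with $d(u)$ obtained by unambiguous verification against the current census); this keeps the number of induction steps at $O(n)$ and hence total time $\poly(n,\log W)$. Without that change, \textbf{WeightEval} does not meet the stated time bound, and neither does \textbf{WeightCheck}, which runs the same induction.

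A secondary point: your final paragraph identifies the correct subtlety for \textbf{WeightCheck}---that the machine must be unambiguous on \emph{every} input, including non-isolating ones, so that ambiguous path guesses are detected before an accept state is reached---but you describe the resolution only by asserting that the census and path-guessing are ``synchronised''. This is the technical heart of the Reinhardt--Allender argument, and in a self-contained proof you would need to spell out the fixed vertex-enumeration order, the requirement that exactly $c_k$ census attempts succeed, and the argument that any duplicated minimum-weight path manifests as $p_k>c_k$ at the first level $k$ where it occurs, forcing a reject on all branches before any accept. As written this is a claim rather than an argument, though it does correctly describe the intended mechanism.
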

From \cref{lemma:wcheck}, we get an algorithm $\mbox{WeightCheck}(B^i,\vec h_{i/\Delta})$, that checks whether $w_i$ constructed from $\vec h_{i/\Delta}=(h_1,\ldots,h_{i/\Delta})$ is min-isolating for all blocks in the block system $B^i$ of $G$, that runs in time $\poly (\log (W_i), n)$ and space $\mcO(\log (W_i)+\log (n))$, where $W_i$ is the maximum weight of a vertex under $w_i$. This algorithm uses \Cref{alg:weights0} as a transducer.

\subsection{A Compress or Compute Algorithm}

Let $\Delta=\log^{\alpha}n$ for some $\alpha\in [0,0.5]$. We present the main 
\Cref{alg:comp1}, and its proof in \Cref{lemma:check}.

\begin{algorithm}[h]
	\caption{CompressOrCompute($G=(V,E)$)}\label{alg:comp1}
	\begin{algorithmic}[1]
		\State \textbf{Input:} $(G,\mcC=(\tau_1,\tau_2,\ldots,\tau_t))$ and an unambiguous oracle consisting of $\mcO(\log^{1+\alpha} n)$ space.
		\State \textbf{Output:} Either outputs a set of $\ell/\Delta$ hash functions, or directly solves the reachability instance using freed up space from the catalytic tape.
		\State $w_{0}\equiv 0$, $b_k=-1\forall k\in [t]$, $i=1,k=1$
		\For {$k\gets 1$ to $t$}\Comment{Process the $k$th element $\tau_k$ of the Catalytic Tape}
		\If {$i>\ell/\Delta$}
		\State break \Comment{We have found $\ell/\Delta$ good hash functions}
		\EndIf
		\State $b_k\gets 1$  \Comment{i.e. $h_i=\tau_k$}
		
		\If {$\mbox{WeightCheck}(B^{i\Delta},\vec h_i=\vec h_{i-1},h_i=\tau_k)$}
		\State $i\gets i+1$ \Comment{$h_i=\tau_k$ is a good hash function, next we want to find $h_{i+1}$}
		\Else 
		\State {$b_k\gets0$, $\tau_k\gets\comp(\mcC,\tau_k)$} \Comment{$h_i\gets\tau_k$ is a bad hash function}
		\EndIf
		\EndFor
		\If {$i>\ell/\Delta$}
		\State \Return $\vec h_{\ell/\Delta}$ \Comment{we have found the hash functions leading to a min-isolating weight assignment for $G$}
		\Else 
		\State Use the freed up space $\mcO(\log^{2-\alpha}n)$ space for the computation using the unambiguous oracle.
		\EndIf
		\For {$k\gets t$ downto $1$}
		\If {$b_k=0$}
		\State $\decomp(\mcC,\tau_k')$
		\EndIf
		\EndFor
	\end{algorithmic}
\end{algorithm}

\begin{theorem}\label{thm:nl}
	$\NL\subseteq \CTISP\left[\poly(n), \log n,\log^{2-\alpha}n\right]^{\U\TISP\left[\poly(n),\log^{1+\alpha}n\right]}$ for all $\alpha\in [0,0.5]$.
\end{theorem}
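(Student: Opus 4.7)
The plan is to reduce $\NL$ to layered-DAG reachability via \Cref{nlhard} and place the latter in the target class by implementing the hash-and-shift construction of \cite{MP} inside a $\CTISP[\poly(n),\log n,\log^{2-\alpha} n]$ machine with oracle access to $\U\TISP[\poly(n),\log^{1+\alpha} n]$. The goal is to obtain $\ell/\Delta = \log^{1-\alpha} n$ hash seeds $h_1,\ldots,h_{\ell/\Delta}$, each of length $s = \mcO(\log n)$, such that the weight $w_\ell$ built by \Cref{alg:weights0} is min-isolating for $G$; given such $w_\ell$, a single call to $\mbox{WeightEval}$ from \Cref{lemma:wcheck} decides reachability, and this call fits in the oracle because the vertex weights are bounded by $r \cdot \gamma^{\Delta} = 2^{\mcO(\log^{1+\alpha} n)}$, so $\mbox{WeightEval}$ runs in unambiguous space $\mcO(\log^{1+\alpha} n)$.

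I would partition the catalytic tape into $t = \Theta(\log^{1-\alpha} n)$ blocks of $s$ bits each (well within the total catalytic budget $\mcO(\log^{2-\alpha} n)$) and execute \Cref{alg:comp1}. On block $\tau_k$ while seeking $h_i$, query the oracle $\mbox{WeightCheck}(B^{i\Delta},(h_1,\ldots,h_{i-1},\tau_k))$, which is unambiguous logspace in $\log W_{i\Delta} = \mcO(\log^{1+\alpha} n)$ by \Cref{lemma:wcheck} and is driven by the logspace transducer \Cref{alg:weights0} to materialise the weights from the current seeds. On a good seed, advance $i$; on a bad seed, compress $\tau_k$ in place by replacing it with its index in the lexicographic enumeration of bad seeds relative to the already chosen $h_1,\ldots,h_{i-1}$. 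By \Cref{lemma:goodhash} at most $2^s/n$ seeds are bad, so this index needs only $s - \log n$ bits, saving $\log n$ bits per bad hash.

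Two cases then arise. If the loop collects $\ell/\Delta$ good seeds, a single call to $\mbox{WeightEval}$ on $w_\ell$ returns the reachability verdict. Otherwise at least $t - \ell/\Delta = \Omega(\log^{1-\alpha} n)$ blocks were compressed, freeing $\Omega(\log^{2-\alpha} n)$ contiguous bits once the saved slack from each block is slid into a single region. This freed region together with the $\log n$ workspace and the $\U\TISP[\poly(n),\log^{1+\alpha} n]$ oracle realises exactly the environment of \cite{MP}'s tradeoff algorithm for reachability, which decides $s$-$t$ reachability directly on $G$. After outputting the verdict, the tape is restored by the second loop of \Cref{alg:comp1}: for each $k$ with $b_k = 0$, enumerate the $2^s$ candidate seeds, invoke $\mbox{WeightCheck}$ on each with the already-reconstructed $h_1,\ldots,h_{i-1}$, and reinstall $\tau_k$ as the bad seed whose rank matches the stored index; this is a polynomial-time per-block operation.

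The main obstacle is the accounting. One must verify: (i) $t$ is calibrated so that the failure branch really does free $\Omega(\log^{2-\alpha} n)$ bits, while $t \cdot s$ stays $\mcO(\log^{2-\alpha} n)$; (ii) the slack bits across distinct blocks can be slid into a single contiguous workspace region in logspace polynomial time, and the sliding inverted at the end so decompression is bit-exact; (iii) every oracle call stays within the $\log^{1+\alpha} n$ unambiguous budget, which depends on \Cref{alg:weights0} computing $w_{i\Delta}$ from the seeds in logspace and on $\log W_{i\Delta} = \mcO(\log^{1+\alpha} n)$; and (iv) the compression is deterministically reversible, which holds because both the compression and the decompression pass use the same lexicographic enumeration relative to the same prefix $h_1,\ldots,h_{i-1}$. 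Putting these pieces together yields the claimed containment for every $\alpha \in [0,0.5]$.
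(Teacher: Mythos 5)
Your overall architecture matches the paper's: reduce $\NL$ to layered-DAG reachability via \Cref{nlhard}, interpret catalytic blocks as candidate hash seeds, check each via the unambiguous oracle using \Cref{lemma:wcheck} driven by \Cref{alg:weights0}, compress bad seeds to their rank among bad seeds of the appropriate prefix (correct, since \Cref{lemma:goodhash} bounds bad seeds by a $1/n$ fraction, giving a $\log n$-bit saving), and restore by re-enumeration. Your calibration of $t$ and the accounting in (i)--(iv) are also on target.

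The gap is in the failure branch. You assert that the freed $\Omega(\log^{2-\alpha}n)$ bits plus the $\U\TISP[\poly(n),\log^{1+\alpha}n]$ oracle ``realises exactly the environment of \cite{MP}'s tradeoff algorithm, which decides $s$-$t$ reachability directly on $G$.'' This does not go through as stated: the freed catalytic region lives on a \emph{deterministic} machine, while MP's algorithm is a single \emph{unambiguous} $\U\TISP$ machine that needs unambiguous nondeterminism with roughly $\log^{2-\alpha}n$ work space (for $\alpha<0.5$ the oracle's $\log^{1+\alpha}n$ budget is strictly smaller). You cannot run an unambiguous algorithm on a deterministic tape, nor fit it inside the smaller oracle. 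The paper is explicit about this (see Remark~\ref{rem:nl}): it never invokes MP as a blackbox. Instead, in the failure branch it \emph{decomposes} MP's algorithm, using the freed catalytic space to store $\ell/\Delta$ candidate hash slots, deterministically enumerating candidates for each $h_i$ in the base machine, checking goodness of each candidate with an oracle call to $\mbox{WeightCheck}$, and only at the end making one oracle call to $\mbox{WeightEval}$ for the verdict. You in fact already describe exactly this enumerate-and-oracle-check loop --- but only for the decompression pass. Applying the same loop in the failure branch, rather than invoking MP wholesale, is what makes the claim hold for all $\alpha\in[0,0.5]$ and not just $\alpha=0.5$.
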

\begin{proof}
	As layered digraph reachability is $\NL$-complete (\Cref{nlhard}), it suffices to show a $ \CTISP\left[\poly(n), \log n,\log^{2-\alpha}n\right]^{\U\TISP\left[\poly(n),\log^{1+\alpha}n\right]}$ algorithm for directed reachability in layered digraphs. Let $(G,s,t)$ be an instance. 
	Given a min-isolating weight $W:V\to [\poly(2^{\log^{1+\alpha}n})]$ for $G$, using WeightEval$(G,s,t)$ \Cref{lemma:wcheck}, we can in $\UTISP\left[\poly(n), \log^{1+\alpha}n\right]$ check reachability from $s$ to $t$. Let $\ell=\mcO(\log n)$ and $\Delta=\ell^{\alpha}$. Now, suppose we have a $ \CTISP\left[\poly(n), \log n,\log^{2-\alpha}n\right]^{\U\TISP\left[\poly(n),\log^{1+\alpha}n\right]}$ algorithm to find $\ell/\Delta$ many hash functions $h_1,\ldots,h_{\ell/\Delta}:V\to [\poly(n)]$ such that the weight assignment $w_{\ell}$ (see \cref{def:weight}) is minimum isolating for $G$. Then we can send these hash functions $h_1,\ldots,h_{\ell/\Delta}$ to the $\U\TISP\left[\poly(n),\log^{1+\alpha}n\right]$ oracle. The oracle can now compute the weights $w_\ell$ as a $\log^{1+\alpha}n$ space transducer and use \emph{WeightEval}($G,w_\ell,s,t$) (\Cref{lemma:wcheck}) to compute reachability in $G$. This completes the proof, modulo \Cref{lemma:check}, which we set out to show next.
\end{proof}

At this point it is clear that we need to compute isolating weights of $\mcO(\log ^{1+\alpha}n)$ bits for a layered digraph $G =(V,E)$ with $n$ vertices in $  \CTISP\left[\poly(n), \log n,\log^{2-\alpha}n\right]^{\U\TISP\left[\poly(n),\log^{1+\alpha}n\right]}$ for all $\alpha\in [0,0.5]$. We do this in \Cref{alg:comp} closely following the technique in \cite{Pyne}. We first present an informal description of it. 

We are given a layered digraph $G=(V,E)$ with $d+1$ layers $V_0,\ldots,V_d$ where $d=2^{\ell}$, work space $\mcO(\log  n)$, catalytic tape is $\mcC=(\tau_1,\ldots,\tau_t)$ for $t=\mcO(\ell/\Delta)$ and each $\tau_k$ is $\mcO(\log n)$ bits and a $\U\TISP[\poly(n),\log^{1+\alpha}n]$ oracle. Let $B^i$ be the $i$th \emph{block system} of $G$ (recall definition from \Cref{sec:blocks}). For book-keeping, we have $b_1,\ldots,b_t\in\{-1,0,1\}$ in the work space of the base machine, initially all set to $-1$ ($b_j=1$ denotes that $\tau_j$ is a good hash function). We proceed in $\ell/\Delta$ steps, where in the $i$th step, we find a new hash function, and a weight assignment $w_{i\Delta}$ that is isolating for the block system $B^{i\Delta}$. We process $\tau_1,\ldots,\tau_t$ in the given order. We either sample hash functions from $\mcC$ or compress the catalytic tape. Suppose we are processing $\tau_k$, and we are searching for the $i^{th}$ hash function. Then $h_1,h_2,\ldots,h_{i-1}$ are as follows: Let $j_1,\ldots,j_{i-1}$ be the first $i-1$ indices such that $b_{j_1}=\ldots=b_{j_{i-1}}=1$. Then $h_1=\tau_{j_1},\ldots,h_{i-1}=\tau_{j_{i-1}}$. Now, we set $h_i=\tau_k$ and send the current set of hash functions to the oracle which can now compute $w_{i\Delta}$ of any vertex. Using $\mbox{WeightCheck}(B^{i\Delta},\vec h_i)$ \Cref{lemma:wcheck}, the oracle checks if the hash function found is good or not. If yes, then we set $b_k=1$ and proceed to find the next hash function and process $\tau_{k+1}$. Otherwise, we iterate over all hash functions that are bad to find the index $j$ such that $\tau_k$ is the $j$th bad hash function. We then replace $\tau_k$ with the index $j$, and save some space. In this case, set $b_k=0$ and proceed to find $h_i$, and process $\tau_{k+1}$. After processing all $\tau_k, 1\le k\le t$, either we 
have an isolating weight assignment $w_{\ell}$ for $G$, or we have saved enough space. If we have saved enough space, we find the necessary hash functions, and solve the problem. Then we restore the catalytic tape configuration by restoring the $j$th hash function wherever we had replaced the hash function with its index $j$. To do this replacement, we again invoke $\mbox{WeightCheck}$ \Cref{lemma:wcheck} since to iterate over all bad hash functions, we need to check if a weight is indeed isolating or not. A formal proof of the correctness of \Cref{alg:comp1} is given in the following lemma. 


\begin{lemma}\label{lemma:check}
	Given a layered digraph $G=(V,E)$ on $n$ vertices and $\alpha\in [0,0.5]$ where $\ell=\mcO(\log n)$ and $\Delta=\ell^{\alpha}$, we can in $ \CTISP\left[\poly(n), \log n,\log^{2-\alpha}n\right]^{\U\TISP\left[\poly(n),\log^{1+\alpha}n\right]}$ find $\ell/\Delta$ many hash functions $h_1,\ldots,h_{\ell/\Delta}:V\to [\poly(n)]$ such that the weight assignment $w_{\ell}$ (see \cref{def:weight}) is minimum isolating for $G$.

\end{lemma}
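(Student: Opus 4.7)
The plan is to prove correctness of Algorithm~\ref{alg:comp1} while verifying that it implements the compress-or-compute strategy within the stated resource bounds. I would partition the catalytic tape $\mcC$ into $t=\Theta(\ell/\Delta)$ blocks $\tau_1,\ldots,\tau_t$ of length $\mcO(\log n)$ each, choosing the hidden constant large enough that $t-\ell/\Delta=\Omega(\ell/\Delta)$; the total tape size is then $\mcO(\log^{2-\alpha}n)$ as required. Each $\tau_k$ is interpreted as a seed for the universal family $\Gamma_{n(d+1),r}$ of \Cref{lemma:hash}, hence as a candidate hash function. On its $\mcO(\log n)$ work tape the base machine maintains only the scalars $i,k$ and the bit vector $(b_1,\ldots,b_t)\in\{-1,0,1\}^t$, which fits because $t=\mcO(\log^{1-\alpha}n)\le\log n$.

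For the inner loop, let $j_1<\cdots<j_{i-1}$ be the indices that have so far been labeled good, set $h_\mu\gets\tau_{j_\mu}$ for $\mu<i$ and $h_i\gets\tau_k$, and query $\mbox{WeightCheck}(B^{i\Delta},\vec h_i)$. This query is decidable inside the $\U\TISP[\poly(n),\log^{1+\alpha}n]$ oracle because, by \Cref{lemma:wcheck} combined with \Cref{alg:weights0}, the oracle can evaluate $w_{i\Delta}$ on any vertex in space $\mcO(\log W_{i\Delta}+\log n)=\mcO(\log^{1+\alpha}n)$, while the hash functions themselves are passed on the (polynomially long) query tape. When the answer is ``good'' I record $b_k=1$ and increment $i$; when ``bad'' I overwrite $\tau_k$ by its rank in the canonical ordering of $\mathrm{Bad}(\vec h_{i-1})$, the set of seeds that fail $\mbox{WeightCheck}$ when appended to $\vec h_{i-1}$. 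By \Cref{lemma:goodhash}, $|\mathrm{Bad}(\vec h_{i-1})|\le N/n$ where $N=2^{\mcO(\log n)}$ is the seed universe, so the rank fits in $\log N-\log n$ bits, saving $\log n$ per bad block.

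The accounting splits into two cases. If the loop exits with $i>\ell/\Delta$, then the invariant $\mbox{WeightCheck}(B^{i\Delta},\vec h_i)=\mathrm{true}$ (maintained each time $i$ is incremented) together with repeated application of \Cref{lemma:goodhash} shows that $w_\ell$ produced by \Cref{alg:weights0} from the recorded $\vec h_{\ell/\Delta}$ is min-isolating for $G$, so I output this tuple. Otherwise all $t$ blocks have been scanned and at least $t-\ell/\Delta=\Omega(\log^{1-\alpha}n)$ of them now carry compressed ranks, yielding a contiguous scratch region of $\Omega(\log^{2-\alpha}n)$ bits after shifting. On this region I enumerate seed tuples $(h_1,\ldots,h_{\ell/\Delta})$ in lexicographic order, storing one seed at a time, and for each call $\mbox{WeightCheck}(B^\ell,\vec h)$ through the oracle; a union bound over \Cref{lemma:goodhash} guarantees that some tuple succeeds, and the first one found is output. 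All loops are polynomial-time and each oracle call is polynomial-time, so the entire algorithm is in simultaneous polynomial time.

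To preserve the catalytic invariant the final sweep restores the tape by scanning $k$ from $t$ down to $1$: whenever $b_k=0$ I reconstruct $\vec h_{i-1}$ from $b_1,\ldots,b_{k-1}$ and the still-uncorrupted good positions, enumerate $\mathrm{Bad}(\vec h_{i-1})$ via oracle calls until the stored rank is reached, and write back the original seed. The main obstacle here is showing that compression is cleanly reversible: the set $\mathrm{Bad}(\vec h_{i-1})$ visible at decompression time must coincide exactly with the one visible at compression time. I would settle this by an induction on the forward scan order, showing that the prefix $\vec h_{i-1}$ relevant at position $k$ depends only on $b_1,\ldots,b_{k-1}$ and the contents of the good positions $\tau_{j_1},\ldots,\tau_{j_{i-1}}$, neither of which is touched by compression at positions with $b=0$, and that the reverse sweep re-enters each position in the same context.
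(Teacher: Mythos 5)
Your proposal tracks the paper's proof closely in all the structural moves: the same partition of the catalytic tape into $t=\Theta(\ell/\Delta)$ blocks of $\Theta(\log n)$ bits interpreted as hash seeds, the same $b_k\in\{-1,0,1\}$ bookkeeping on the worktape, the same compress step replacing a bad seed $\tau_k$ by its rank in $\mathrm{Bad}(\vec h_{i-1})$ to save $\log n$ bits, and the same reversibility argument for decompression (the prefix $\vec h_{i-1}$ at position $k$ is determined by the $b$-vector and the untouched good blocks alone).

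There is, however, a genuine gap in your ``compute'' branch. You propose to enumerate full seed tuples $(h_1,\ldots,h_{\ell/\Delta})$ in lexicographic order and test each with $\mbox{WeightCheck}(B^\ell,\vec h)$. The number of such tuples is $(n^{\mcO(1)})^{\ell/\Delta}=2^{\Theta(\log n\cdot\log^{1-\alpha}n)}=2^{\Theta(\log^{2-\alpha}n)}$, which is quasipolynomial for every $\alpha\in[0,0.5]$ and hence violates the $\poly(n)$-time constraint in the $\CTISP$ bound. The paper's compute phase is instead a \emph{greedy prefix extension}: having fixed $\vec h_{i-1}$, sweep the $n^{\mcO(1)}$ candidates for $h_i$ one at a time, querying the oracle with $\mbox{WeightCheck}(B^{i\Delta},\vec h_{i-1},h_i)$, and commit to the first one that passes; \Cref{lemma:goodhash} ensures that bad candidates are a strict minority at every level, so a good $h_i$ always exists. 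This costs only $(\ell/\Delta)\cdot n^{\mcO(1)}$ oracle calls, which is polynomial. Relatedly, invoking a ``union bound over \Cref{lemma:goodhash}'' is the wrong justification here: that lemma bounds the fraction of bad one-step extensions (used to make $\mathrm{Bad}(\vec h_{i-1})$ small and to guarantee a good extension exists), not the failure probability of a random full tuple. With the compute phase replaced by the greedy search, the remainder of your argument --- the space accounting of $\Omega(\log^{2-\alpha}n)$ freed bits and the correctness of $\comp/\decomp$ --- is sound and matches the paper.
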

\begin{proof}
	Let $d=2^{\ell}$, where $d+1$ is the number of layers in $G$, and let $\Delta=\ell^{\alpha}=\mcO(\log^{\alpha}n)$. We will attempt to construct the weight function $W=w_{\ell}$ as in \Cref{def:weight}. For this we need access to $\ell/\Delta$ many `good' hash functions which we shall extract from the catalytic tape. For ease in notation, we assume that the each hash function of $\cref{lemma:hash}$, for $r=n^6$, can be represented using exactly $c\log n$ bits for a constant $c$ (there are $n^{c}$ strings of $c\log n$ bits). Assume that we are given a catalytic space $(c^2+c)(\ell/\Delta)\log n$, as $\mcC=(\tau_1,\tau_2,\ldots,\tau_t)$ where each $\tau_k$ is $c\log n$ bits and $t=(c+1)\ell/\Delta$. For each $k\in[t]$, we store a number $b_k\in\{-1,0,1\}$  in the worktape to indicate if $\tau_k$ is a `good' hash function or not (and $b_k=-1$ if $\tau_k$ has not been processed). We can do this since $t=\mcO(\log n)$. Now we justify the \Cref{alg:comp1}.
	\begin{enumerate}
		\item Base Case: Set $b_k=-1$ for all $k\in[t]$ and $\vec h=\phi$ designated to be the set of good hash functions. At any stage $\vec h$ is the sequence of $\tau_k$'s such that $b_k=1$ i.e. we do not explicitly store the set $\vec h$, but have enough information to access $\vec h$.
		\item Suppose we are processing the $k^{th}$ block of the catalytic tape, namely $\tau_k$ where $\mcC=(\tau_1',\tau_2'\ldots,\tau_{k-1}',\tau_k,\ldots,\tau_t)$ where the $\tau'$'s are either the same as $\tau$ or a compression of $\tau$. We have $\vec h_{i-1}=(h_1,h_2,\ldots,h_{i-1})$ where $h_1=\tau_{j_1},h_2=\tau_{j_2},\ldots,h_{i-1}=\tau_{j_{i-1}}$ where $j_1<j_2<\ldots<j_{i-1}$ are exactly the set of indices less that $k$ such that $b_{j_1}=\ldots=b_{j_{i-1}}=1$. We know that $\vec h_{i-1}$ is a sequence of `good' hash functions. Now we send $\vec h_{i-1}$ and $\tau_i$ to the oracle. The oracle does the followig: Set $h_{i}=\tau_k$ and define $w_{i\Delta}$ according to \cref{def:weight}, using \cref{alg:weights0}. Now, using $\mbox{WeightCheck}(B^{i\Delta},\vec h_i)$ (\cref{lemma:wcheck}), it checks if $w_{i\Delta}$ is min-isolating for the \emph{Block System} $B^{i\Delta}$ (can be done in $\mcO(\Delta\log n)$ space since $w_{i\Delta}$ is an $\mcO(i\Delta)$ bit weight function). The oracle reports whether this weight is indeed isolating or not, to the base machine. If this is the case, the base machine sets $b_{k}=1$ i.e. same as setting $\vec h_i=(h_1,h_2,\ldots,h_{i-1},h_{i}=\tau_k)$, set $\mcC=(\tau_1',\tau_2'\ldots,\tau_{k-1}',\tau_k'=\tau_k,\tau_{k+1},\ldots,\tau_t)$, and move to find the next hash function i.e. set $i=i+1$ and process $\tau_{k+1}$.
		
		 If this was not the case i.e. $\tau_k$ is a `bad' hash function, we compress using \Cref{alg:comp}: We know from \cref{lemma:goodhash} that the number of such `bad' hash functions is $n^c/n=n^{c-1}$ in number. Let $\bad(\vec h_{i-1})=(h'_1,h'_2,\ldots,h'_{n^{c-1}})$ be the sequence of such `bad' hash functions in increasing order as binary numbers. Now, for all $h\in \{0,1\}^{c\log n}$ check if $h\in \bad(\vec h_{i-1})$ by again sending $\vec h_{i-1}$ and $h$ to the oracle, which first constructs $w_{i\Delta}$ using $h$ as the new hash function $h_{i}=h$ in space $\Delta\log n$ (\cref{alg:weights0}), and then uses WeightCheck (\Cref{lemma:wcheck}). Thus, we find the index $j$ such that $\tau_k=h'_j$ i.e. $\tau_k$ is the $j$th `bad' hash function in $\bad(\vec h_i)$. We set $b_k=0$ and replace $\tau_k$ by $j$ i.e. $\mcC=(\tau_1',\tau_2'\ldots,\tau_{k-1}',\tau_k'=j,\tau_{k+1},\ldots,\tau_t)$. Observe that $\tau_k$ requires $c\log n $ bits to store, whereas $j$ is only $(c-1)\log n$ bits. Hence, we have saved $\log n$ bits in the process. Call this procedure\footnote{This skips making the free space contiguous; this can be achieved by shifting all the free space to the right, and while decompressing, re--shifting the space back} $\comp(\mcC,\tau_k)$ (\Cref{alg:comp}). After this compression, process $\tau_{k+1}$ with the aim of searching for the $i^{th}$ hash function.
		 
		  Suppose we have processed all the $\tau_k$'s, and solved reachability. Then we restore the catalytic tape. To do so, we enumerate over all $k\in[t]$ and check if $b_k=0$. Now, $\decomp$ can be described naturally. We want to decompress $\tau_k'$ where $b_k=0$, notice that $\tau_k'$ is a $(c-1)\log n$ bit number that is the index of the original bad hash function. Therefore $j=\tau_k'$ denotes the index such that $\tau_k$ (the original element in the Catalytic Tape) is the $j$th bad hash function. Now, as we can access the previous good hash functions on the catalytic tape (by checking if their corresponding $b=1$), we can enumerate over all $c\log n$ strings to find the $j$th hash function that is $\bad$, by checking if the corresponding weight assignment (which can be computed using \Cref{alg:weights0}) is isolating or not using \Cref{lemma:wcheck} (again by calling the oracle, like in $\comp$). Let $\tau_k$ be the $j$th such hash function. Then, define $\decomp(\mcC,j=\tau_k')=\tau_k$ (\Cref{alg:decomp}). 
		  
		  \item In the above process, suppose we found $\ell/\Delta$ `good' hash functions from the catalytic tape such that $W=w_{\ell}$ is indeed isolating, then we are done. 
		   Otherwise, we have compressed at least $t-\ell/\Delta$ many $\tau_k$'s. Therefore, we have saved at least $(t-\ell/\Delta)\log n$ bits of space, and from our choice of $t$ we have $((c+1)\ell/\Delta-\ell/\Delta)\log n=(\ell/\Delta).c\log n$ bits of free space available. Divide this free space into $\ell/\Delta$ many blocks of $c\log n$ bits. The $i^{th}$ block is designated to be the $i^{th}$ hash function $h_i$. Suppose we have found $h_1,\ldots,h_{i-1}$. Then we enumerate over all possible $n^c$ strings for $h_i$, and send $\vec h_{i-1},h_i$ to the oracle which then tells us if $h_i$ is good or not. If it is good, we move to the next block $i+1$, and otherwise we check if the next candidate for $h_i$ is good or not, and proceed in this way. After we have indeed found the required set of good hash functions, we send these hash functions to the oracle which then solves reachability for us (as in the proof of \Cref{thm:nl}). Then, we move on to reconstuct the catalytic tape as described in $\decomp$ above.
\end{enumerate} 
This completes the proof of correctness.
	
Observe that $\log W=\mcO(\ell^{\alpha}\log n)=\mcO(\log^{1+\alpha}n)$. In the above computations, the only places where we use the space beyond $\log n$ are storing the weight of a vertex, and when using \cref{lemma:wcheck}, both of which require only $\mcO(\log^{1+\alpha}n)$, and all these computations are only done in the oracle. The catalytic space used is $t\cdot c\log n=\mcO(\log^{2-\alpha}n)$ and workspace used is $\mcO(\log n)$. Thus, we are done.

\end{proof}
\begin{algorithm}
	\caption{Compress($\mcC,\tau_k$)}\label{alg:comp}
	\begin{algorithmic}[1]
		\State \textbf{Input:} $(G,b=(b_1,b_2,\ldots,b_t),\mcC=(\tau_1,\tau_2,\ldots,\tau_t))$
		\State \textbf{Output:} Compressed $\tau_k$
		\State $index\gets 0$, $hash\gets 0^{2\log n}$, $val\gets\tau_k$
		\State $i\gets 1+\#\{j|b_j=1,j<k\} $ \Comment{i.e. (the number of good hash functions before $\tau_k$) $+1$}
		\While {$hash\le val$}
		\State $\tau_k\gets hash$\Comment{in the catalytic tape set $\tau_k\gets hash$}
		\State $\vec h_i=\vec h_{i-1},h_i=\tau_k$
		\If {$\mbox{WeightCheck}(B^{i\Delta},\vec h_i)$ is False} \Comment{Using $h_i\gets hash$}
		\State $index\gets index+1$\Comment{$hash$ is the $index$th bad hash function}
		\EndIf
		\State $hash\gets hash+1$
		\EndWhile
		\State \Return $index$
	\end{algorithmic}
\end{algorithm}
\begin{algorithm}
	\caption{DeCompress($\mcC,\tau_k'$)}\label{alg:decomp}
	\begin{algorithmic}[1]
		\State \textbf{Input:} $(G,b=(b_1,b_2,\ldots,b_t),\mcC=(\tau_1',\tau_2',\ldots,\tau_t')),k$
		\State \textbf{Output:} Revert back the Catalytic Tape $\tau_k'$ to its initial value.
		\State $index\gets 0,$ $hash\gets 0^{2\log n}$
		\State $i\gets 1+\#\{j|b_j=1,j<k\} $ 
		\While {$index< \tau_k'$}
		\State $\tau_k\gets hash$\Comment{in the catalytic tape set $\tau_k\gets hash$}
		\State $\vec h_i=\vec h_{i-1},h_i=\tau_k$
		\If {$\mbox{WeightCheck}(B^{i\Delta},\vec h_i)$ is False}
		\State $index\gets index+1$\Comment{$hash$ is the $index$th bad hash function}
		\EndIf
		\State $hash\gets hash+1$
		\EndWhile
		\State Replace $\tau_k'$ with $\tau_k$
	\end{algorithmic}
\end{algorithm}

\begin{remark}\label{rem:nl}
	Notice that in the above algorithm, we never invoke the $\UTISP[\poly(n),\log^{1.5}n]$ algorithm of \cite{MP} as a blackbox. If we are able to find the necessary hash functions, we check for reachability using the unambiguous oracle. If, on the other hand, we are not able to find the necessary hash functions, we enumerate over all candidate hash functions to find the sequence of good hash functions. In this routine, we always check the goodness of a hash function using the unambiguous oracle. Since, with $\ell/\Delta$ many hash functions, the weights are of $\Delta\log n$ bitlength, we require $\log^{1+\alpha}n$ unambiguous space to check if a weight assignment is isolating via \Cref{lemma:wcheck}. Thus the oracle machine requires $\mcO(\log^{1+\alpha}n)$ bits of memory, and the catalytic space used is $\mcO(\ell/\Delta\cdot \log n)=\mcO(\log^{2-\alpha}n)$.
\end{remark}

\begin{corollary}\label{cor:nl}
	The following holds:
	\begin{enumerate}
		
		\item $\NL\subseteq \CTISP[\poly(n),\log n,\log^2n]\iff \UL\subseteq \CTISP[\poly(n),\log n,\log^{2}n]$.
		\item $\NL\subseteq \CSPACE[\log n,\log^2n]\iff \UL\subseteq \CSPACE[\log n,\log^{2}n]$.
		
	\end{enumerate}
	\begin{proof}
		\begin{enumerate}
			Setting $\alpha=0$ is \Cref{thm:nl} gives $\NL\subseteq \CTISP\left[\poly(n), \log n,\log^{2}n\right]^{\UL}$. It is not clear that if $\UL\subseteq \CTISP\left[\poly(n), \log n,\log^{2}n\right]$, then $\CTISP\left[\poly(n), \log n,\log^{2}n\right]^{\UL}\subseteq \CTISP\left[\poly(n), \log n,\log^{2}n\right] $. But the queries to the $\UL$ oracle that we make in the proof of \Cref{lemma:check} are always available on the catalytic tape, work tape and input tape (since the hash functions are available on the catalytic tape). So, if $\UL$ is contained in $\CTISP[\poly(n),\log n,\log^2 n]$ then we can simulate these oracle queries in the base machine itself. Thus if we have $\UL\subseteq \CTISP[\poly(n),\log n,\log^{2}n]$ then, $ \NL\subseteq \CTISP[\poly(n),\log n,\log^{2}n]$, and the converse follows since $\UL\subseteq\NL$. The same proof goes through for  $\NL\subseteq \CSPACE[\log n,\log^2n]\iff \UL\subseteq \CSPACE[\log n,\log^{2}n]$.

		\end{enumerate}
	\end{proof}
\end{corollary}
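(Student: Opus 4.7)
The plan is to obtain both biconditionals from Theorem \ref{thm:nl} (specialized to $\alpha = 0$), which gives $\NL \subseteq \CTISP[\poly(n),\log n,\log^{2}n]^{\UL}$, by eliminating the $\UL$ oracle under the hypothesis that $\UL$ is itself catalytically tractable. The trivial directions (``$\NL$-containment implies $\UL$-containment'') follow immediately from $\UL \subseteq \NL$, so the real content is the converses.

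First I would invoke Theorem \ref{thm:nl} at $\alpha = 0$ to get a $\CTISP[\poly(n),\log n,\log^{2}n]$ base machine $M$ with access to a $\UL$ oracle that decides $\NL$. Under the assumption $\UL \subseteq \CTISP[\poly(n),\log n,\log^{2}n]$, let $N$ be the catalytic machine deciding the $\UL$ oracle language. My goal is to compose $M$ with $N$ while staying inside $\CTISP[\poly(n),\log n,\log^{2}n]$. In general oracle composition is problematic for $\CL$ --- as the paper itself notes, $\CL^{\CL} = \CL$ is not known --- because the simulating machine may need to reread the query bits multiple times while the base machine mutates its catalytic tape. So the central task is to show that in our setting the oracle invocations have a form that sidesteps this obstacle.

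The key step is to inspect the oracle queries actually produced by \Cref{alg:comp1} and \Cref{lemma:check}. Each query is a $\mbox{WeightCheck}$ (or $\mbox{WeightEval}$) call on the input graph $G$ together with a sequence of hash functions $\vec h_{i-1}$ already residing on the catalytic tape and a candidate $h_i$ that is either a catalytic block $\tau_k$ or a string being enumerated on the worktape. In every case the ``input'' to the $\UL$ query is entirely determined by (a) the input tape contents, (b) the worktape contents, and (c) the untouched catalytic tape. The base machine never writes onto the catalytic tape during the construction of the query, so the query string is effectively a read-only compound object backed by these three tapes. This means that when we substitute $N$ for the oracle, $N$ can treat these tapes as its input tape and run its own catalytic computation using a fresh private catalytic tape (allocated inside the original $\log^{2}n$ catalytic budget, with a constant multiplicative overhead absorbed into the big-O) and a fresh private worktape of $\mcO(\log n)$ bits. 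After $N$ halts, its catalytic tape is restored and the base machine resumes; no cross-contamination occurs.

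For the polynomial-time variant (part 1) the composition preserves $\poly(n)$ running time because $M$ makes only polynomially many queries and $N$ runs in polynomial time per query. For part 2 we drop the time constraint and the identical argument yields $\NL \subseteq \CSPACE[\log n, \log^{2}n]$ from $\UL \subseteq \CSPACE[\log n, \log^{2}n]$. The main obstacle I anticipate is exactly the oracle-simulation step: one has to verify carefully that every query in \Cref{alg:comp1} really has its argument presented statically on the (input, worktape, catalytic) triple and that the base machine does not need to modify its catalytic tape while assembling a query. Once that bookkeeping is discharged, the biconditionals follow.
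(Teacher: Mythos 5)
Your proposal is correct and takes essentially the same route as the paper: set $\alpha=0$ in Theorem~\ref{thm:nl}, observe that $\UL\subseteq\NL$ gives the trivial direction, and then eliminate the oracle by noting that every $\UL$ query in Lemma~\ref{lemma:check} is determined statically by the (input, worktape, catalytic) triple, so a catalytic $\UL$-decider can be run in place with a private catalytic/work region inside the $\mcO(\log^2 n)$ budget. Your additional remarks about the constant-factor overhead and the $\CL^{\CL}=\CL$ obstacle merely spell out what the paper leaves implicit.
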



\section{Isolating Proof Trees}
In this section we revisit the isolation lemma as applied to the
evaluation of semi-unbounded boolean circuit. The problem is complete
for $\LogCFL$ which is known to be contained in catalytic logspace, as
$ \LogCFL \subseteq \TC^1 \subseteq \CL$. However, as in the previous
section we consider a variant of catalytic computation to explore
derandomizing the isolation lemma in this setting.

Recall that a boolean circuit $C$ is an acyclic digraph with gates as
vertices. Each gate is either an $\wedge$ gate or an $\vee$ gate. A
special gate is designated as the output gate and the in-degree $0$
gates are \emph{input gates}. Each input gate is labeled by a boolean
variable from the set
$\{x_1,x_2,\ldots,x_n, \bar{x}_1,\bar{x}_2,\ldots,\bar{x}_n\}$. In
general $\wedge$ and $\vee$ gates are allowed to have have two or more
inputs and the gates output can fan out as input to other gates in the
circuit.

\begin{definition}[\cite{MP}]
Let $C=(V, E)$ be a circuit, with $V$ denoting the set of gates and
the directed edge set is $E$. A weight assignment for $C$ is a mapping
$w: V \to \mathbb{N}$. The weight $w(F)$ of a \emph{proof tree} $F$
with output $v$ equals $w(v)$ plus the sum over all gates $u$ that
feed into $v$ in $F$, of the weight of the proof tree with output $u$
induced by $F$. For an input $z$ for $C$, and $g \in V, w(C, z, g)$
denotes the minimum of $w(F)$ over all proof trees $F$ for $(C, z,
g)$, or $\infty$ if no proof tree exists. The weight assignment $w$ is
min-isolating for $(C, z, g)$ if there is at most one proof tree $F$
for $(C, z, g)$ with $w(F)=w(C, z, g)$. For $U \subseteq V, w$ is
min-isolating for $(C, z, U)$ if $w$ is min-isolating for $(C, z, u)$
for each $u \in U$. We call $w$ min-isolating for $(C, z)$ if $w$ is
min-isolating for $(C, z, V)$.
\end{definition}

We only consider $\SAC^1$ circuits. These are boolean circuit of depth
$d=\mcO(\log n)$, where $\wedge$ gates have fanin $2$, and $\vee$ gates
have unbounded fanin. $V=\cup_{i\le d+1}V_i$, where $V_i\subset
[n]\times \{i\}$ and $V_0$ are the literals and constants. $\wedge$
and $\vee$ gates alternate depending on the parity of the
layer. Assume $d=2\ell$ for some $\ell$, let $V_{\leq i}
= \cup_{j \leq i} V_j$, and $L_i=V_{2i-1}$ denote the $i$th $\wedge$
layer.

\begin{lemma}[\cite{MP}]\label{lemma:treeweights}
	For fixed $n,\ell\in \bbN$ and $i\le i'\in \bbN$, let
	$D_{n,\ell}=[n]\times [2\ell+1]$ be naturally identified with
	$V$. If $w_i$ is a min-isolating weight assignment for
	$(C,z,V_{\le 2i})$, and $h$ is a hash function chosen
	uniformly at random from $\Gamma_{n(d+1),r}$
	(recall \Cref{lemma:hash}), then
		$$
	\operatorname{Pr}_h\left[w_{i'}\text { is min-isolating for }\left(C, z, V_{\leq 2 i^{\prime}}\right)\right] \geq 1-1 / n,
	$$
	where $w_{i+j}\forall j\le i-i'$ is defined with the help of the picked hash function $h$ as follows:
		$$
	w_{i+j}(g)= \begin{cases}w_{i+j-1}(g)+h(g) \cdot \gamma^{j-1} & \text { if } g \in L_{i+j} \\ w_{i+j-1}(g) & \text { otherwise }\end{cases}
	$$
	Here $\gamma, r$ are polynomials in $n$. The bitlength of $w_{i'}$ is $\mcO((i'-i)\log n+\text{bitlength of }w_i)$ and can be constructed in space $\mcO(\log n)$ given the access to the hash function $h$.
\end{lemma}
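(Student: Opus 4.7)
The plan is to follow exactly the same template as the reachability argument of \cite{MP} recalled in \Cref{sec:blocks}, with two structural adaptations to the $\SAC^1$ circuit setting. First, in a proof tree, fan-in-$2$ $\wedge$-gates force both children to be included, so ambiguity can arise only at $\vee$-gates. Second, the hashed weights are added on the $\wedge$-layers $L_{i+j}=V_{2(i+j)-1}$ lying just below each new $\vee$-layer $V_{2(i+j)}$. I would then induct on $j=1,\dots,i'-i$, taking as hypothesis that $w_{i+j-1}$ is min-isolating on $V_{\le 2(i+j-1)}$, with unique min proof-tree weights $\mu_{j-1}(g)$ at every gate $g$ there.

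\emph{Step 1 (localization of the new bad event).} Because $w_{i+j}$ agrees with $w_{i+j-1}$ outside $L_{i+j}$ and every proof tree rooted in $V_{\le 2(i+j-1)}$ is contained in $V_{\le 2(i+j-1)}$, the values $\mu_{j-1}$ carry over unchanged. At a new $\wedge$-gate $u\in L_{i+j}$ with its two $\vee$-gate children $c_1^u,c_2^u\in V_{2(i+j-1)}$, the min proof tree at $u$ remains unique and has weight $\mu_j(u)=h(u)\gamma^{j-1}+\mu_{j-1}(c_1^u)+\mu_{j-1}(c_2^u)$. Consequently, any new failure of min-isolation up to layer $2(i+j)$ must occur at some $\vee$-gate $g'\in V_{2(i+j)}$ with two distinct children $u,v\in L_{i+j}$ of $g'$ satisfying $\mu_j(u)=\mu_j(v)$.

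\emph{Step 2 (universal hashing plus shift, then union bound).} The bad event at $(g',u,v)$ rewrites as
\[
(h(u)-h(v))\gamma^{j-1}\;=\;\delta_{u,v},\qquad \delta_{u,v}\;:=\;\bigl[\mu_{j-1}(c_1^v)+\mu_{j-1}(c_2^v)\bigr]-\bigl[\mu_{j-1}(c_1^u)+\mu_{j-1}(c_2^u)\bigr],
\]
where $\delta_{u,v}$ depends on $h$ only through layers $L_{i+1},\dots,L_{i+j-1}$. Applying the same magnitude-bound trick as \cite{MP}---each proof tree contains at most $n$ gates per hashed layer, each with weight $\le r\gamma^{l-1}$, so $|\delta_{u,v}|\le 2\sum_{l=1}^{j-1}nr\gamma^{l-1}$, which stays below $\gamma^{j-1}$ by the choice $\gamma=2^{\lceil\log(2n(d+1)r)\rceil}$---the equation can hold only when $h(u)=h(v)$ \emph{and} $\delta_{u,v}=0$. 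By \Cref{lemma:hash} with $\delta=0$, the first condition has probability at most $1/r$. A union bound over $j\in[1,i'-i]$, over $\vee$-gates $g'\in V_{2(i+j)}$, and over pairs $(u,v)$ of children of $g'$ covers at most $O(n^3)$ bad events, yielding total failure at most $1/n$ for $r$ a sufficiently large polynomial in $n$. The bitlength claim follows from $w_{i'}(g)\le r\gamma^{i'-i-1}$ added to the $w_i$ contribution, and the $O(\log n)$-space transducer is immediate from the explicit recursive formula in the statement: given $g$ one needs only its layer index, one query $h(g)$, and a power of $\gamma$.

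\emph{Main obstacle.} The delicate point, inherited from \cite{MP}, is that a \emph{single} hash function $h$ is reused across all $i'-i$ hashed layers, so $\delta_{u,v}$ at layer $L_{i+j}$ is itself $h$-dependent through the lower hashed layers, and one cannot directly apply universal hashing without first decoupling levels. The magnitude bound is precisely this decoupling, and the care required in the accounting---verifying that the maximum proof-tree weight under $w_{i+j-1}$ stays below $\gamma^{j-1}$ \emph{deterministically} rather than just in expectation, in particular controlling the contribution inherited from $w_i$ itself---is where the proof genuinely needs the polynomial size of $\gamma$ to be chosen correctly. Everything else is a direct translation from paths in a layered DAG to proof trees in a $\SAC^1$ circuit.
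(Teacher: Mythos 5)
The paper does not prove \Cref{lemma:treeweights}; it is imported verbatim from \cite{MP} as a black box (just as \Cref{lemma:hash}, \Cref{lemma:wcheck} and \Cref{lemma:goodhash} are), so there is no in-paper proof to compare yours against. Your reconstruction does have the right shape --- induct over hashed $\wedge$-layers, observe that fan-in-$2$ $\wedge$-gates force inclusion of both children so new ties can only surface at the next $\vee$-layer, decouple the reused hash $h$ across levels via the magnitude of $\gamma$, and union-bound over pairs of siblings using the pairwise guarantee of \Cref{lemma:hash} --- and this is indeed the template the paper follows in \Cref{sec:blocks} for reachability.

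Two details of the magnitude bound need repair, though, and the second is more than cosmetic. First, the quantity that must stay below $\gamma^{j-1}$ is a bound on the total weight contributed to a \emph{proof tree} by levels below $L_{i+j}$, and proof trees in an $\SAC^1$ circuit are \emph{formal trees}: the same gate at layer $L_{i+l}$ can be counted once per tree node, and the number of tree nodes at $L_{i+l}$ in a tree rooted at $V_{2(i+j-1)}$ is $2^{\,j-1-l}$ (it doubles at every intervening $\wedge$-layer), not ``at most $n$''. The conclusion still holds because this exponential-in-$(j-l)$ factor is dominated by the geometric decay $\gamma^{-(j-1-l)}$ once $\gamma$ is a suitably large polynomial, but the accounting you wrote (``$n$ gates per hashed layer'') is not the right invariant and would not survive a referee. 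Second, and more seriously, your bound $|\delta_{u,v}|\le 2\sum_{l=1}^{j-1}nr\gamma^{l-1}$ accounts only for the freshly hashed layers $L_{i+1},\dots,L_{i+j-1}$ and omits the contribution inherited from $w_i$ on $L_1,\dots,L_i$; that contribution is multiplied by the same exponential tree-node counts (up to $2^{i+j-1-l}$ nodes at $L_l$), so for, say, $j=2$ you would need $\gamma \gtrsim 2^{i}\cdot\max_g w_i(g)$, and in the recursive use of the lemma $\max_g w_i(g)$ grows like $\gamma^{\Delta-1}$, which breaks the estimate for $\Delta\ge 3$. You explicitly flag this in your ``main obstacle'' paragraph but do not actually resolve it; \cite{MP} handle it by structuring the disambiguation requirement so that the inherited part appears as a fixed, $h$-independent shift $\delta$ that \Cref{lemma:hash} tolerates directly (note the universal-hash bound holds for \emph{every} $\delta\in\bbN$, not only $\delta=0$), rather than forcing it under the $\gamma^{j-1}$ threshold. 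As written, your step 2 asserts the collision can only occur when $h(u)=h(v)$ \emph{and} $\delta_{u,v}=0$, which overcommits to a magnitude bound you have not established. I would rewrite the decoupling step to (i) separate $\delta_{u,v}$ into an $h$-independent inherited part and an $h$-dependent fresh-layer part, (ii) bound only the latter below $\gamma^{j-1}$ using the correct tree-node counts, and (iii) invoke \Cref{lemma:hash} with the inherited part playing the role of the fixed offset $\delta$.
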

The analogous variant of \Cref{lemma:wcheck} in this context is the following:
\begin{lemma}[\cite{MP}, see also \cite{AR}]\label{lemma:wcheck0}
	There exist unambiguous nondeterministic machines
	$\mbox{WeightCheck}$ and $\mbox{WeightEval }, $each equipped
	with a stack that does not count towards the space bound, such
	that for every layered $\SAC^1$ circuit $C=(V, E)$ of depth
	$d$ with $n$ gates, every input $z$ of $C$, weight assignment
	$w: V \mapsto \mathbb{N}$, and $g \in V$ :
	
	(i) $\mbox{WeightCheck}(C,z,w)$ decides whether or not $w$ is min-isolating for $(C,z)$, and
	
	(ii) $\mbox{WeightEval}(C,z, w,g)$ computes $w(C,z,g)$ provided $w$ is min-isolating for $(C,z)$.
	
	Both machines run in time $\poly (2^d,\log (W), n)$ and space
	$\mcO(d+\log (W)+\log (n))$, where $W$ is the maximum weight
	of a gate under $w$.
\end{lemma}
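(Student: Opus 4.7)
The plan is to mirror the Reinhardt--Allender template used for reachability in \Cref{lemma:wcheck}, but now exploit the auxiliary pushdown stack of the $\LogCFL$ setting to run a recursive depth-first traversal over proof trees. For $\mathrm{WeightEval}(C,z,w,g)$ I would implement the natural recursion: at a literal leaf $g$, return $w(g)$ if the literal is satisfied by $z$ and $\infty$ otherwise; at an $\wedge$ gate with children $g_1,g_2$, return $w(g)+\mathrm{WeightEval}(g_1)+\mathrm{WeightEval}(g_2)$; at an $\vee$ gate with children $g_1,\ldots,g_k$, return $w(g)+\min_i \mathrm{WeightEval}(g_i)$. The recursion lives on the pushdown stack (which does not count toward space), and each activation only needs to remember a current weight ($O(\log W)$ bits), a gate index ($O(\log n)$ bits), and an iteration counter over $\vee$-children. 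Since $C$ has depth $d$, the stack holds at most $d$ frames, so the worktape fits in $O(d+\log W+\log n)$ bits of per-level bookkeeping. A proof tree has at most $2^d$ internal gates (fanin-$2$ $\wedge$ gates at depth $d$), so the total running time is $\poly(2^d,\log W,n)$.

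For $\mathrm{WeightCheck}(C,z,w)$ I would augment the recursion with a uniqueness flag propagated from children to parents: leaves are trivially unique; an $\wedge$ gate inherits uniqueness iff both children are unique; an $\vee$ gate is unique iff the minimum child weight is attained by a unique child $g_i$ and $g_i$'s min-weight proof tree is itself unique. Looping this recursion over every $g\in V$ (or propagating the flags during a single top-down pass from the output) yields a yes/no decision within the same resource bounds.

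For unambiguity I would recast the recursion in the Allender--Reinhardt style: instead of sequentially enumerating children at each $\vee$ gate, the machine \emph{nondeterministically guesses} the unique child lying on the min-weight proof tree, and then verifies its guess by a recursive $\mathrm{WeightEval}$ sub-call that independently computes the minimum among the siblings. Because $w$ is min-isolating, there is exactly one proof tree achieving $w(C,z,g)$, so exactly one guess survives verification at every $\vee$ gate, giving a unique accepting path. The main technical obstacle is precisely preserving this unambiguity in the presence of unbounded $\vee$ fan-in: naively guessing children would spawn many accepting paths. The resolution, in the spirit of \cite{AR,MP}, is that each $\vee$ gate must produce a \emph{certificate of minimality}---the weights of all siblings, computed by recursive $\mathrm{WeightEval}$ calls on the stack---and the machine rejects whenever more than one sibling attains the minimum; min-isolation then ensures a single surviving computation path, all while fitting into the claimed $\poly(2^d,\log W,n)$ time and $O(d+\log W+\log n)$ worktape space.
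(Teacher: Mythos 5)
Note first that the paper itself does not prove \Cref{lemma:wcheck0}: it is imported verbatim from van Melkebeek--Prakriya~\cite{MP} (building on Reinhardt--Allender~\cite{AR}), so there is no in-paper proof to compare against. Judged on its own merits, your reconstruction has a gap that is fatal to the claimed time bound.

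Your $\mathrm{WeightEval}$ recurses on \emph{every} child at each $\vee$ gate and takes the minimum. In $O(d+\log W+\log n)$ worktape there is no room to memoize gate values, so the recursion unfolds the circuit DAG into a tree whose size is the product of the fan-ins along a root-to-leaf path. With alternating layers of $\wedge$ fan-in $2$ and $\vee$ fan-in up to $n$, that is roughly $(2n)^{d/2}=n^{\Theta(d)}$, which for $\SAC^1$ circuits (where $d=\Theta(\log n)$) is quasipolynomial, not $\poly(2^d,n)=\poly(n)$. Your justifying sentence---``A proof tree has at most $2^d$ internal gates, so the total running time is $\poly(2^d,\log W,n)$''---conflates the size of a \emph{single} proof tree (which picks exactly one child at each $\vee$ gate, hence $O(2^d)$ gates) with the size of the \emph{recursion tree}, which branches on all siblings at every $\vee$ gate. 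The same confusion undermines the unambiguity argument: if the minimum is found by iterating over all siblings and recursively evaluating each, $\mathrm{WeightEval}$ uses no nondeterminism at all and the exponential recursion is unavoidable; whereas if the minimizing child is merely guessed, certifying the guess \emph{without} re-evaluating all siblings is exactly the nontrivial part, and min-isolation alone does not supply a fast certificate.

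The actual proof in \cite{AR,MP} resolves both issues simultaneously by the Reinhardt--Allender inductive \emph{double counting} technique. One maintains, layer by layer, the number $c_i$ of gates up to layer $i$ that have a proof tree on input $z$ together with the sum $D_i$ of their minimum proof-tree weights, and uses the pair $(c_i,D_i)$ to \emph{unambiguously} certify, for any single gate $g$ at the next layer, whether it has a proof tree and what its minimum weight is: one nondeterministically exhibits $c_i$ lower-layer gates whose weights sum to $D_i$, using the auxiliary stack only to trace the $O(2^d)$-size proof trees of the exhibited gates, never to enumerate all siblings at an $\vee$ gate. Min-isolation of $w$ is what makes the count-driven verification single out a unique accepting computation path. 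This is what achieves $\poly(2^d,\log W,n)$ time and unambiguity at once; your ``certificate of minimality'' gestures at the right intuition, but as spelled out (recursive $\mathrm{WeightEval}$ on all siblings) it is the naive recursion in disguise and misses the counting mechanism that is the whole point of the construction.
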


		
As the main result of this section, we show a statement analogous for $\LogCFL$, mirroring \Cref{thm:nl} for $\NL$. For the right hand side of the inclusion, we show a tradeoff result, that $\LogCFL=\SAC^1$ can be simulated by a catalytic machine with workspace $s(n)=\log n$ and catalytic space $c(n)$ between $\log^{1.5}n$ and $\log^2 n$, with an oracled machine whose description is as follows: it is an unambiguous machine with work space $\log^3 n/c(n)$ and has an additional stack which is not counted for in the work space, while maintaining polynomial time. To prove this we require the following crucial lemma.

\begin{lemma}\label{lemma:sac1}
		Given a layered $\SAC^1$ circuit $C=(V,E)$ of depth
		$d$ with $n$ gates, with input $z$, and $\alpha\in
		[0,0.5]$, we can in
		$\C\TISP[\poly(n),\log{n},\log^{2-\alpha}{n}]^{\U\AuxPDA\text{-}\TISP[\poly(n),\log^{1+\alpha}n]}$ find a weight assignment $W:V\to
		[\poly(2^{\log^{1+\alpha}n})]$ that is min-isolating
		for $(C,z)$, and check if there exists a proof-tree
		$F$ of $(C,z)$.
\end{lemma}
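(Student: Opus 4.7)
The plan is to repeat the compress-or-compute scheme used for \Cref{lemma:check}, but now driven by the semi-unbounded isolation machinery of \Cref{lemma:treeweights} and verified by the $\U\AuxPDA\text{-}\TISP$ procedures of \Cref{lemma:wcheck0} instead of by $\U\TISP$. Fix $\ell=d/2=\mcO(\log n)$, $\Delta=\ell^\alpha$ and $r=n^6$, so that each hash function drawn from $\Gamma_{n(d+1),r}$ has a description of length $c\log n$ for some constant $c$. Partition the catalytic tape into $t=(c+1)\ell/\Delta$ blocks $\mcC=(\tau_1,\ldots,\tau_t)$ of $c\log n$ bits each, and keep in the $\mcO(\log n)$-bit worktape the status bits $b_k\in\{-1,0,1\}$ as in \Cref{alg:comp1}. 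The weight $w_{2i}$ will be built by successive applications of \Cref{lemma:treeweights} with the shift parameter $\Delta$, so that after using $i$ hash functions, $w_{2i\Delta}$ is min-isolating for $(C,z,V_{\le 2i\Delta})$ and has bitlength $\mcO(\Delta\log n)=\mcO(\log^{1+\alpha} n)$.

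Proceeding left to right through the catalytic tape, while we still need more good hashes we tentatively set the next hash $h_i=\tau_k$, append it to the sequence $\vec h_{i-1}$ already extracted, and ship $(C,z,\vec h_i)$ to the oracle. The oracle regenerates $w_{2i\Delta}$ on the fly in $\mcO(\log^{1+\alpha} n)$ space (via the $\SAC^1$-analog of \Cref{alg:weights0}, which is an $\mcO(\log n)$-space transducer given the hashes) and runs $\mbox{WeightCheck}(C,z,w_{2i\Delta})$ from \Cref{lemma:wcheck0}, which needs time $\poly(2^d,\log W,n)=\poly(n)$ and space $\mcO(d+\log W)=\mcO(\log^{1+\alpha} n)$ modulo its private stack; this fits inside the declared oracle class $\U\AuxPDA\text{-}\TISP[\poly(n),\log^{1+\alpha} n]$. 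If the oracle accepts, set $b_k=1$ and move on; otherwise, by \Cref{lemma:treeweights} at most a $1/n$ fraction of hashes are bad, so there are at most $n^{c-1}$ bad hashes and we can replace $\tau_k$ by its index $j\in[n^{c-1}]$ among the bad hashes (found by sweeping through all $c\log n$-bit strings and querying WeightCheck again, exactly as in \Cref{alg:comp}), saving $\log n$ bits and setting $b_k=0$.

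After the sweep one of two things happens. If the scheme produced $\ell/\Delta$ good hashes, we ship the entire sequence $\vec h_{\ell/\Delta}$ to the oracle and invoke $\mbox{WeightEval}(C,z,w_{2\ell},g_{out})$ from \Cref{lemma:wcheck0} on the output gate (or on the constant $1$ gate attached above it) to decide the existence of a proof tree; the oracle's space remains $\mcO(\log^{1+\alpha} n)$ plus stack. Otherwise we have compressed at least $t-\ell/\Delta=c\cdot\ell/\Delta$ blocks, freeing $(c\cdot\ell/\Delta)\log n$ bits, which after a left-to-right shift yields a contiguous scratch region of size $\mcO(\log^{2-\alpha}n)$. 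In this branch we brute-force the remaining good hashes directly into the freed region (enumerating candidates and calling the oracle to check each), and then query the oracle for $\mbox{WeightEval}$ on the resulting $w_{2\ell}$ as before. Finally we restore the catalytic tape with the symmetric $\decomp$ routine (\Cref{alg:decomp}): for each $k$ with $b_k=0$, enumerate candidate hashes in order, rebuild $\vec h_{i-1}$ from the surviving good blocks, and reidentify the $\tau_k'$-th bad hash using WeightCheck; reverse any shift performed in the compression step. Throughout, the base machine uses only $\mcO(\log n)$ worktape and $\mcO(\log^{2-\alpha}n)$ catalytic tape, runs in polynomial time, and every oracle query lies in $\U\AuxPDA\text{-}\TISP[\poly(n),\log^{1+\alpha} n]$.

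The main obstacle is arranging the oracle calls so that the auxiliary stack used by WeightCheck/WeightEval on an $\SAC^1$ instance stays private to the oracle and does not pollute the base machine's $\log n$ worktape; this is why we insist the oracle class be $\U\AuxPDA\text{-}\TISP$ rather than just $\U\TISP$. Once this is set up, correctness (min-isolation of the final $w_{2\ell}$, counts of good versus bad hashes, reversibility of $\comp/\decomp$) is entirely parallel to the reachability argument of \Cref{lemma:check}, with \Cref{lemma:treeweights,lemma:wcheck0} playing the roles of \Cref{lemma:goodhash,lemma:wcheck} respectively.
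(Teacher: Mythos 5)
Your proposal matches the paper's proof essentially step for step: the same compress-or-compute sweep over a catalytic tape partitioned into $(c+1)\ell/\Delta$ blocks with status bits $b_k$, the same use of \Cref{lemma:treeweights} in place of \Cref{lemma:goodhash} and of \Cref{lemma:wcheck0} in place of \Cref{lemma:wcheck}, the same compress/decompress routines indexing bad hashes, and the same final branching into ``found enough good hashes, run WeightEval'' versus ``brute-force in the freed space.'' The only cosmetic discrepancy is your subscript convention ($w_{2i\Delta}$ where the paper writes $w_{i\Delta}$, tied to $V_{\le 2i\Delta}$), which does not affect the argument.
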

\begin{proof}
	The proof is very similar to that of \Cref{lemma:check}. Fix
	$\Delta=\ell^\alpha=\mcO(\log^{\alpha}n)$. We shall
	use \Cref{lemma:treeweights} with $i'-i=\Delta$. Let
	$\mcC=(\tau_1,\tau_2,\ldots,\tau_t)$ where $t=(c+1)\ell/\Delta$,
	and each $\tau_k$ is $c\log n$ bits i.e. the catalytic space used is
	$(c^2+c)(\ell/\Delta)\log n$. Assume that the hash
	functions from \cref{lemma:hash} when $r=\poly(n)$ as
	in \Cref{lemma:treeweights}, can each be represented using
	exactly $c\log n$ bits (there are $n^c$ strings of $2\log n$
	bits). The $\tau_k$'s will serve as hash functions in our
	construction of a min-isolating weight. For each $k\in[t]$, we
	store a number $b_k\in\{-1,0,1\}$ in the worktape to indicate
	if $\tau_k$ is a `good' hash function or not (and $b_k=-1$ if
	$\tau_k$ is yet unprocessed).

\begin{enumerate}
\item
Base case: $w_0\equiv 0, b_k=-1\forall k\in[t]$. $\vec h=\phi$ is the
set of good hash functions. At any stage, $\vec h$ is the sequence of
$\tau_k$'s such that $b_k=1$.

\item Suppose, we are processing the $k$th block $\tau_k$, of the catalytic
tape. Let $\vec h_i=(h_1,\ldots,h_i)$. We send $\vec h_i$ and $\tau_k$ to the oracle. Note that we have the min-isolating
weight assignment $w_{i\Delta}$ for $(C,z,V_{\le 2i\Delta})$, and we
want to construct $w_{(i+1)\Delta}$. The oracled machine
applies \cref{lemma:treeweights} to construct $w_{(i+1)\Delta}$ with
access to the hash function $h_{i+1}=\tau_k$. Then it can check if
$w_{(i+1)\Delta}$ is min-isolating for $(C,z,V_{2(i+1)\Delta})$
using \cref{lemma:wcheck0}.

\begin{itemize}
\item \textbf{Case 1: $w_{(i+1)\Delta}$ is min-isolating.}~
In this case set $b_k=1$, which is equivalent to updating $\vec
h_{i+1}=(h_1,\ldots,h_i,h_{i+1}=\tau_k)$.  Set $k\leftarrow
k+1,i\leftarrow i+1$.

\item \textbf{Case 2: $w_{(i+1)\Delta}$ is not min-isolating.}~
Set $b_k=0$. In this case, $\tau_k$ is a bad hash
 function. By \cref{lemma:treeweights}, the number of such bad hash
 functions is at most $ n^c/n=n^{c-1}$. Let $\bad(\vec
 h_i)=(h'_1,h'_2,\ldots,h'_{n^{c-1}})$ be the sequence of the bad hash
 functions in increasing order, sorted as binary encoded
 integers. Now, for all $h\in \{0,1\}^{c\log n}$ check if
 $h\in \bad(\vec h_i)$ by querying the oracle which uses \cref{lemma:wcheck0} as we can
 construct $w_{(i+1)\Delta}$ using $h$ as the new hash function
 $h_{i+1}=h$ is space $\Delta\log n$. Thus, we can find the index $j$
 such that $\tau_k=h'_j$ i.e. $\tau_k$ is the $j$th bad hash function
 from $\bad(\vec h_i)$. Replacing the $k$th block of the catalytic
 tape with $j$ saves $\log n$ bits of space. This step is analogous to
 the $\comp$, \Cref{alg:comp}.
			
After completion of the computation, we invoke the $\decomp$
procedure: Suppose we want to decompress $\tau_k'$ where $b_k=0$. Then
$j=\tau_k'$ denotes the index such that $\tau_k$ (the original element
in the catalytic tape) is the $j$th bad hash function. As we can
access all the previous good hash functions on the catalytic tape (by
checking if their corresponding $b=1$), we can enumerate over all the
$c\log n$ strings to find the $j$th hash function that is $\bad$. Let
$\tau_k$ be the $j$th bad hash function. Then, we define
$\decomp(\mcC,j=\tau_k')=\tau_k$ (Checking whether a hash is good or bad is done by the oracle). As mentioned, this step is analogous
to \Cref{alg:decomp}.
\end{itemize}

\item
In the above process, if we found $\ell/\Delta$ good hash functions,
then we have successfully constructed $W=w_\ell$. Then,
by \cref{lemma:wcheck0}, if $\mbox{WeightEval}(C,z,W,g=\text{output
gate})<\infty$, we accept, and reject otherwise. If we do not find
$\ell/\Delta$ many good hash functions, we have compressed at least
$t-\ell/\Delta$ many $\tau_k$'s, saving at least $c(\ell/\Delta)\log
n$ space. In this space we can enumerate over all $c\log n$ strings to find the $i^{th}$ good hash function given $\vec h_{i-1}$ for all $i\le \ell/\Delta$ using the oracle to check if a hash function is good or not. Then using $\vec h_{\ell/\Delta}$, the oracle checks for a proof tree by invoking $\mbox{WeightEval}$, \Cref{lemma:wcheck0}. Finally, we reconstruct the catalytic tape as discussed above.

\end{enumerate}
This completes the algorithm description along with its proof of
correctness. Clearly, the oracle workspace is bounded by
$\mcO(\log^{1+\alpha}n)$ and catalytic space is bounded by
$\mcO(\log^{2-\alpha}n)$, and both the oracled and base machine runs in polynomial time.
\end{proof}
Again, notice that the above proof does not directly use \cite{MP}'s algorithm as a blackbox by the same reasoning as in \Cref{rem:nl}. Finally, we have the following result:
\begin{theorem}\label{thm:cir}
	$\LogCFL\subseteq \C\TISP[\poly(n),\log{n},\log^{2-\alpha}{n}]^{\U\AuxPDA\text{-}\TISP[\poly(n),\log^{1+\alpha}n]}$
	for all $\alpha\in[0,0.5]$.
\end{theorem}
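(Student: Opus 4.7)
The plan is essentially a one-line deduction from the machinery already in place. By Proposition~\ref{sachard}, checking the existence of a proof tree of a layered $\SAC^1$ circuit is $\LogCFL$-hard under logspace reductions, and since $\LogCFL=\SAC^1$ every $\LogCFL$ language reduces to this proof-tree problem in deterministic logspace. Logspace reductions can trivially be performed by a $\C\TISP[\poly(n),\log n,\log^{2-\alpha}n]$ machine without touching the catalytic tape, so it suffices to place the proof-tree existence problem for layered $\SAC^1$ circuits in $\C\TISP[\poly(n),\log n,\log^{2-\alpha}n]^{\U\AuxPDA\text{-}\TISP[\poly(n),\log^{1+\alpha}n]}$.

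First I would observe that Lemma~\ref{lemma:sac1} already does exactly this: given the input circuit $C$ and input $z$, the lemma furnishes a catalytic procedure, with work tape $\mcO(\log n)$, catalytic space $\mcO(\log^{2-\alpha} n)$, polynomial time, and an $\U\AuxPDA\text{-}\TISP[\poly(n),\log^{1+\alpha} n]$ oracle, which both constructs a min-isolating weight assignment $W$ for $(C,z)$ and uses $\mbox{WeightEval}(C,z,W,g_{\mathrm{out}})$ from Lemma~\ref{lemma:wcheck0} to decide whether a proof tree for $(C,z)$ exists (output gate reachable from a weight that is finite). So for any $\LogCFL$ language $\mcL$, on input $x$ the base catalytic machine computes the logspace reduction to a layered $\SAC^1$ circuit $(C_x,z_x)$ on the fly (reading bits of $(C_x,z_x)$ as needed), and pipes this into the procedure of Lemma~\ref{lemma:sac1}. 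Accept iff that procedure reports that a proof tree exists.

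The only minor subtlety is that we must ensure the resource bounds compose properly: (i) that the logspace reduction can be computed using only $\mcO(\log n)$ work space and no catalytic space, which is standard, (ii) that oracle queries made during Lemma~\ref{lemma:sac1} against the reduced instance $(C_x,z_x)$ remain within the declared oracle bound $\U\AuxPDA\text{-}\TISP[\poly(n),\log^{1+\alpha}n]$, which follows since the reduction is polynomially bounded and the oracle can re-invoke the logspace reduction on demand to access bits of $(C_x,z_x)$ without incurring extra space, and (iii) that the polynomial-time simultaneous bound is preserved throughout, which is immediate from the polynomial-time guarantees in Lemma~\ref{lemma:sac1}.

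I do not expect any serious obstacle here because all the heavy lifting, namely the catalytic isolation, the compress-or-compute loop, and the $\mbox{WeightEval}$ call with the auxiliary-pushdown unambiguous oracle, is already packaged in Lemma~\ref{lemma:sac1}; what remains is only the standard observation that $\LogCFL$-hardness of layered $\SAC^1$ proof-tree existence plus closure of $\C\TISP[\poly(n),\log n,\log^{2-\alpha}n]^{\U\AuxPDA\text{-}\TISP[\poly(n),\log^{1+\alpha}n]}$ under logspace reductions gives the stated containment for every $\alpha\in[0,0.5]$.
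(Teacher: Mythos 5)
Your proposal is correct and follows exactly the same route as the paper, which derives Theorem~\ref{thm:cir} in one line from Proposition~\ref{sachard} (the $\LogCFL$-hardness of layered $\SAC^1$ proof-tree existence) and Lemma~\ref{lemma:sac1} (the catalytic algorithm for that problem). You simply spell out the routine resource-composition checks that the paper leaves implicit.
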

\begin{proof}
	The proof follows from \Cref{sachard} and \Cref{lemma:sac1}.
\end{proof}
\begin{corollary}
For $\mcA\in\{ \CTISP[\poly(n),\log n,\log^{2}n],\CSPACE[\log n,\log^{2}n]\}$, we have $\LogCFL\subseteq \mcA\iff   \U\AuxPDA	\text{-}\TISP[\poly(n),\log n]\subseteq \mcA$.
\end{corollary}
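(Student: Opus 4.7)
The plan is to mirror the proof of \Cref{cor:nl}. The forward direction is immediate: if $\LogCFL \subseteq \mcA$, then since $\U\AuxPDA\text{-}\TISP[\poly(n),\log n] \subseteq \LogCFL$ (an unambiguous auxiliary-pushdown logspace machine is in particular a $\LogCFL$ machine), we get $\U\AuxPDA\text{-}\TISP[\poly(n),\log n] \subseteq \mcA$.

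For the backward direction, set $\alpha = 0$ in \Cref{thm:cir} to obtain
\[
\LogCFL \subseteq \C\TISP[\poly(n),\log n,\log^2 n]^{\U\AuxPDA\text{-}\TISP[\poly(n),\log n]}.
\]
Assuming the hypothesis $\U\AuxPDA\text{-}\TISP[\poly(n),\log n] \subseteq \mcA$, the goal is to collapse the oracle and place $\LogCFL$ inside $\mcA$ itself. As observed in the proof of \Cref{cor:nl}, the subtle point is that we do not know $\mcA^\mcA = \mcA$ in general, because simulating an oracle call would normally require holding the query string on the work tape, which we cannot afford. The key observation, however, is that every oracle query used in the proof of \Cref{lemma:sac1} is of a very restricted form: on input $(C,z)$, each query is determined by (i) the circuit and input on the input tape, (ii) the current hash-function sequence $\vec h_i$ together with a candidate hash $h$ (either the current block $\tau_k$ of the catalytic tape, or an $\mcO(\log n)$-bit string being enumerated), and (iii) a bit-vector $b$ of length $\mcO(\log n)$ on the work tape indicating which blocks of the catalytic tape currently encode good hash functions. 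In particular, the query data is always available on the input tape, catalytic tape, and work tape of the base machine; the base machine never needs to materialize the query on a separate query tape that would consume $\Omega(\log^2 n)$ space.

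Hence, given an $\mcA$-machine $M_\mcA$ for $\U\AuxPDA\text{-}\TISP[\poly(n),\log n]$, each oracle call inside the algorithm of \Cref{lemma:sac1} can be simulated in place by invoking $M_\mcA$, treating the base machine's (input, catalytic, work)-configuration as the input of $M_\mcA$, and giving $M_\mcA$ a fresh catalytic tape carved out of the base machine's catalytic tape (restored before the call returns, by the catalytic property of $M_\mcA$). The running time stays polynomial, the base workspace stays $\mcO(\log n)$, and the total catalytic space stays $\mcO(\log^2 n)$, so the resulting machine lies in $\mcA$. The same argument works uniformly for both choices of $\mcA$ in the statement, since in neither case did we use simultaneous-time in an essential way beyond what the oracle machine in \Cref{thm:cir} already provides.

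The main obstacle is precisely the one just addressed: ensuring that the oracle queries can be ``virtualized'' without blowing up the work tape. This is why the explicit bookkeeping in \Cref{alg:comp,alg:decomp} keeps the query ingredients ($\vec h_i$ and the candidate hash) on the catalytic tape and work tape rather than on a separate query tape; this design choice in the proof of \Cref{lemma:sac1} is what lets the collapse go through.
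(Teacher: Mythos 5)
Your proof is correct and follows essentially the same approach as the paper: set $\alpha=0$ in \Cref{thm:cir}, then observe that the oracle queries made in \Cref{lemma:sac1} are always reconstructible from the base machine's input and catalytic tapes, so under the hypothesis $\U\AuxPDA\text{-}\TISP[\poly(n),\log n]\subseteq\mcA$ the oracle calls can be simulated in-place (with a small additional catalytic block for the simulated machine) while keeping work space $\mcO(\log n)$ and catalytic space $\mcO(\log^2 n)$. The forward direction via $\U\AuxPDA\text{-}\TISP[\poly(n),\log n]\subseteq\LogCFL$ is also exactly what the paper leaves implicit.
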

\begin{proof}
	Setting $\alpha=0$ in \Cref{thm:cir}, we have 	$\LogCFL\subseteq \mcA^{\U\AuxPDA\text{-}\TISP[\poly(n),\log n]}$ for $\mcA\in\{ \CTISP[\poly(n),\log n,\log^{2}n],\CSPACE[\log n,\log^{2}n]\}$. Observe that if $\U\AuxPDA\text{-}\TISP[\poly(n),\log n]$ is contained in $\mcA$, then we can perform the oracle queries made in \Cref{lemma:sac1} in the base machine itself, since the queries are already present in the base machine's catalytic tape and input tape. Thus $\LogCFL\subseteq \mcA\iff   \U\AuxPDA	\text{-}\TISP[\poly(n),\log n]\subseteq \mcA$ follows. 
\end{proof}

\section{Concluding remarks}
We have shown new applications of the \emph{compress-or-random} pradigm in
catalytic computation. In particular, using the isolation lemma we obtain 
a general search to decision reduction that is computable in catalytic 
logspace, thereby showing several natural search problems are in 
catalytic logspace. With a different use of this paradigm, we obtain 
containments of several complexity classes in $\cltonp$, showing that
$\cltonp$ is closely related to $\ZPP^\NP$. Finally, we explore the
question of simulating $\NL$ using $\mcO(\log^2 n)$ catalytic space and
similar catalytic space upper bounds for $\LogCFL$.

We are left with many intriguing questions, such as:
\begin{enumerate}
\item Can Savitch be made purely catalytic i.e. can we prove:
$\NL\subseteq \CSPACE[\log n,\log^2 n]$? Notice that from \Cref{cor:nl}, it suffices to show that $\UL\subseteq \CSPACE[\log n,\log^2 n]$.
\item What is the relationship between $\cltonp$ and $\stwop$? They seem to have
similar lower and upper bounds.
	\item  For the containments in $\cltonp$ of the classes $\BPP$, $\MA$, 
and $\ZPP^{\NP[1]}$ shown in Section~\ref{sec:oracle} can we improve it to containment
in $\CL^{\NP}_{1\text{-round}}$?
\end{enumerate}

\section*{Acknowledgements}
We would like to thank Nathan Sheffield for interesting questions and
valuable comments on the previous version of this work that helped us
clarify some definitions and rectify some errors.

\bibliographystyle{alpha}
\bibliography{biblio}

\end{document}